\def\BibTeX{{\rm B\kern-.05em{\sc i\kern-.025em b}\kern-.08em
    T\kern-.1667em\lower.7ex\hbox{E}\kern-.125emX}}
\newcommand{\algName}{PACO}
\newcommand{\ouralg}{\algName{} algorithm}
\newcommand{\ouralgs}{\algName{} algorithms}
\newcommand{\ourAlg}[1]{\algName{} \proc{#1} algorithm}
\newcommand{\ourAlgS}[1]{\algName{} \proc{#1}'s algorithm}
\newcommand{\ourModel}{ideal distributed cache model}
\newcommand{\propName}{Perfect Strong Scaling}
\newcommand{\ourprop}{\propName{} Property}
\newcommand{\perfectspeedup}{perfect strong scaling}
\newcommand{\Obc}{Optimal balanced computation}
\newcommand{\obc}{optimal balanced computation}
\newcommand{\Occ}{Optimal balanced communication}
\newcommand{\occ}{optimal balanced communication}
\newcommand{\qsum}{Q^{\sum}_p}
\newcommand{\tsum}{T^{\sum}_p}
\newcommand{\qinf}{Q^{\max}_p}
\newcommand{\tinf}{T^{\max}_p}
\newcommand{\omg}{\omega_0}
\newcommand{\sr}{\gamma}
\renewcommand{\emph}[1]{\textbf{\textit{#1}}}
\begin{document}

\title{Balanced Partitioning of Several Cache-Oblivious Algorithms
\thanks{This research is supported in part by Shanghai Natural
Science Funding (No. 18ZR1403100), and National Science Foundation
of China under Grant No. 11690013, 71991471, U1811461}
}


\author{\IEEEauthorblockN{Yuan Tang
}
\IEEEauthorblockA{School of Computer Science, School of Software \\
Fudan University\\
Shanghai, P. R. China \\
yuantang@fudan.edu.cn}
\and
\IEEEauthorblockN{Weiguo Gao}
\IEEEauthorblockA{School of Mathematical Sciences, School of Data Science \\
Fudan University\\
Shanghai, P. R. China \\
wggao@fudan.edu.cn}
}

\maketitle

\begin{abstract}

Frigo et al. proposed an ideal cache model and a recursive 
technique to design sequential cache-efficient algorithms in a 
cache-oblivious fashion. 
Ballard et al. pointed out that it is a fundamental open
problem to extend the technique to an arbitrary architecture. 
Ballard et al. raised another open question on how to
parallelize Strassen's algorithm exactly and efficiently on an
arbitrary number of processors.

We propose a novel way of partitioning a
cache-oblivious algorithm to achieve perfect strong scaling
on an arbitrary number, even a prime number, of 
processors within a certain range in a shared-memory setting.
Our approach is Processor-Aware but Cache-Oblivious (PACO).
We demonstrate our approach on several important cache-oblivious
algorithms, including LCS, 1D, GAP, classic rectangular matrix
multiplication on a semiring, and Strassen's algorithm.
%
We discuss how to extend our approach to a 
distributed-memory architecture, or even a heterogeneous 
computing system. Hence, our work may provide a new perspective 
on the fundamental open problem of
extending the recursive cache-oblivious technique to an arbitrary 
architecture.
%
We provide an almost exact solution to
the open problem on parallelizing Strassen.
\punt{
Our approach may provide a new perspective on extending the 
recursive cache-oblivious technique to an arbitrary architecture. 
}
All our algorithms demonstrate better scalability or better 
overall parallel cache complexities than the best known algorithms.
%
%
Preliminary experiments justify our theoretical prediction that
the PACO algorithms can outperform 
significantly state-of-the-art Processor-Oblivious (PO) and 
Processor-Aware (PA) counterparts.
%
\end{abstract}

\punt{
\keywords{
cache-oblivious,
processor-aware,
processor-oblivious,
shared-memory architecture,
perfect strong scaling
}
}


\secput{intro}{Introduction}
%
%
\begin{table*}[!ht]
\centering
\begin{tabular}{ccc}
    \toprule
    Algo. & Time ($T_p$ or $\tinf$) & Overall Parallel Cache ($Q_p$ or $\qsum$) \\
    \midrule %
    PO LCS \cite{FrigoSt09, ColeRa12a} & $O(n^2/p + n^{\log_2 3})$ & $O(n^2/(LZ) + \sqrt{p n^{3.58}} + p n^{1.58})$ \\ 
    PA LCS \cite{ChowdhuryRa08} & $2 n^2/p + o(n^2/p)$ & $O(n^2/(LZ) + pn/L)$ \\
    \emph{\algName} LCS (\thmref{lcs}) & $n^2/p + o(n^2/p)$ & $O(\min\{n^2/(LZ) + (p n \log (pZ)) / L, (p n \log n) / L\})$ \\
    \midrule
    PO 1D \cite{BlellochGu18} & $O(n^2/p + n)$ & $O(n^2/(LZ) + (p n Z) / L)$ \\ 
    sublinear 1D \cite{GalilPa94} & $O(n^2/p + \sqrt{n} \log n)$ & $O(n^2/L + (p (\sqrt{n} \log n) Z) / L)$ \\
    \emph{\algName} 1D (\thmref{1D}) & $O(n^2/p)$ & $O(\min \{n^2/(LZ) + (p Z \log Z) / L, (\sqrt{p} n \log n) / L\})$ \\
    \midrule
    PO GAP \cite{BlellochGu18} & $O(n^3/p + n^{\log_2 3})$ & $O(\frac{n^3}{LZ} + (n^2 \cdot \min \{\frac{\log_2 n}{\sqrt{Z}}, \log_2 \sqrt{Z}\}) / L + \frac{p n^{\log_2 3} Z}{L})$ \\
    sublinear GAP \cite{GalilPa94} & $O(n^4/p + \sqrt{n} \log n)$ & $O(n^4/L + (p (\sqrt{n} \log n) Z) / L)$ \\
    \emph{\algName} GAP (\thmref{gap}) & $O(n^3/p)$ & $O(\min\{n^3/(LZ) + (n^2 \log Z) / L, (n^2 \log n) / L\})$ \\ 
    \midrule
    PO MM \cite{ColeRa12} & $O(n^3/p + \log^2 n)$ & $Q_1 + O((p \log p)^{1/3} \cdot n^2/L + p \log p)$ \\
    PA MM \cite{DemmelElFo13} & \multicolumn{2}{c}{same bounds as \algName{}, except $p$ can not have large prime factors}\\
    \emph{\algName} MM & $O(nmk/p + n + m + k)$ & $Q_1 + 
    O(\min\{pmk, \allowbreak \sqrt{pnmk^2}, \allowbreak p^{1/3}(nmk)^{2/3}\}/L)$ \\
    \midrule
    PO Strassen \cite{ColeRa12} & $O(n^{\omg}/p + \log^2 n)$ & $O(n^{\omg}/(LZ^{\omg / 2 - 1}) + (p \log p)^{1/3} \cdot n^2/L + p \log p)$ \\
    PA Strassen \cite{BallardDeHo12, LipshitzBaDe12} & \multicolumn{2}{c}{same bounds as \algName{}, except $p = m \cdot 7^k$, where $1 \leq m < 7$ and $1 \leq k$ are integers} \\
    \emph{\algName} Strassen (\thmref{strassen}) & $O(n^{\omg}/p)$ & $O(n^{\omg}/(LZ^{\omg / 2 - 1}) + n^2/(Lp^{2/\omg - 1}))$ \\
    \midrule
    deterministic PO Sorting \cite{ColeRa17, ColeRa12} & $O((n/p) \log n + \log n \log \log n)$ & $O((n/L) \log_{Z} n + p \frac{\log n}{\log (n/p)} \cdot L)$ \\
    \emph{\algName} Sort (\thmref{sort}) & $O((1 + \epsilon) (n / p) \cdot \log n)$ & $O((n/L) \log_{Z} (n/p))$ \\
    \bottomrule
\end{tabular}%
\caption{Main results of this paper, comparing with
    typical prior works.
    ``PO'' : processor-oblivious; ``PA'' :
    processor-aware; ``\algName'' : processor-aware but
    cache-oblivious.
    ``$Q_1$'' : optimal sequential cache complexity;
    ``$T_p, Q_p$'' : notations for PO algorithms;
    ``$\tinf, \qsum$'' : notations for PA and PACO algorithms;
    ``$Z$'' : cache size; ``$L$'' : cache line size;
    ``$\epsilon$'' : small constant;
    ``$\omg = \log_2 7$'';
    }
\label{tab:contri-table}
\end{table*}%


\begin{table}
\centering
\caption{Acronyms \& Notations}
\label{tab:symbols}
\begin{tabular}{cc}
\toprule
\multicolumn{2}{c}{General Acronyms}\\
\midrule
LCS & Longest Common Subsequence\\
MM  & Matrix Multiplication\\  
RWS & Randomized Work-Stealing\\
w.h.p. & with high probability\\
PO  & Processor-Oblivious \\
PA  & Processor-Aware \\
PACO & Processor-Aware Cache-Oblivious\\
\midrule
\multicolumn{2}{c}{Parameters}\\
\midrule
$n, m, k$ & Input sizes\\
$\epsilon_i$ & small constant\\
$p$ & Number of processors\\
$Z$ & cache size\\
$L$ & cache line size\\
\midrule
\multicolumn{2}{c}{Complexity Notations for PO Alg.}\\
\midrule
$T_1$ & total work\\ 
$T_\infty$ & work along critical path (time, span, depth)\\
$T_p$ & parallel running time on $p$ processors\\
$Q_1$ & sequential cache complexity\\
$Q_p$ & overall cache complexity over $p$ processors\\
\midrule
\multicolumn{2}{c}{Complexity Notations for PACO Alg.}\\
\midrule
$\tsum$ & overall work over $p$ processors\\
$\tinf$ & work along critical path\\
$\qsum$ & overall cache complexity over $p$ processors\\
$\qinf$ & cache complexity along critical path\\
\bottomrule
\end{tabular}
\vspace{-0.4cm}
\end{table}

\newcommand{\len}{1}
\tikzset{
    bgap/.pic={
\coordinate (O) at (0, 0, 0);
\coordinate (A) at (\len, 0, 0);
\coordinate (B) at (\len, 0, \len);
\coordinate (C) at (0, 0, \len);
\coordinate (D) at (0, \len, 0);
\coordinate (E) at (\len, \len, 0);
\coordinate (F) at (\len, \len, \len);
\coordinate (G) at (0, \len, \len);
\coordinate (H) at (0, 2 * \len, 0);

\draw (C) -- (G) -- (B) -- cycle;
\draw (E) -- (A) -- (B) -- cycle;
\draw (G) -- (E) -- (B) -- cycle;
\draw (H) -- (G) -- (E) -- cycle;
\draw (H) -- (B);
\draw [dotted] (D) -- (O) -- (C);
\draw [dotted] (O) -- (A);
\draw [dotted] (H) -- (D) -- (G);
\draw [dotted] (D) -- (E);
}}
\tikzset{
    lgap/.pic={
\coordinate (O) at (0, 0, 0);
\coordinate (A) at (\len, 0, 0);
\coordinate (B) at (\len, 0, \len);
\coordinate (C) at (0, 0, \len);
\coordinate (D) at (0, \len, 0);
\coordinate (E) at (\len, \len, 0);
\coordinate (F) at (\len, \len, \len);
\coordinate (G) at (0, \len, \len);
\coordinate (H) at (0, 2 * \len, 0);

\draw (C) -- (G) -- (B) -- cycle;
\draw [dotted] (E) -- (A) -- (B) -- cycle;
\draw (G) -- (E) -- (B) -- cycle;
\draw (H) -- (G) -- (E) -- cycle;
\draw (H) -- (B);
\draw [dotted] (D) -- (O) -- (C);
\draw [dotted] (O) -- (A);
\draw [dotted] (H) -- (D) -- (G);
\draw [dotted] (D) -- (E);
}}
\tikzset{
    rgap/.pic={
\coordinate (O) at (0, 0, 0);
\coordinate (A) at (\len, 0, 0);
\coordinate (B) at (\len, 0, \len);
\coordinate (C) at (0, 0, \len);
\coordinate (D) at (0, \len, 0);
\coordinate (E) at (\len, \len, 0);
\coordinate (F) at (\len, \len, \len);
\coordinate (G) at (0, \len, \len);
\coordinate (H) at (0, 2 * \len, 0);

\draw (H) -- (G) -- (B) -- (E) -- cycle;
\draw (H) -- (B);
\draw (G) -- (E) -- (A);
\draw (A) -- (B);
\draw [dotted] (B) -- (C);
\draw [dotted] (D) -- (E);
\draw [dotted] (O) -- (A);
\draw [dotted] (C) -- (O) -- (D) -- (G);
\draw [dotted] (D) -- (H);
}}
\tikzset{
    tgap/.pic={
\coordinate (O) at (0, 0, 0);
\coordinate (A) at (\len, 0, 0);
\coordinate (B) at (\len, 0, \len);
\coordinate (C) at (0, 0, \len);
\coordinate (D) at (0, \len, 0);
\coordinate (E) at (\len, \len, 0);
\coordinate (F) at (\len, \len, \len);
\coordinate (G) at (0, \len, \len);
\coordinate (H) at (0, 2 * \len, 0);

\draw (H) -- (G) -- (B) -- (E) -- cycle;
\draw (G) -- (E);
\draw (H) -- (B);
\draw [dotted] (O) -- (C) -- (G) -- (D) -- cycle;
\draw [dotted] (D) -- (E) -- (A) -- (O);
\draw [dotted] (H) -- (O);
\draw [dotted] (C) -- (B) -- (A);
}}
\tikzset{
    fbox/.pic={
\coordinate (O) at (0, 0, 0);
\coordinate (A) at (\len, 0, 0);
\coordinate (B) at (\len, 0, \len);
\coordinate (C) at (0, 0, \len);
\coordinate (D) at (0, \len, 0);
\coordinate (E) at (\len, \len, 0);
\coordinate (F) at (\len, \len, \len);
\coordinate (G) at (0, \len, \len);

\draw (C) -- (B) -- (F) -- (G) -- cycle;
\draw (F) -- (E) -- (A) -- (B);
\draw (G) -- (D) -- (E);
\draw [dotted] (D) -- (O) -- (C);
\draw [dotted] (O) -- (A);
}}
\tikzset{
    bbox/.pic={
\coordinate (O) at (0, 0, 0);
\coordinate (A) at (\len, 0, 0);
\coordinate (B) at (\len, 0, \len);
\coordinate (C) at (0, 0, \len);
\coordinate (D) at (0, \len, 0);
\coordinate (E) at (\len, \len, 0);
\coordinate (F) at (\len, \len, \len);
\coordinate (G) at (0, \len, \len);

\draw [dotted] (C) -- (B) -- (F) -- (G) -- cycle;
\draw [dotted] (F) -- (E) -- (A) -- (B);
\draw [dotted] (G) -- (D) -- (O) -- (C);
\draw [dotted] (D) -- (E);
\draw [dotted] (O) -- (A);
}}
\tikzset{
    lbox/.pic={
\coordinate (O) at (0, 0, 0);
\coordinate (A) at (\len, 0, 0);
\coordinate (B) at (\len, 0, \len);
\coordinate (C) at (0, 0, \len);
\coordinate (D) at (0, \len, 0);
\coordinate (E) at (\len, \len, 0);
\coordinate (F) at (\len, \len, \len);
\coordinate (G) at (0, \len, \len);

\draw (C) -- (B) -- (F) -- (G) -- cycle;
\draw [dotted] (F) -- (E) -- (A) -- (B);
\draw [dotted] (G) -- (D) -- (O) -- (C);
\draw [dotted] (D) -- (E);
\draw [dotted] (O) -- (A);
}}
\tikzset{
    rbox/.pic={
\coordinate (O) at (0, 0, 0);
\coordinate (A) at (\len, 0, 0);
\coordinate (B) at (\len, 0, \len);
\coordinate (C) at (0, 0, \len);
\coordinate (D) at (0, \len, 0);
\coordinate (E) at (\len, \len, 0);
\coordinate (F) at (\len, \len, \len);
\coordinate (G) at (0, \len, \len);

\draw (F) -- (E) -- (A) -- (B) -- cycle;
\draw [dotted] (D) -- (E);
\draw [dotted] (O) -- (A);
\draw [dotted] (B) -- (C);
\draw [dotted] (F) -- (G);
\draw [dotted] (O) -- (C);
\draw [dotted] (O) -- (D);
}}

Frigo et al. proposed an ideal cache model and a recursive
technique to design sequential cache-efficient 
algorithm on a hierarchical architecture of caches in a
cache-oblivious fashion \cite{FrigoLePr12}. 
That is, an algorithm does not specify any 
parameters on cache architecture such as number of cache levels, 
cache size of each level, or block transfer sizes between each 
pair of consecutive levels, but still can attain asymptotically 
optimal cache complexity on all levels of cache.
Ballard et al. (Sect. $6.2$ of \cite{BallardDeHo13}) pointed out 
that it is a fundamental open problem to extend the technique to 
an arbitrary architecture.

In the literature, there are two classes of extension. One is 
processor-oblivious (PO) and the other is processor-aware (PA).
A PO approach does not use the knowledge of processor number,
cache architecture \cite{ColeRa17, Chowdhury07, BlellochChGi08, BlellochGiSi10, BlellochFiGi11, ChowdhuryRaSi13, ColeRa11, ChowdhurySiBl10, BilardiPiPu16}, or network architecture \cite{BilardiPiPu16}.
An algorithm just exploits maximal parallelism and 
bounds its sequential cache complexity to be optimal, then 
relies on a runtime scheduler or folding
mechanism \cite{BilardiPiPu16} to yield a provably efficient 
solution
on either a shared-memory or a distributed-memory architecture.
The main benefits of the PO approach are easy-of-programming,
simple and scalable to
an arbitrary number of processors within a certain range.
However, 
Frigo and Strumpen \cite{FrigoSt09} pointed out that the 
communication complexity (cache miss in a shared-memory setting,
and message bandwidth and latency in a distributed-memory setting)
of a PO algorithm may not be as good as its PA counterpart.
Though Blelloch et al. \cite{BlellochGiSi10} show that if a PO
algorithm has a poly-logarithmic depth, i.e. low-depth, it will 
have low cache complexity on a shared-memory architecture, we can 
see from \tabref{contri-table} that a PA counterpart can still be 
better.
\punt{
For example, though the deterministic PO (actually 
resource-oblivious) sorting \cite{ColeRa17}
has only an $O(\log n \log \log n)$ depth, 
where $n$ is the problem size, it incurs at least as many
cache misses as the best sequential algorithm, while 
our \ourAlg{Sort}{} can incur less.
}

On the other hand, a PA approach \cite{ChowdhuryRa08, BallardDeHo12, SolomonikDe11, BallardDeHo11, DemmelElFo13} utilizes the 
knowledge of processor number, sometimes even the knowledge of
cache / memory architecture,
to provide a strong scaling \cite{BallardDeHo12a, BallardDeHo12b} 
algorithm in terms of both computation
and communication. However, classic PA algorithms may not
utilize all processors effectively unless
the processor number matches well the structure of algorithm.
For example, a straightforward implementation of the
Communication-Avoiding Parallel Strassen (CAPS) algorithm by
Ballard et al. \cite{BallardDeHo12} requires processor number
$p$ to be an exact power of $7$.
\punt{
Though an implementation can
simply ignore no more than $6/7$ of $p$ and incurs
only a constant factor more overheads. In practice, constant
factors matter, especially for such a significant fraction.
}
Lipshitz et al.  \cite{LipshitzBaDe12} later improved the 
required processor number to a multiple of $7$ with no large prime
factors, i.e. $p = m \cdot 7^x$, where $1 \leq m < 7$ and
$1 \leq x$ are integer numbers, by a hybrid of Strassen and 
classic matrix multiplication (MM) algorithms.
This hybrid algorithm may still lose up to $1/7$ of
the available processors, which can nullify the performance
advantage of Strassen in practice.
So Ballard et al. (Sect. $6.5$ of \cite{BallardDeHo12}) raised
an open question if
it is possible to run Strassen's algorithm concurrently on an
arbitrary number of processors, while still attaining the
computation lower bound exactly and attaining the communication
lower bound up to a constant factor.

\punt{
For example, the PA LCS algorithm by Chowdhury and
Ramachandran \cite{ChowdhuryRa08} makes a $p$-way 
divide-and-conquer at the top level of recursion then switches
to the sequential $2$-way divide-and-conquer afterwards. 
This algorithm does not fully utilize all processors all the time,
hence its critical-path length is a factor of $2$ larger than
ours (\tabref{contri-table}). In practice, constant matters as 
we can see from the experimental results in \secref{expr}.
}

\paragrf{Contributions (\tabref{contri-table}): }
We propose a novel way of partitioning a 
cache-oblivious algorithm to achieve perfect strong scaling
in a shared-memory setting based on a pruned BFS
traversal of the algorithm's divide-and-conquer tree.
Our approach uses processor number $p$, but no knowledge on
cache architecture, hence is Processor-Aware but
Cache-Oblivious (PACO). Our PACO approach does not assume any
special property of $p$, e.g. factorizable into two or three
roughly equal numbers or does not contain large prime factors,
etc. so that it works for an arbitrary $p$ within a certain range.
We demonstrate our approach on several important 
cache-oblivious algorithms, including longest common
subsequence (LCS) (\secref{lcs}), which is Dynamic Programming
(DP) with constant dependency, 1D problem (\secref{1D}) and
GAP problem (\secref{gap}), both of which are DP with 
more-than-constant dependencies, classic rectangular Matrix 
Multiplication on a semiring and Strassen's algorithm 
(\secref{strassen}), as well as comparison based sorting.
In particular, our \ourAlgS{Strassen}{} attains both the
computation and communication lower bounds on an arbitrary
number of processors in a shared-memory setting. If translated
to a distributed-memory setting, our
\ourAlg{Strassen-Const-Pieces}{} attains the computation
lower bound up to an arbitrarily small constant factor,
attains the bandwidth lower bound up to a constant factor,
and has an $O(\log p)$ latency bound.
We also conjecture that this latency bound is tight up
to a constant factor.
Hence we provide an almost exact solution to the open
problem \cite{BallardDeHo12} on parallelizing Strassen.

Compared with classic PA approaches, our algorithms achieve 
perfect strong scaling on an arbitrary number, even a prime 
number, of processors within a certain range. So we argue that 
our approach is as scalable as classic PO approaches.
We discuss how to possibly extend our approach to a
distributed-memory setting, or even a heterogeneous
computing system. Hence, our work may provide a new
perspective on the fundamental open problem \cite{BallardDeHo13}
on extending the recursive cache-oblivious \cite{FrigoLePr12}
technique to an arbitrary architecture.
Compared with classic PO approaches, our algorithms usually
attain a better communication complexity, no matter 
the best PO counterpart has a poly-logarithmic (low-depth) 
\cite{BlellochGiSi10} or super-linear \cite{CormenLeRi09, ChowdhuryRa06, Leiserson06}
critical-path length.
Our work may not only initiate new ideas on 
designing provably efficient runtime scheduler, but also
provide a new perspective on the fundamental
open problem of extending a sequential cache-oblivious algorithm
to an arbitrary architecture.
%
Preliminary experiments show that 
our new algorithms outperform state-of-the-art
PO and PA counterparts significantly in practice.

\secput{model}{Models}

%
We view a parallel computation as a Directed Acyclic 
Graph (DAG). Each vertex stands for a piece of computation 
with no parallel construct and each directed edge
represents some data dependency between a pair
of vertices.
For simplicity, we count each arithmetic operation such as
multiplication, addition, and comparison uniformly as an
$O(1)$ operation.
Our DAG considers only data dependency because any extra 
control dependency is artificial dependency, which can only 
hurt potential parallelism \cite{TangYoKa15, DinhSiTa16}.
Our computation DAG is slightly different from the CDAG of
\cite{BallardDeHo13}. In a CDAG, each vertex stands for
an input / intermediate / output argument, and each edge stands 
for a direct dependency. The difference is due to that CDAG 
counts the number of edges to bound communication cost, while 
our DAG calculates the computation and communication complexity 
of each task, i.e. a subset of vertices and edges of DAG,
independently. 
%

\begin{figure}
\centering
\begin{tikzpicture}[scale=1, transform shape]
\path (0, 2) node [draw, circle, thick] (p1) {$P_1$}
    -- (0, 1) node [draw, thick] (c1) {$M$};
\path (1, 2) node [draw, circle, thick] (p2) {$P_2$}
    -- (1, 1) node [draw, thick] (c2) {$M$};
\path (3, 2) node [draw, circle, thick] (pp) {$P_p$}
    -- (3, 1) node [draw, thick] (cp) {$M$};
\draw [thick] (p1) -- (c1) -- (0, 0) node[pos=.5, left]{$B$};
\draw [thick] (p2) -- (c2) -- (1, 0) node[pos=.5, left]{$B$};
\draw [thick] (pp) -- (cp) -- (3, 0) node[pos=.5, left]{$B$};
\draw[very thick] (-1, 0) -- (4, 0) node[pos=.9, above]{$\infty$};

\foreach \h in {1}
\foreach \s in {.4}
\foreach \r in {0.05}{
\fill[black] (1.6, \h) circle (\r) 
    -- (1.6+\s, \h) circle (\r)
    -- (1.6+2*\s, \h) circle (\r);
};
\end{tikzpicture}
\caption{\ourModel}
\label{fig:p-distributed-ideal-cache-model}
\end{figure}

We adopt the \ourModel{}
(\figref{p-distributed-ideal-cache-model}) proposed
by Frigo and Strumpen \cite{FrigoSt09} as our machine model.
It is a two-level memory model. There are $p$ 
dedicated processors with identical computing power, each of 
which is equipped with a private ideal cache. An ideal cache is
fully associative and is managed by an omniscient, i.e. off-line 
optimal, cache replacement policy that replaces the cache line
whose next access is the farthest in future \cite{Belady66}. 
The papers of 
\cite{FrigoLePr12, SleatorTa85} justify the ideal cache assumption.
Each private cache is of size $Z$. All caches are connected by an 
arbitrarily large shared memory. Private caches exchange data
with shared memory atomically in cache line of size $L$.
A processor can only access data in its private cache. If a value
is not present in the cache, the processor incurs a cache miss to
bring the data from shared memory to its cache.
We do not consider cache-coherence protocols because all 
algorithms considered in this paper do not have data race
\footnote{Data race means that there are at least two 
processors accessing the same location of shared memory
simultaneously, at least one of which are ``write''.}, nor do we
consider false sharing.
All private caches are \emph{non-interfering},
i.e. the number of cache misses incurred by one processor can be
analyzed independently of the actions of other processors in the
system. This assumption is valid under the DAG-consistent memory
model maintained by the Backer protocol \cite{BlumofeFrJo96} or 
the HSMS model \cite{AcarBlBl00}.

\secput{alg-principle}{\ouralg{} Design and Analysis}

\paragrf{General \ouralg{}: }
Based on the observation that the maximal speedup 
attainable on a $p$-processor system is usually $p$-fold so
that excessive parallelism may not be necessary, we have a 
general \ouralg{} as follows.
The algorithm traverses a $c$-way divide-and-conquer tree of a
cache-oblivious algorithm in a pruned breadth-first (BFS) fashion,
where $c$ is a small algorithm-specific constant.
That is, it unfolds the tree depth by depth in a breadth-first
fashion. As soon as it figures out that some depth
has equal or more than $p$ nodes that have all inputs ready
and have no data dependency among each other, it cuts off 
(prunes) up to $(c-1) \cdot p$ of them and assigns to $p$ 
processors in a round-robin fashion.
The rest of nodes will stay in the tree and go to more rounds
of ``\defn{pruned BFS}'' traversal.
This procedure repeats until either all nodes are pruned,
i.e. assigned to processors, or all nodes are of base-case 
(constant) size, in 
which case they will be assigned to all processors in a 
round-robin fashion.
\figref{pruned-bfs-pic} shows a pruned BFS traversal of a binary
tree, assuming $p = 3$. Labels indicate assigned (pruned) order.

\begin{figure}
\centering
\begin{tikzpicture} [scale=1, transform shape ]
\tikzset{level distance=25pt}
\renewcommand{\len}{1}
\tikzset{
cube/.pic={
\path (0, 0, 0) coordinate (O) -- (\len, 0, 0) coordinate (A) 
    -- (\len, 0, \len) coordinate (B) -- (0, 0, \len) coordinate (C)
    -- (0, \len, 0) coordinate (D) -- (\len, \len, 0) coordinate (E) 
    -- (\len, \len, \len) coordinate (F) -- (0, \len, \len) coordinate (G);

\draw (C) -- (B) -- (F) -- (G) -- cycle;
\draw (F) -- (E) -- (A) -- (B);
\draw (G) -- (D) -- (E);
\draw [dotted] (D) -- (O) -- (C);
\draw [dotted] (O) -- (A);
} }

\Tree 
[.\node[pic-tree node](r){}; 
    [.\node[pic-tree node](r1){}; 
        [.\node[pic-tree node](r1-1){}; \node[id-tree node]{$1$}; ] 
        [.\node[pic-tree node](r1-2){}; \node[id-tree node]{$1$}; ] 
    ]
    [.\node[pic-tree node](r-2){}; 
        [.\node[pic-tree node](r-2-1){}; \node[id-tree node]{$1$}; ] 
        [.\node[pic-tree node](r-2-2){};
            [.\node[pic-tree node](r-2-2-1){}; 
                [.\node[pic-tree node](r-2-2-1-1){}; \node[id-tree node]{$2$}; ] 
                [.\node[pic-tree node](r-2-2-1-2){}; \node[id-tree node]{$2$}; ] 
            ] 
            [.\node[pic-tree node](r-2-2-2){}; 
                [.\node[pic-tree node](r-2-2-2-1){}; \node[id-tree node]{$2$}; ] 
                [.\node[pic-tree node](r-2-2-2-2){}; 
                    [.{\Large $\cdots$} ]
                    [.{\Large $\cdots$} ]
                ]
            ] 
        ]
    ] 
]
\tikzset{cube pic/.style={transform shape, scale=#1, xshift=-5pt, yshift=-2pt}}
\path (r) pic [cube pic=0.3] {cube}
    -- (r1) pic [cube pic=0.3] {cube}
    -- (r1-1) pic [cube pic=0.3] {cube}
    -- (r1-2) pic [cube pic=0.3] {cube}
    -- (r-2) pic [cube pic=0.3] {cube}
    -- (r-2-1) pic [cube pic=0.3] {cube}
    -- (r-2-2) pic [cube pic=0.3] {cube}
    -- (r-2-2-1) pic [cube pic=0.3] {cube}
    -- (r-2-2-1-1) pic [cube pic=0.3] {cube}
    -- (r-2-2-1-2) pic [cube pic=0.3] {cube}
    -- (r-2-2-2) pic [cube pic=0.3] {cube}
    -- (r-2-2-2-1) pic [cube pic=0.3] {cube}
    -- (r-2-2-2-2) pic [cube pic=0.3] {cube};

\end{tikzpicture}
\caption{Pruned BFS Traversal of a Binary Tree ($p = 3$); Labels indicate assigned (pruned) order.}
\label{fig:pruned-bfs-pic}
\end{figure}

\paragrf{Invariant: }
Assuming that each parent node is at least a constant factor 
larger than each of its child node in terms of computation
and communication overheads (volume, surface area, or perimeter
in geometry), we can see that the set of nodes assigned 
to each processor forms an (almost) geometrically decreasing 
sequence and that the top-level node(s) dominate.

\paragrf{Comparison with classic PO approaches: }
Classic PO approach usually recursively divides each and 
every node to base-case size to increase the ``slackness'' 
of an algorithm so that it has better processor utilization 
for a wider range of processor counts.
This more slackness increases the potential deviations 
from its sequential execution order \cite{AcarBlBl00, SpoonhowerBlGi09},
hence usually incurs more communication and synchronization 
overheads than a PA counterpart \cite{FrigoSt09}.

\paragrf{Comparison with classic PA approaches: }
Classic PA approach, on the other hand, may not fully
utilize all processors from beginning to end unless
the processor number matches well the structure of algorithm.
For example, the CAPS algorithm for Strassen  
\cite{BallardDeHo12, LipshitzBaDe12} requires processor number
$p$ to be an exact power of $7$ or at least be a multiple of
$7$ with no large prime factors.

\subsecput{alg-counting}{Complexity counting: }
We count the complexity bounds of a \ouralg{} as follows.
If there is an independent partitioning procedure ahead of
real execution as in the case of LCS (\secref{lcs}), we will 
count them separately.
We assume that any processor starts a task, i.e. a set of
nodes of the divide-and-conquer tree, with
an empty cache and flushes all data to lower-level memory 
when task finishes.
We use notation $\qsum$ to denote all data movements
(cache misses) between upper-level private caches and lower-level 
shared memory summed up over all $p$ processors in cache line 
of size $L$, and notation $\qinf$ to denote the
maximal data movements on any single processor, or along 
a critical path.
Similarly, we have notations $\tsum$ and $\tinf$ for
the amount of computation summed up over all $p$ processors and
the maximal on any single processor, or along a critical path,
respectively.

\paragrf{\ourprop{}: }
We give out a more formal and more strict definition of  
``\ourprop{}'', which was initiated by Ballard et al. 
\cite{BallardDeHo12a, BallardDeHo12b}, as follows.
 
\begin{enumerate}
    \item \defn{\Obc: }
        Firstly, the overall computation complexity ($\tsum$)
        should be asymptotically optimal or match that of
        the best sequential algorithm of the same problem.
        Secondly, the computation complexity on any
        single processor ($\tinf$) should be $O((1/p) \tsum)$.

        We make one more restriction that the amount
        of computation assigned to different processors can 
        differ by no 
        more than an asymptotically smaller term, rather than
        a larger-than-$1$ multiplicative factor.
        
        By the restriction, we make any imbalance of workloads, 
        if any, among different processors
        diminishing when increasing problem size. 
    \item \defn{\Occ: }
        Firstly, the cache misses summed up over all
        processors ($\qsum$) throughout execution should be 
        asymptotically optimal or match that of the best 
        sequential algorithm of the same problem.
        Secondly, the maximal cache misses on any single 
        processor ($\qinf$) should be $O((1/p) \qsum)$.
\end{enumerate}

\paragrf{Discussions: }
The initial notion of ``\perfectspeedup{}'' in 
\cite{BallardDeHo12a, BallardDeHo12b} requires that an algorithm
attains running time on $p$ processors which is linear in $1/p$,
including all communication costs.
Our definition is more formal and more strict in three senses: 
Firstly, it requires that the
overall computation and communication overheads of
algorithm be asymptotically optimal or match that of the best 
sequential algorithm of the same problem; Secondly, it requires 
that any load imbalance among different processors, if any, can 
not be more than an asymptotically smaller term, rather than a 
larger-than-$1$ multiplicative
factor; Thirdly, we require that the property be valid for 
an arbitrary number, even a prime number, of processors within a
certain range.
For example, Galil and Park \cite{GalilPa94} designed a sublinear 
$\tinf = O(\sqrt{n} \log n)$ time (critical-path length) and 
$\tsum = O(n^4)$ overall work algorithm for the GAP 
problem \cite{GalilGi89}.
Due to the sublinear time, the algorithm is \perfectspeedup{} 
according to \cite{BallardDeHo12a, BallardDeHo12b}. Due to the 
asymptotically more work than the optimal
\cite{Chowdhury07, ChowdhuryRa06}, it is not according
to our definition.

For computation and communication overheads, we count both 
overall and along a critical path 
to compare with both PO and PA counterparts.
By convention, a PO algorithm
usually counts its sequential communication complexity and 
critical-path length, then relies on a runtime
scheduler, e.g. \cite{AcarBlBl00, BlumofeLe99}, or a folding 
mechanism, e.g. \cite{BilardiPiPu16}, to yield an overall
parallel computation and communication complexity; On the 
other hand, a PA algorithm, e.g.
\cite{BallardDeHo12, LipshitzBaDe12, SolomonikDe11, BallardDeHo11, DemmelElFo13}, usually calculates overheads along a critical path.

\subsecput{lcs}{\ourAlg{LCS}}
Given two sequences $S = \langle s_1, s_2, \cdots, s_m \rangle$
and $T = \langle t_1, t_2, \cdots, t_n \rangle$, the LCS
problem asks to compute the length of longest common subsequence
\footnote{The subsequence does not have to be contiguous in the
input sequences.} of the two inputs
by the recurrence of \eqref{lcs} \cite{CormenLeRi09}
\footnote{A similar recurrence
applies to the ``pairwise sequence alignment with affine gap
cost'' problem \cite{Gotoh82}. A more complicated case may
further ask to compute the subsequence besides
the length.}:
%
\begin{align}
    & X_{i, j} = 
     \left\{
     \begin{array}{lr}
     0 &  \text{if $i = 0 \vee j = 0$}\\
     X_{i-1, j-1} + 1 &  \text{if $i, j > 0 \wedge s_i = t_j$}\\
     \max\{X_{i, j-1}, X_{i-1, j}\} &  \text{if $i, j > 0 \wedge s_i \neq t_j$}
     \end{array}
     \right.
    \label{eq:lcs}
\end{align}
For simplicity, we assume that the two input sequences are of the
same length, i.e. $n = m$.

\begin{figure}
\hspace*{-1.5cm}
\begin{minipage}[b]{0.43 \linewidth}
\centering
\begin{tikzpicture}[scale=.9, transform shape]
\draw (0, 0) rectangle (4, 4);
\draw [thick, ->] (0.25, 4.25) -- (1, 4.25) node[fill=white, pos=.5]{$j$};
\draw [thick, ->] (-0.25, 3.75) -- (-0.25, 3) node[fill=white, rotate=90, pos=.5]{$i$};
\draw [barrow] (-0.5, 1.5) -- (1.5, -0.5) node [pos=.5, below, sloped]{time};
\draw [thick] (2.5, -0.5) -- (4.5, 1.5) node [pos=.5, above, sloped]{Anti-Diagonal};

\draw (0, 2) -- (4, 2);
\draw (2, 0) -- (2, 4);

\foreach \y in {2, 1}
\foreach \x in {1,...,\y}{
    \draw [thin] (2*\x-2, 4-2*\y + 2*\x-1) -- (2*\x, 4-2*\y + 2*\x-1); 
    \draw [thin] (2*\x-1, 4-2*\y + 2*\x-2) -- (2*\x-1, 4-2*\y + 2*\x);
};
\foreach \x in {0.5, 1.5, 2.5, 3.5} {
    \node at (\x, \x) {$1$};
};

\foreach \y in {3, 2, 1}
\foreach \x in {1,...,\y}{
    \draw [dashed] (\x-1, 3-\y + \x+0.5) -- (\x, 3-\y + \x+0.5);
    \draw [dashed] (\x-0.5, 3-\y + \x) -- (\x-0.5, 3-\y + \x+1);
};

\foreach \y in {1, 2, 3}
\foreach \x in {1, 2, 3, 4}{
    \node at (0.5*\x - 0.25, 0.5*\y + 0.5*\x + 0.25) {$2$};
};

\foreach \s in {0.25}
\foreach \y in {3, 2, 1}
\foreach \x in {1,...,\y}{
    \draw [dashed] (2*\s*\x-2*\s, 2.5-\s+2*\s*\x + 6*\s-2*\s*\y) -- (2*\s*\x, 2.5-\s+2*\s*\x + 6*\s-2*\s*\y);
    \draw [dashed] (2*\s*\x-\s, 2.5+2*\s*\x-2*\s + 6*\s-2*\s*\y) -- (2*\s*\x-\s, 2.5+2*\s*\x + 6*\s-2*\s*\y);
};

\foreach \s in {0.125}
\foreach \y in {1, 2, 3}
\foreach \x in {1, 2, 3, 4}{
    \node at (2*\s*\x-\s, 2.5+2*\s*\x+2*\s*\y-3*\s) {$3$};
};

\draw [dashed] (0.5, 1.25) -- (1, 1.25);
\draw [dashed] (0.75, 1) -- (0.75, 1.5);

\draw [dashed] (1.5, 2.25) -- (2, 2.25);
\draw [dashed] (1.75, 2) -- (1.75, 2.5);

\draw [dashed] (2, 3.25) -- (3, 3.25);
\draw [dashed] (2, 3.75) -- (3, 3.75);
\draw [dashed] (2.25, 3) -- (2.25, 4);
\draw [dashed] (2.75, 3) -- (2.75, 4);

\foreach \s in {0.125}{
    \path (0.5+\s, 1+\s) node {$3$} -- (0.5+\s, 1.5-\s) node {$3$} -- (1-\s, 1.5-\s) node {$3$};
    \path (1.5+\s, 2+\s) node {$3$} -- (1.5+\s, 2.5-\s) node {$3$} -- (2-\s, 2.5-\s) node {$3$};
    \path (2+\s, 3+\s) node {$3$} -- (2.25+\s, 3+\s) node {$3$} 
        -- (2.25+\s, 3.25+\s) node {$3$} -- (2.5+\s, 3.25+\s) node {$3$}
        -- (2.5+\s, 3.5+\s) node {$3$} -- (2.75+\s, 4-\s) node {$3$};
};
\end{tikzpicture}
\caption{\algName{} LCS ($p=4$)}
\label{fig:paco-lcs-pic}
\vfil
\begin{tikzpicture}[scale = 2]
\draw [thick, ->] (0.75, 2.2) -- (1.25, 2.2) node[fill=white, pos=.5]{$x$};
\draw [thick, ->] (0.1, 1.25) -- (0.1, 0.75) node[fill=white, rotate=90, pos=.5]{$y$};
\path[draw] (0, 2) -- (2, 2) -- (2, 0) -- cycle;
\draw (1, 1) rectangle (2, 2);
\filldraw [fill=gray!60] (0, 2 - 0.01) rectangle (2, 2 + 0.01);
\node at (0.5, 2.1) {$(0, 0)$};
\node at (1.5, 2.1) {$(0, 1)$};
\node at (1.3, 0.45) {$(1, 1)$};
\draw [dashed] (1.33, 1) -- (1.33, 2);
\draw [dashed] (1.33, 1.5) -- (2, 1.5);
\draw (0.5, 1.5) rectangle (1, 2);
\draw (1.5, 0.5) rectangle (2, 1);
\draw [dashed] (0.667, 1.5) -- (0.667, 2);
\draw [dashed] (0.667, 1.75) -- (1, 1.75);
\foreach \h in {1.8}
\foreach \s in {.05}
\foreach \r in {0.01}
\foreach \x in {0.3} {
\fill[black] (\x, \h) circle (\r)
    -- (\x+\s, \h) circle (\r)
    -- (\x+2*\s, \h) circle (\r);
};
\foreach \h in {1.3}
\foreach \s in {.05}
\foreach \r in {0.01}
\foreach \x in {0.8} {
\fill[black] (\x, \h) circle (\r)
    -- (\x+\s, \h) circle (\r)
    -- (\x+2*\s, \h) circle (\r);
};

\draw [dashed] (1.667, 0.5) -- (1.667, 1);
\draw [dashed] (1.667, 0.75) -- (2, 0.75);
\foreach \h in {0.8}
\foreach \s in {.05}
\foreach \r in {0.01}
\foreach \x in {1.3} {
\fill[black] (\x, \h) circle (\r)
    -- (\x+\s, \h) circle (\r)
    -- (\x+2*\s, \h) circle (\r);
};
\foreach \h in {0.3}
\foreach \s in {.05}
\foreach \r in {0.01}
\foreach \x in {1.8} {
\fill[black] (\x, \h) circle (\r)
    -- (\x+\s, \h) circle (\r)
    -- (\x+2*\s, \h) circle (\r);
};

\end{tikzpicture}
\caption{\algName{} 1D ($p=3$)}
\label{fig:paco-1D-pic}
\vfil
%
\begin{tikzpicture}[scale = 1]

\node[above, yshift=-0.8mm] at (0, 0, 0) {$(n, n)$};
\node[above, yshift=-0.8mm] at (2, 0, 0) {$(n, 0)$};
\node[below] at (0, 0, 2) {$(0, n)$};
\node[below] at (2, 0, 2) {$(0, 0)$};

\begin{scope}[xshift=50, yshift=90]
\draw[->] (0, 0, 0) -- (-0.5, 0, 0) node [anchor=east] {$x$};
\draw[->] (0, 0, 0) -- (0, 0, -0.5) node [anchor=south west]{$y$};
\draw[->] (0, 0, 0) -- (0, 0.5, 0) node [anchor=south]{$z$};
\node[below] at (0, 0, 0) {$o$};
\end{scope}

\begin{scope}[]
\path (0, 0, 0) pic {bbox} (0, 1, 0) pic {bbox} (0, 2, 0) pic {tgap};
\path (1, 0, 0) pic {rbox} (1, 1, 0) pic {rgap};
\path (0, 0, 1) pic {lbox} (0, 1, 1) pic {lgap};
\path (1, 0, 1) pic {bgap};
\end{scope}

\begin{scope}[xshift=55]
\path (1, 0, 1) pic {fbox} (1, 1.5, 1) pic {bgap};
\end{scope}

\node [right] at (0, 0, 0.5) {$11$};
\node at (1.5, 0, 0.5) {$10$};
\node [right] at (0, 0, 1.5) {$01$};
\node at (1.5, 0, 1.5) {$00$};

\end{tikzpicture}
\caption{The work of GAP}
\label{fig:paco-gap-pic}
\end{minipage}
\hfill
%
\begin{minipage}[b]{0.55 \linewidth}
\centering
\input{paco-1D-code}
\caption{\algName{} 1D code. $\proc{co-1D}_{\Box}$ is sequential,
and $\proc{cop-1D}_{\Box}$ is parallel.}
\label{fig:paco-1D-code}
\end{minipage}
\end{figure}

\begin{lemma} [\cite{ChowdhuryRa06}]
    There is a sequential algorithm \proc{co-lcs} that 
    computes the LCS recurrences of \eqref{lcs} in optimal 
    $O(n^2)$ work, using no temporary space, and 
    $O(n^2/(LZ) + n/L)$ cache misses in a cache-oblivious 
    fashion.
\label{lem:co-lcs}
\end{lemma}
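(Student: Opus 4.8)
The plan is to realize \proc{co-lcs} as a four-way recursive divide-and-conquer over a square sub-grid of the array of \eqref{lcs}. A call on a region of side $> 1$ receives its \emph{input boundary} (the entries immediately above its top row and immediately to the left of its left column), splits the region into quadrants $Q_{11}, Q_{12}, Q_{21}, Q_{22}$, and recurses in an order consistent with the dependencies of \eqref{lcs} --- $Q_{11}$ first, then $Q_{12}$ and $Q_{21}$ in either order, then $Q_{22}$ --- feeding each quadrant the part of its input boundary that comes either from the call's own boundary or from the \emph{output boundary} (bottom row and right column) of an earlier sibling. The base case is a single cell, evaluated directly from its left, top, and top-left predecessors via \eqref{lcs}; only boundaries are returned upward, interior entries being consumed locally. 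Correctness is by induction on the side length: given a correct input boundary, the quadrant order guarantees every cell is evaluated after its three predecessors, so each quadrant returns a correct output boundary, and the top call returns $X_{n,n}$.

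First I would settle work and space. The operation count $T(n)$ on an $n \times n$ region obeys $T(n) = 4T(n/2) + O(1)$ with $O(1)$-cost leaves, of which there are $\Theta(n^2)$; hence $T(n) = \Theta(n^2)$, matching the $\Omega(n^2)$ entries that \eqref{lcs} forces, so the work is optimal. For space, at any moment the only live data are the nested boundaries on the active call stack, which sum geometrically to $O(n)$ words; maintaining them in an in-place $O(n)$-word frontier (overwriting each entry once it is no longer needed) means no scratch array beyond the input size is used --- the sense in which \proc{co-lcs} needs ``no temporary space''.

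The crux is the cache bound, and this is where cache-obliviousness enters. In the analysis only --- the algorithm is oblivious to $Z$ --- truncate the recursion at the first level at which a region's side drops to $\alpha Z$ for a suitably small constant $\alpha$. There the region's perimeter, the $O(Z)$ input characters of $S$ and $T$ it touches, and the $O(Z)$-word transient frontier used to sweep it all fit in the size-$Z$ cache at once, so such a sub-problem is completed in $O(Z/L + 1)$ misses --- a single streaming pass over the $O(Z)$ words that actually cross the cache boundary --- because its $\Theta(Z^2)$ interior values are produced and consumed entirely in cache and never reach memory. Summing over the $O((n/Z)^2)$ such leaves plus the $O((n/Z)^2)$ internal $O(1)$ terms gives $Q(n) = O(n^2/(LZ))$ for $n \geq \alpha Z$, while for $n < \alpha Z$ the whole instance fits and $Q(n) = O(n/L + 1)$; together, $Q(n) = O(n^2/(LZ) + n/L)$.

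The main obstacle is precisely the step that produces the unusually small leading term $\Theta(n^2/(LZ))$: one must justify treating the $\Theta(n^2)$ interior DP values as transient, so that the persistent footprint is only $O(n)$ and an in-cache sub-problem costs $O(Z/L)$ rather than $O(Z^2/L)$. Making this rigorous needs (i) the boundary passing to be realized in place, and (ii) a layout of the persisted boundary entries --- e.g.\ a recursive or bit-interleaved order --- under which a streaming pass over $O(n)$ of them costs $O(n/L)$, which is the source of the additive term. Since the statement is from \cite{ChowdhuryRa06}, we may invoke that construction directly; it remains only to check, as the accounting above does, that it attains the exact bounds quoted.
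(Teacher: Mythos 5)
The paper does not actually prove this lemma --- it is imported verbatim from Chowdhury and Ramachandran \cite{ChowdhuryRa06} --- and your sketch is a faithful and essentially correct reconstruction of that cited construction: the quadrant recursion with in-place boundary passing, the $\Theta(n^2)$ work recurrence, the $O(n)$ live frontier, and the analysis-only truncation at side $\Theta(Z)$ yielding $O((n/Z)^2)$ in-cache leaves of cost $O(Z/L)$ each, which is exactly where the $O(n^2/(LZ))$ term and the additive $O(n/L)$ input-scan term come from. Since your accounting matches the quoted bounds and you correctly isolate the one delicate point (that interior DP values are transient, so a $Z\times Z$ block costs $O(Z/L)$ rather than $O(Z^2/L)$ misses), nothing further is needed.
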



Referring to \figref{paco-lcs-pic}, we design a two-phase
\ourAlg{LCS}{} as follows.
Firstly, a partitioning phase divide-and-assigns
regions to $p$ processors evenly as follows. 
Initially, the entire 2D square region is marked as ``unassigned''.
It then repeatedly makes a $2$-way division on all unassigned 
sub-regions. As soon as it finds some anti-diagonal, i.e.
all sub-regions on the same anti-diagonal have their center 
coordinates $(i, j)$ satisfying that $i+j$ are equal,
contains equal or more than $p$ sub-regions, it assigns $p$ of 
them to $p$ processors in a round-robin fashion and stops any 
further division on them. If the sub-regions on an anti-diagonal
are of constant size, it assigns all of them to $p$ 
processors in a round-robin fashion.
\afigref{paco-lcs-pic} is an illustrative diagram
, assuming $p = 4$; Labels of sub-regions denote the order
they get assigned. For example, label-$1$ sub-regions are
the firstly assigned sub-regions after two rounds of $2$-way
division; label-$2$ sub-regions require one more round of 
$2$-way division, and so on.
Secondly, the algorithm executes 
sub-regions anti-diagonal by anti-diagonal along a time line.
All sub-regions on the same anti-diagonal run simultaneously. 
Each sub-region is executed sequentially by the
best sequential cache-oblivious algorithm \cite{ChowdhuryRa06}
(\lemref{co-lcs}).
Since each sub-region only depends on two of its 
neighboring regions, there is no need of global synchronization 
between consecutive anti-diagonals. In semantics, the data 
dependency between sub-regions can be specified by
a dataflow operator like the $\fire$ operator in the 
Nested Dataflow Model \cite{DinhSiTa16}.

\begin{theorem}
    The \ourAlg{LCS}{} computes the LCS recurrences of 
    \eqref{lcs} in optimal $\tsum = O(n^2)$ work,
    $\tinf = O(n^2/p)$ time, using no temporary space,
    $\qsum = O(\min\{n^2/(LZ) + (p n \log (pZ)) / L, (p n \log n)/L\})$
    and
    $\qinf = O(\min\{n^2/(pLZ) + (n \log (pZ))/L, (n \log n)/L\})$
    , assuming $p = o(n)$ if does not count partitioning 
    overheads. 
\label{thm:lcs}
\end{theorem}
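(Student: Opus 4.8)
\paragrf{Proof plan: }
The plan is to analyse the execution phase only, since the hypothesis excludes the partitioning cost, and to exploit the geometry of the partition built in \secref{lcs}: it is a disjoint tiling of the $n\times n$ square by axis-aligned square sub-regions grouped into successively finer \emph{scales}. First I would fix this structure precisely: at the coarsest scale the (at most $p$) assigned sub-regions have side $\Theta(n/p)$ and lie on the single fully parallel anti-diagonal band about time $t=n$; each finer scale halves the side and tiles the two residual staircase-triangles that the coarser scales leave near the two time-endpoints, the recursion stopping once a sub-region --- whose memory footprint is only $\Theta(\text{side})$, since \proc{co-lcs} uses linear space --- fits in cache or reaches base-case size. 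Every sub-region is then computed once, in place, by \proc{co-lcs} (\lemref{co-lcs}), reading $\Theta(\text{side})$ boundary values from its at most two predecessor sub-regions and writing $\Theta(\text{side})$ for its successors; sub-regions on one anti-diagonal band are mutually independent, so the absence of a global barrier follows from specifying a dataflow dependency between consecutive bands rather than a barrier.

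\paragrf{Work, time, and space: }
Since the sub-regions partition the square and \proc{co-lcs} does $\Theta(\text{area})$ work on each, $\tsum$ is $O(n^2)$ plus the boundary-reading overhead $O(\sum_{\text{regions}}\text{side})$; a geometric sum over the scales bounds the overhead by $O(pn)$, which is $o(n^2)$ because $p=o(n)$, so $\tsum=O(n^2)$ is optimal. Because \proc{co-lcs} allocates no scratch space and the algorithm itself allocates none (boundary values are read and written inside the square), no temporary space is used. For $\tinf$ I would use the round-robin rule: each processor receives at most $\lceil(\#\text{sub-regions at scale }\ell)/p\rceil$ sub-regions of scale $\ell$, so its total work is a sum of the $\Theta(\text{area})$ costs over geometrically shrinking scales, namely $\Theta(n^2/p)$; then I would bound the dependency-induced idle time by noting that all sub-regions on a band have equal size (hence no waiting within a band) and that the weight of any chain of dependent sub-regions is $\sum_{\text{scales}}(\text{band length in sub-regions})\cdot(\text{area per sub-region})$, again a convergent geometric series equal to $O(n^2/p)$. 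Together these give $\tinf=O(n^2/p)$.

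\paragrf{Cache complexity: }
By \lemref{co-lcs} a side-$s$ sub-region incurs $O(s^2/(LZ)+s/L)$ cache misses, the $O(s/L)$ term also covering its $\Theta(s)$-size boundary read and write; since a processor may begin each of its sub-regions with a cold cache, $\qsum$ is at most the sum of these bounds over all sub-regions, i.e. $O(\sum s^2/(LZ))+O(\sum s/L)$. The first sum is $O(n^2/(LZ))$ because the sub-regions partition the square. For the second I would charge in two ways: summing the per-scale boundary cost only down to the scale at which a sub-region's working set fits in cache, beyond which the area term dominates, yields the $n^2/(LZ)+(pn\log(pZ))/L$ branch; charging instead each cell of the square the plain $O(1/L)$ per-cell bound over all $O(\log n)$ refinement scales it passes through yields the $(pn\log n)/L$ branch; the stated $\qsum$ is the smaller of the two. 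Restricting the whole sum to the $O(1)$ coarse and $\Theta(p)$ finer sub-regions of one processor divides every term by $p$ up to rounding, which gives the claimed $\qinf$.

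\paragrf{Main difficulty: }
The routine parts are the appeal to \lemref{co-lcs} and the geometric-series bookkeeping. The hard part will be pinning down the residual-triangle recursion tightly enough to establish (i) that each scale contributes only $O(p^2)$ sub-regions whose side shrinks geometrically, so that both the side-sum and the per-processor load telescope and the coarsest scale does not dominate, and (ii) the exact logarithmic factor and the two regimes in $\qsum$ --- precisely how many scales pay the $O(pn/L)$ boundary cost before the cache-fitting or base-case scale --- while simultaneously checking that the anti-diagonal schedule never forces a processor to idle beyond its own $O(n^2/p)$ assigned work.
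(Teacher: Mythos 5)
Your overall strategy is the same as the paper's: tile the square into scales of geometrically shrinking sub-regions, run \lemref{co-lcs} on each, let the area-proportional regime be dominated by the coarsest scale, and count perimeter-proportional costs scale by scale below the cache-fitting threshold. However, there is a genuine quantitative gap at the exact point you flag as ``the hard part,'' and it is not a detail you could fill in as stated: you assert that each scale contributes only $O(p^2)$ sub-regions so that ``the side-sum telescopes.'' In fact the number of assigned sub-regions \emph{doubles} at every scale --- after the $p$-way top level, each processor holds $1, (p-1), 2(p-1), 4(p-1), \ldots$ regions at successive labels, i.e.\ $\Theta(2^i p^2)$ regions in total at scale $i$ --- because each assigned anti-diagonal leaves behind \emph{two} residual staircase triangles, each of which regenerates a full complement of assignable anti-diagonals after one more halving. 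Consequently the total half-perimeter per scale is \emph{constant}, $\Theta(pn)$, not geometrically decreasing; the sum over scales is $\Theta(pn\log n)$, and the number of scales between the cache-fitting side length and the base case is what produces the $\log(pZ)$ factor. Under your claimed structure the perimeter sum would telescope to $O(pn)$ and you would prove $\qsum = O(n^2/(LZ) + pn/L)$ with no logarithm --- which is the Chowdhury--Ramachandran PA bound, not the bound of \thmref{lcs}. Your work-overhead estimate ``$O(\sum \mathrm{side}) = O(pn)$ by a geometric sum'' fails for the same reason.

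Two smaller but real problems. First, your derivation of the memory-independent branch --- ``charge each cell $O(1/L)$ over all $O(\log n)$ refinement scales it passes through'' --- does not parse: the assigned sub-regions are a disjoint tiling, so each cell lies in exactly one of them, and a per-cell charge gives $O(n^2/L)$ (or $O(n^2\log n /L)$ if multiply counted), neither of which is $O(pn\log n/L)$. The correct accounting is again the constant-per-scale half-perimeter sum. Second, you describe the recursion as ``stopping once a sub-region fits in cache,'' which would make the algorithm cache-aware and contradict the cache-oblivious premise; in the paper the partitioning continues to constant size and $Z$ enters only the analysis, as the switching index $j = \log_2(2n/(pZ))$ separating the area-dominated from the perimeter-dominated scales. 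Your $\tinf$, space, and $\qinf = \qsum/p$ arguments are fine and match the paper's round-robin symmetry argument.
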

\begin{proof}
%
\punt{
\paragrf{Correctness: }
All sub-regions on the same anti-diagonal do not
have data dependency on each other, hence can run simultaneously.
All partitioning, assigning and execution are based on 
anti-diagonals along a time line.
}

%
Our performance analyses consider only 
the execution phase, with partitioning overheads calculated
separately in \corref{lcs-part}.
The work and space complexities come from the fact that this 
algorithm calls the sequential algorithm 
(\lemref{co-lcs}) to compute each assigned sub-region.

\paragrf{\Obc{}: } 
Clear from the partitioning phase because in each assignment
the difference in work loads among processors can not
be more than an asymptotically smaller term (normal assignment) 
or a small constant (base-case assignment).
\punt{
In the partitioning phase, a sub-region gets 
assigned either because the number of sub-regions on the same 
anti-diagonal is larger than or equal $p$ or because all 
sub-regions on the same anti-diagonal are of constant size.
In the first case, the algorithm assigns exactly $p$ of 
them so that the difference in workloads among different 
processors will not be more than one row or column, i.e. an 
asymptotically smaller term. 
In the second case, the number of base cases can not be more 
than $2p$ because of the $2$-way division so that
the difference can not be more than a constant.
Now we can conclude that in either case the time complexity,
also known as span, depth, or critical-path length, is $O(n^2/p)$.
}

\paragrf{\Occ{}: }\\
\paragrf{Outline: }
We firstly prove that the sub-regions assigned to each processor 
form an almost geometrically decreasing sequence in terms of area.
Since we apply the sequential cache-oblivious algorithm
(\lemref{co-lcs}) to compute each sub-region
and the sequential cache complexity is proportional to area,
top-level sub-regions thus dominate. Summing up over all 
sub-regions on each and every processors then yields the bounds. 
%

\paragrf{More details: } %
We prove by induction that the sub-regions assigned to
each processor form an almost geometrically decreasing sequence
in terms of area.
Referring to \figref{paco-lcs-pic}, the labels $1$, $2$, 
and $3$ denote the order a sub-region gets assigned. 
To simplify analysis, we assume without loss of generality 
that the the first step of partitioning makes a 
$p$-way division on the entire region.
\punt{
After the first step, 
there are exactly $p$ label-$1$ largest sub-regions of dimension
$(n/p) \pm 1$-by-$(n/p) \pm 1$ on the middle anti-diagonal,
which partitions the entire region to a top-left and a 
bottom-right parts. 
The $p$ label-$1$ sub-regions will be assigned to $p$ processors
in a round-robin fashion.
Both top-left and bottom-right parts will then have $(p-1)$ 
anti-diagonals, each of which has $1$ to $(p-1)$ sub-regions of 
dimension $(n/p) \pm 1$-by-$(n/p) \pm 1$, respectively.
}
\punt{
Let's focus on the top-left part since the 
bottom-right is just a mirroring. 
Performing one more level of $2$-way division on the top-left
$(p-1)$ anti-diagonals will yield $2 (p-1)$ new anti-diagonals,
$(p-1)$ of which will have $p$ label-$2$ sub-regions of 
dimension $(n/(2p)) \pm 1$-by-$(n/(2p)) \pm 1$.  
So, each processor will be assigned $(p-1)$ label-$2$ sub-regions.
These label-$2$ sub-regions again partition the top-left part 
into a (top-left)-(top-left) and a (top-left)-(bottom-right) 
part, each of which again has $(p-1)$ anti-diagonals and
each anti-diagonal has $1$ to $(p-1)$ 
sub-regions of dimension $(n/(2p)) \pm 1$-by-$(n/(2p)) \pm 1$
, respectively. This recursive divide-and-assign procedure repeats 
until all sub-regions are of constant size, in which case they are
assigned in a round-robin fashion to $p$ processors.
}
Then by a recursive $2$-way divide-and-assign, except the 
top-level label-$1$ sub-regions, if each processor has $q$ 
label-$j$ sub-regions, there will be $2q$ label-$(j+1)$ sub-regions
on the same processor. The total area of all label-$j$ and
label-$(j+1)$ sub-regions on each processor sum up to 
$q (n/2^{j})^2$ and $2q (n/2^{j+1})^2$, respectively.
In conclusion, the sum of label-$j$ areas
is a factor of $2$ larger than that of label-$(j+1)$'s,
which then forms a geometrically decreasing sequence for all 
$j \in [1, \log n]$. 
On the other hand, we can see that the sums of half-perimeter,
which stands for the space requirement of sub-regions,
of consecutively labelled sub-regions are identical.

Applying the sequential algorithm of \lemref{co-lcs} to compute
each assigned sub-region, assuming that $n/p > \epsilon Z$,
where $\epsilon \in (0, 1)$ is some small constant, the maximal 
cache misses on any single processor sums up to 
$Q_{1, \proc{co-lcs}}(n/p) + \sum_{i=2}^{j} 2^{i-1} (p-1) 
\cdot Q_{1, \proc{co-lcs}}(n/(2^{i-1}p)) + 
\sum_{i=j+1}^{\log n} 2^{i-1} (p-1) \cdot Q_{1, \proc{co-lcs}}(n/(2^{i-1}p)) 
= O(n^2/(pLZ) + (n \log (pZ))/L)$
Note that $Q_{1, \proc{co-lcs}}(n) = O(n^2/(LZ))$ 
if $2n > \epsilon Z$ and $O(n/L)$ if $2n \leq \epsilon Z$ by
\lemref{co-lcs} \cite{ChowdhuryRa06}
\footnote{The input array of LCS are stored in align with
anti-diagonal, so its total input size is $2n$.}, which 
explains the first equation.
The switching point $j$ comes when $2n/(2^j p) \leq \epsilon Z$
, i.e. when the input array size is less than or equal 
$\epsilon Z$, which solves to $j \geq \log_2 (2n/(pZ))$. 
We make $j = \log_2 (2n/(pZ))$ to get the final bound.
To complete the calculation, if we consider the case
of $n/p \leq \epsilon Z$, 
\aeqref{paco-lcs-asym-ineq1} will reduce to 
$\qinf = O(n/(pL) + (n \log n)/L) = O((n\log n)/L)$, 
for $p \geq 1$ and $n \geq 2$,
because the sums of half-perimeters of sub-regions of 
consecutive labels are identical.
Note that in this case we have
$\log n = O(\log (pZ))$ so that we must take a $\min$, rather
than a $\max$, over the two cases to yield an overall bound.
Since this is the analysis for any single processor, it is 
then clear $\qsum = p \qinf$.
This finalizes the proof for \occ{}.
\end{proof}

\begin{corollary}
The partitioning overheads of \ourAlg{LCS}{} are $O(p^2 n)$.
The overheads are asymptotically smaller than the computational 
loads assigned to any single processor if $p = o(n^{1/3})$.
\label{cor:lcs-part}
\end{corollary}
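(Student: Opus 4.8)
The plan is to charge the whole partitioning cost to the sub-regions that the divide-and-assign procedure ever manipulates, and then bound that count together with the per-sub-region bookkeeping. First I would observe that the partitioning is itself a pruned-BFS process on the $2$-way division tree of the $n \times n$ domain: each round halves the linear dimension of every still-unassigned sub-region, and whenever some anti-diagonal reaches $\geq p$ sub-regions exactly $p$ of them are pruned off (assigned round-robin) while the domain splits into two triangular ``leftover'' pieces that are recursed upon. Hence every sub-region that is ever touched is either eventually assigned to a processor, or an ancestor in the division tree of such an assigned sub-region, and the latter is only an $O(\log p)$-factor inflation of the former within each triangular piece. So it suffices to count assigned sub-regions up to a constant factor.

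Second I would reuse the area accounting from the proof of \thmref{lcs}: on each processor the assigned sub-regions form an almost geometrically decreasing sequence in area, with $q_1 = 1$ top-level sub-region of side $\Theta(n/p)$ and, by the doubling rule, $\Theta(2^{j-2} p)$ label-$j$ sub-regions of side $\Theta(n/(2^{j-1} p))$ for $j$ ranging over $1, \dots, \Theta(\log(n/p))$ (the range ends once sub-regions hit constant size). Summing the \emph{counts} rather than the areas over all labels gives $\Theta(p \cdot 2^{\log(n/p)}) = \Theta(n)$ assigned sub-regions per processor, hence $\Theta(p n)$ sub-regions over the entire partitioning phase. Each sub-region costs $O(1)$ to subdivide and to route to its processor, but testing the prune condition, i.e.\ whether its anti-diagonal has accumulated $p$ sub-regions, is bounded crudely by $O(p)$ per sub-region. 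Multiplying, the partitioning work is $O(pn) \cdot O(p) = O(p^2 n)$, which is the first claim.

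For the second sentence I would compare $O(p^2 n)$ against the computational load on any single processor, which is $\Theta(n^2/p)$ by the \obc{} part of \thmref{lcs} (the optimal $\Theta(n^2)$ work, split so that imbalance is only a lower-order term). Then $p^2 n \big/ (n^2/p) = p^3/n \to 0$ exactly when $p = o(n^{1/3})$, giving the stated threshold; this also sits comfortably inside the regime $p = o(n)$ assumed in \thmref{lcs}.

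The main obstacle will be making the first step precise and tight enough: the triangular leftover pieces double in number at each recursion level while shrinking geometrically in size, so one must verify that the induced series of sub-region counts is dominated by its finest term $\Theta(pn)$ rather than blowing up, and that the unassigned intermediate regions add only a constant factor. This is essentially the same geometric-decay invariant used for the cache analysis, re-run on region counts instead of areas; once it is in place the remaining arithmetic is immediate.
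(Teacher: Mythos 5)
Your proposal is correct and follows the same skeleton as the paper's proof: reduce the partitioning cost to counting the sub-regions generated by the pruned division, count them via the doubling invariant from the proof of \thmref{lcs}, and compare against the $O(n^2/p)$ per-processor workload (the final arithmetic $p^2n = o(n^2/p) \Leftrightarrow p = o(n^{1/3})$ is identical). The interesting divergence is in where the factor $p^2$ comes from. The paper charges $O(1)$ per leaf and counts $p\bigl[1 + \sum_{i=2}^{\log n} 2^{i-1}(p-1)\bigr] = O(p^2 n)$ leaves, i.e., $O(pn)$ per processor, by letting the label index run to $\log n$. Your count terminates the label range at $\Theta(\log(n/p))$, where label-$j$ sub-regions of side $\Theta(n/(2^{j-1}p))$ actually reach constant size; this gives the tighter $\Theta(n)$ leaves per processor and $\Theta(pn)$ in total (one can sanity-check this against the total-area budget of $n^2$), and you then recover the stated $O(p^2 n)$ only by charging a crude $O(p)$ per sub-region for testing the prune condition. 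Both routes legitimately establish the claimed upper bound, and yours incidentally shows the leaf count itself is a factor of $p$ smaller than the paper's estimate. Two small cautions: your remark that the ancestors of assigned sub-regions are an ``$O(\log p)$-factor inflation'' of the leaves is off --- in a division tree in which every internal node has at least two children the internal nodes number at most the leaves, so the inflation is a constant factor, which is what your operative sentence and final arithmetic in fact use (had the $\log p$ factor been real, the product would have become $O(p^2 n \log p)$ and overshot the claim); and the $O(p)$ prune-test charge, while safe, is not forced --- maintaining per-anti-diagonal counters makes it $O(1)$ amortized, which is effectively the accounting the paper adopts.
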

\begin{proof}
    The partitioning overheads are proportional to the number
    of leaves of the pruned binary tree of algorithm.
    According to the proof of \occ{} of \thmref{lcs}, except
    the top-level label-$1$ regions, if each 
    processor has $q$ label-$j$ regions, there will be $2q$ 
    label-$(j+1)$ regions on the same processor.
    So we can bound the number of total leaves by 
    $p [1 + \sum_{i=2}^{\log n} 2^{i-1} (p-1)] = O(p^2 n)$.
    Compared with the computational loads
    assigned to any processor, which is $O(n^2/p)$, it is
    asymptotically smaller if $p = o(n^{1/3})$.
\end{proof}

A Nested Parallel, which has a series-parallel DAG, algorithm
scheduled by a Randomized Work-Stealing (RWS) scheduler such 
as Cilk
will yield $O(p T_\infty)$ steals \cite{AcarBlBl00} with high
probability, which are its partitioning overheads.
The PO LCS algorithm \cite{FrigoSt09} will then have a 
partitioning overheads of
$O(p T_\infty) = O(p n^{\log_2 3})$, which is asymptotically
larger than ours if $p = o(n^{\log_2 3 - 1}) \approx o(n^{0.58})$.
Compared with the PA LCS \cite{ChowdhuryRa08} that has an 
$O(p^2)$ overheads, our overheads are larger due to more 
sub-regions generated. We leave an efficient parallelization of 
\ourAlg{LCS}{}'s partitioning phase to future research.

\begin{corollary}
    The \ourAlg{LCS}{} can achieve \perfectspeedup{} if
    $n/p = \Omega(Z \log (pZ))$, if does not count partitioning 
    overheads.
    \label{cor:lcs-scale}
\end{corollary}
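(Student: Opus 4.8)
The plan is to derive the corollary directly from the complexity bounds of \thmref{lcs} and the sequential baseline of \lemref{co-lcs}, by verifying that both clauses of the \ourprop{} — \obc{} and \occ{} — hold once we impose $n/p = \Omega(Z\log(pZ))$. The first thing I would note is that this hypothesis forces $n = \Omega(pZ\log(pZ)) \gg Z$, so the $n/L$ term in the optimal sequential bound $Q_1 = O(n^2/(LZ) + n/L)$ of \lemref{co-lcs} is asymptotically dominated; hence the target cache complexity that $\qsum$ must match is $Q_1 = \Theta(n^2/(LZ))$.

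For \obc{}, nothing beyond \thmref{lcs} is required: it already gives $\tsum = O(n^2)$, matching $T_1$, and $\tinf = O(n^2/p) = O((1/p)\tsum)$, and its partitioning analysis already shows that per-processor workloads differ by at most an asymptotically smaller additive term — all unconditionally, so in particular under our hypothesis. The substance of the proof is \occ{}. From \thmref{lcs}, $\qsum \le n^2/(LZ) + (p n \log(pZ))/L$ by taking the first branch of the $\min$; the hypothesis $n/p = \Omega(Z\log(pZ))$ is exactly the statement $p\log(pZ) = O(n/Z)$, which gives $(p n \log(pZ))/L = O(n^2/(LZ))$, so $\qsum = O(n^2/(LZ))$. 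Together with the matching lower bound discussed below this yields $\qsum = \Theta(n^2/(LZ)) = \Theta(Q_1)$, i.e. $\qsum$ is asymptotically optimal. Rewriting the same hypothesis as $\log(pZ) = O(n/(pZ))$ makes the corresponding term in $\qinf \le n^2/(pLZ) + (n\log(pZ))/L$ negligible, so $\qinf = O(n^2/(pLZ)) = O((1/p)\qsum)$, which is the second clause of \occ{}. Combining \obc{} and \occ{} gives \perfectspeedup{}.

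I do not expect a genuine obstacle here: the corollary is just the threshold reading of the $\min$-expressions in \thmref{lcs}, and using only the first branch of each $\min$ is legitimate since $\min\{a,b\} \le a$ (the $\min$ with $(pn\log n)/L$ can only make the bounds smaller and does not affect the conclusion). The one point worth a careful sentence is the lower bound $\qsum = \Omega(n^2/(LZ))$ needed to call $\qsum$ \emph{optimal} rather than merely $O(Q_1)$: under the hypothesis the total area occupied by base-case (side-$O(Z)$) sub-regions is $O(pZ) = o(n^2)$, so the sub-regions of side $\Omega(Z)$ tile a $\Theta(n^2)$-area portion of the grid, and since the sequential algorithm of \lemref{co-lcs} spends $\Theta(\mathrm{area}/(LZ))$ cache misses on each such sub-region, their contributions sum to $\Theta(n^2/(LZ))$, which the scheduler must pay.
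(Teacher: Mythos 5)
Your proposal is correct and matches the paper's own (much terser) proof: both simply take the memory-dependent branch of the $\qsum$ bound from \thmref{lcs} and observe that $n/p = \Omega(Z\log(pZ))$ is exactly the condition under which the $(pn\log(pZ))/L$ term is subsumed by $n^2/(LZ)$, with the analogous cancellation for $\qinf$ and the unconditional \obc{} carried over from the theorem. Your extra remarks (the dominance of $n^2/(LZ)$ in $Q_1$ and the matching lower bound) go slightly beyond what the paper writes but do not change the argument.
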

\begin{proof}
    The \ourAlg{LCS}{} has memory-dependent bound of 
    $\qsum = O(n^2/(LZ) + (p n \log (pZ))/L)$ 
    if $n/p > \epsilon Z$ and memory-independent bound of
    $\qsum = O((p n \log n)/L)$ if $n/p \leq \epsilon Z$.
    It is then clear that \perfectspeedup{} comes when the 
    memory dependent bound holds and its second
    term be subsumed by the first term.
\end{proof}

\paragrf{Discussions: }
The classic PO and cache-efficient LCS algorithm
\cite{FrigoSt09, CormenLeRi09} has a critical-path length of 
$O(n^{\log_2 3})$, which induces a parallel cache complexity of 
$Q_p = O(n^2/(LZ) + (p n^{\log_2 3} Z)/L)$ with high probability
when scheduled by an 
RWS scheduler \cite{AcarBlBl00, BlumofeLe99, BlumofeFrJo96a}.
This bound is asymptotically larger than ours. 
Moreover, our bound is deterministic.
Later Frigo and Strumpen \cite{FrigoSt09} improved the 
bound to $Q_p = O(n^2/(LZ) + \sqrt{p n^{3.58}})$ by using a
concave function and Jensen's Inequality. 
We can see 
that if $p^{0.5} \log (pZ) = o(n^{0.78})$, which is usually true 
on any given machine whose $p$ and $Z$ are constants 
with respect to 
problem size $n$, our bound can still be asymptotically smaller.
Cole and Ramachandran \cite{ColeRa12a} later pointed out that 
Frigo and Strumpen's method may omit the overheads
of usurpation, i.e. synchronization at the join point of a 
fork-join (also known as nested parallel) algorithm.
They gave a refined overall
cache bound of $O(n^2/(LZ) + \sqrt{p n^{3.58}} + p n^{1.58})$ 
for finding LCS sequence, more than just the length, if 
approximating $\log_2 3 \approx 1.58$.
On the other hand, Chowdhury and Ramachandran \cite{ChowdhuryRa08}
designed cache-efficient LCS algorithms for several different
models, including D-CMP, S-CMP, and Multicore. Their D-CMP model
is exactly the \ourModel{} \cite{FrigoSt09} adopted by our
paper. Their LCS algorithm on the D-CMP model makes a $p$-way 
divide-and-assign at the
top level of recursion then switches to the sequential 
$2$-way divide and conquer (\lemref{co-lcs}) for the rest
of computation. The bound claimed in their paper considers 
only the case when $n/p > \epsilon Z$. If we consider both 
branches, their bound will then be
$\qinf = O(n^2/(pLZ))$ and $\qsum = O(n^2/(LZ))$ if 
$n/p > \epsilon Z$; and $\qinf = O(n/L)$ and $\qsum = O(pn/L)$
if $n/p \leq \epsilon Z$.
If $\epsilon Z < n/p < \epsilon Z \log (pZ)$ or 
if $n/p \leq \epsilon Z$,
their bound will be a logarithmic factor smaller 
than ours in either case; otherwise, the two bounds are identical.
The difference is because their algorithm derives less number of
independent sub-regions.
Their algorithm's critical-path length is 
$(2p-1)n^2/p^2 + o(n^2/p) = 2 n^2/p + o(n^2/p)$, 
which is larger than our $n^2/p + o(n^2/p)$ by a small constant
factor of $2$. In practice, constant factor matters. 
Our preliminary experimental results (see our online full 
version) show that their algorithm's real performance 
is not as good as ours.

\subsecput{1D}{\ourAlg{1D}}

Given a real-valued function $\proc{w}(\cdot, \cdot)$, which 
can be computed with no memory access in $O(1)$
time, and initial value $D[0]$, compute
\begin{align}
D[j] &= \min_{0 \leq i < j} \{D[i] + w(i, j)\} & & \text{for $1 \leq j \leq n$} 
\label{eq:1D}
\end{align}
This problem was called the least weight subsequence (LWS)
problem by Hirschberg and Larmore \cite{HirschbergLa87}.
We will call it \emph{1D} problem following the convention
of Galil and Park \cite{GalilPa94} since it is a 1D simplication
of the more complicated GAP problem (\secref{gap}).
Its applications include, but is not limited to, the optimum
paragraph formation and finding a minimum height B-tree.

\begin{lemma} [\cite{ChowdhuryRa06}]
    There is a sequential external-updating function 
    $\proc{co-1D}_{\Box}$ that
    computes a rectangular quadrant of 1D problem in optimal
    $O(n^2)$ work, using no temporary space, and 
    $O(n^2/(LZ) + n/L)$ cache misses in a cache-oblivious fashion.
\label{lem:co-1D-box}
\end{lemma}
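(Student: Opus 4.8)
The plan is to exhibit a recursive divide-and-conquer procedure for $\proc{co-1D}_{\Box}(A,B)$ and analyze it with the standard cache-oblivious argument in the style of Frigo et al. Here $A$ and $B$ are disjoint subintervals of the index range, with $B$ lying entirely to the left of $A$ and with the $D$-values on $B$ already final; the task is to perform every relaxation $D[j] \leftarrow \min\{D[j],\, D[i] + w(i,j)\}$ for $i \in B$, $j \in A$. First I would describe the recursion: if $|A| + |B|$ is below a constant, perform all relaxations directly; otherwise split $A = A_1 A_2$ and $B = B_1 B_2$ into halves and recurse on the four sub-boxes $(A_1,B_1)$, $(A_1,B_2)$, $(A_2,B_1)$, $(A_2,B_2)$ in a boustrophedon order so that consecutive calls share one of the two intervals. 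Correctness is immediate: since the $D$-values on $B$ are fixed, each relaxation of a given $D[j]$ has the form $D[j] \leftarrow \min\{D[j], c\}$ with $c$ independent of the evaluation order, and $\min$ is commutative and associative, so any order of the four sub-boxes produces $D[j] = \min\{D[j],\, \min_{i\in B}(D[i]+w(i,j))\}$; combined with the caller's invariant this is the desired value. No temporary space is used because the updates are in place on $A$ and $w$ is evaluated with no memory access. For work, letting $T(m)$ be the cost of a box whose two sides have length $m$, we get $T(m) = 4T(m/2) + O(1)$ with $T(O(1)) = O(1)$, hence $T(m) = O(m^2)$; summed over a quadrant of side $n$ this is the claimed optimal $O(n^2)$.

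The one step requiring care is the cache bound. The crucial observation is that a box with side lengths $m$ touches only the subarrays $A$ and $B$, i.e. $O(m)$ words, hence at most $O(m/L + 1)$ cache lines; so as soon as $m \le \gamma Z$ for a suitable constant $\gamma \in (0,1)$ the whole footprint fits in the size-$Z$ cache and the entire recursive subtree rooted there incurs only $O(m/L + 1)$ misses — load the footprint once, everything else is a hit under the optimal replacement policy. Truncating the $4$-ary recursion tree at the first level where the side length drops to $\gamma Z$ gives $O\big((n/(\gamma Z))^2\big)$ such nodes, each costing $O(Z/L + 1)$, for a total of $O(n^2/(LZ) + n^2/Z^2) = O(n^2/(LZ))$ since $L \le Z$; the internal nodes above that level contribute at most $O(1)$ misses each, summing to a lower-order term. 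If instead $n \le \gamma Z$ from the outset, the cost is just $O(n/L + 1)$. Combining the two regimes yields $Q(n) = O(n^2/(LZ) + n/L)$. Note that $\gamma Z$ never appears in the algorithm, only in its analysis, so the procedure is genuinely cache-oblivious; a rectangular (non-square) quadrant is handled identically by always splitting both intervals, giving $O(|A|\,|B|/(LZ) + (|A|+|B|)/L)$.

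The main obstacle is not conceptual depth but bookkeeping in the second step: one must pick $\gamma$ small enough that the footprint of a side-$\gamma Z$ box — both arrays plus $O(1)$ scratch — genuinely fits in cache, verify that the per-internal-node $O(1)$ terms and the $n^2/Z^2$ slack are dominated (using $1 \le L \le Z$), and justify the spatial-locality factor $1/L$ rather than $1$ in the base case from the fact that $A$ and $B$ are stored and accessed as contiguous arrays. These are the routine ingredients of a cache-oblivious recursion analysis, so the argument goes through as in \cite{ChowdhuryRa06}.
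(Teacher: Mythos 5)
This lemma is not proved in the paper at all: it is imported by citation from Chowdhury and Ramachandran \cite{ChowdhuryRa06}, so there is no in-paper proof to compare against. Your reconstruction is the standard cache-oblivious four-way box recursion with truncation at footprint $\Theta(Z)$, and it is correct; the decisive observation — that an $m$-by-$m$ external-update box touches only the half-perimeter $O(m)$ words because $w$ is computed with no memory access, which is what yields the $O(n^2/(LZ))$ leading term rather than $O(n^2/L)$ — is exactly the fact the paper itself leans on later when it counts areas versus half-perimeters in the proof of its 1D theorem. The one loose end is your closing remark that a skewed rectangle is handled by ``always splitting both intervals'': for very unbalanced $|A|, |B|$ one should split only the longer side (as the cited source and the paper's own partitioning scheme do) to keep the additive $(|A|+|B|)/L$ term and the recursion depth under control, but this does not affect the lemma as invoked here, where the quadrants are essentially square.
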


Referring to \figref{paco-1D-pic}, we can see that the 
geometric shape of total work of computing 1D problem is an 
equilateral right triangle (triangle in short). 
The output of algorithm overlaps the input and is marked by 
the top shaded row.
The sequential algorithm \cite{ChowdhuryRa06}, as well as a 
straightforward cache-oblivious parallelization (COP),
recursively divides the work into three
or four quadrants depending on shape and schedules their
execution according to the data dependencies in 
granularity of quadrants.
For convenience, we mark the top-left quadrant of each 
recursion by $(0, 0)$, top-right $(0, 1)$, bottom-left 
$(1, 0)$, and bottom-right $(1, 1)$.
A triangular quadrant is a 1D computation by only cells within
the same quadrant, i.e. a self-updating function, while a 
squared quadrant denotes an update of region by cells
from a disjoint quadrant, i.e. an external-updating function.
The cache-oblivious (both sequential and parallel) algorithm 
\cite{ChowdhuryRa06}
firstly invokes itself recursively on the $(0, 0)$ quadrant,
then updates the output of $(0, 1)$ by the
results of $(0, 0)$, finally recursively computes 
the $(1, 1)$, whose output overlaps that of $(0, 1)$.

Our \ourAlg{1D}{} only changes the partitioning and 
parallelization of the squared $(0, 1)$ quadrant of each 
recursion as follows.
Initially the top-level square is associatd with a list of
all $p$ processors. It then divides the square along a longer
dimension into two halves by the ratio of 
$\lfloor p / 2 \rfloor : \lceil p / 2 \rceil$. In the mean
time, it splits the processor list by the same ratio and hands
down the resulting two lists to the two halves respectively. 
If a rectangle has two equal-sized dimensions, division
can be on an arbitrary one to break tie.
If a division is on the $y$ axis (\figref{paco-1D-pic}),
the algorithm will 
allocate temporary space to break dependency since the
two resulting rectangles update the same output region.
In this case, the two resulting rectangles will merge the results
concurrently after both of them have finished local computation.
The divide-and-conquer procedure of each squared $(0, 1)$ 
quadrant of each recursion repeats until each derived rectangle 
is associated with a list of only one ($1$) processor, specifying 
on which the computation of rectangle will be executed 
sequentially.
The partitioning and parallelelization of squared quadrant
will apply recursively to the triangular $(0, 0)$ and $(1, 1)$ 
quadrants of every recursions until base cases. A base case 
will be executed sequentially on an arbitrary processor.
\figref{paco-1D-pic} shows a diagram assuming $p = 3$ and
\figref{paco-1D-code} is the pseudo-code.
In \figref{paco-1D-code}, $\proc{cop-1D}_{\tri}$ denotes the
self-updating function, $\proc{cop-1D}_{\Box}$ the 
parallel external-updating function, and $\proc{co-1D}_{\Box}$
the sequential external-updating function.


\begin{theorem}
    The \ourAlg{1D}{} computes the 1D recurrence of
    \eqref{1D} in optimal $\tsum = O(n^2)$ work,
    $\tinf = O(n^2/p)$ time,
    using $O(p^{1/2} n \log n)$ temporary space,
    $\qsum = O(\min \{n^2/(LZ) + (p Z \log Z) / L, (p^{1/2} n \log n) / L\})$
    and
    $\qinf = O(\min \{n^2/(pLZ) + (Z \log Z) / L, (n \log n) / (p^{1/2} L)\})$, assuming $p = o(n)$. The \perfectspeedup{} range 
    is $n = \Omega(Z \sqrt{p \log Z})$.
\label{thm:1D}
\end{theorem}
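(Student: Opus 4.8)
The plan is to mirror, essentially line by line, the proof of \thmref{lcs}, separating the four claims (work, \obc{}, \occ{}, and the \perfectspeedup{} range), with the partitioning/execution distinction handled as in \secref{alg-counting}. The work and temporary-space bounds come straight from the recursive structure: every leaf rectangle is computed by the sequential $\proc{co-1D}_{\Box}$ of \lemref{co-1D-box}, so the leaf contribution is $O(\sum \text{area})$ over all leaves, and the only extra work is the parallel-for merge executed after each $y$-split (lines~\ref{li:parForMergeBegin}--\ref{li:parForMergeEnd}), whose size equals the output sub-array being merged; I would show these merges sum to a lower-order term per recursion level, so $\tsum = O(n^2)$ stays optimal. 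The temporary space is exactly what is allocated at $y$-splits; since after a $y$-split the next split is forced onto the $x$-axis, the allocations along any active root-to-leaf path of the split tree of a side-$m$ square form a geometric series summing to $O(m)$, and summing over the at most $\sqrt p$ simultaneously live $y$-subtrees of that square (and, conservatively, over the $O(\log n)$ levels of the triangular recursion) gives the claimed $O(p^{1/2} n \log n)$.

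For \obc{}, the key point is that $\proc{cop-1D}_{\Box}$ cuts each square into exactly $p$ rectangles that differ only in their lower-order dimension, and the triangular $(0,0)$ and $(1,1)$ subproblems are recursively partitioned the same way, so the per-processor work is balanced up to an asymptotically smaller term; hence $\tsum = O(n^2)$ and $\tinf = O(n^2/p)$. For the critical path I would set up $T(n) = 2\,T(n/2) + S(n/2)$ where $S(m) = O(m^2/p)$ plus the depth of the $\log p$-level split tree and of the merges (which grow geometrically but whose processor counts shrink geometrically, so they telescope to $O(m/\sqrt p + \mathrm{polylog})$), argue that the $O(m^2/p)$ term dominates, and conclude $\tinf = O(n^2/p)$ under $p = o(n)$. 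The one case that needs care is when a square is already smaller than $\sqrt p$ and cannot be cut into $p$ balanced pieces: there I would observe that the total remaining work over all such squares at each triangular-recursion level is $O(n\sqrt p)$ and geometrically decreasing, so round-robining whole small squares across processors keeps the imbalance within a lower-order term.

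For \occ{}, I would prove by induction — exactly as in \thmref{lcs} — that the leaf rectangles landing on any fixed processor, taken square by square down the triangular recursion, form an almost geometrically decreasing sequence in area (the $2^{d-1}$ squares of side $n/2^d$ at recursion depth $d$ each contribute to that processor a leaf of area $(n/2^d)^2/p$, and $\sum_d 2^{d-1}(n/2^d)^2/p$ is geometric), so the top-level leaves dominate. Summing the two regimes of \lemref{co-1D-box}'s cost over this sequence — $O(\delta^2/(LZ))$ for a side-$\delta$ leaf when $\delta > \epsilon Z$ and $O(\delta/L)$ otherwise — and adding the $O(s/(p'L))$-per-processor cost of the $y$-split merges (again geometric) yields the memory-dependent bound $\qinf = O(n^2/(pLZ) + (Z\log Z)/L)$; the memory-independent bound $\qinf = O((n\log n)/(p^{1/2}L))$ follows by replacing every leaf cost with its half-perimeter, which (unlike the area) is invariant across consecutively created leaves, and we take the $\min$ of the two full expressions, with $\qsum = p\,\qinf$ since the analysis is per processor. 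The \perfectspeedup{} range $n = \Omega(Z\sqrt{p\log Z})$ is then precisely the condition under which the second term $(p Z\log Z)/L$ of the memory-dependent $\qsum$ is subsumed by its first term $n^2/(LZ)$, following the pattern of \corref{lcs-scale}.

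The hard part will be the bookkeeping across the non-uniform divide-and-conquer tree (two triangles and one square per node, chained by data dependency) together with the temporary-space merges: the partition of a square into $p$ pieces only balances for squares of side at least $\sqrt p$, and each $y$-split introduces both an allocation and a merge that must be charged so as not to inflate $\tinf$, $\qinf$, or the space bound. I expect the most delicate step to be pinning the switch between the memory-dependent and memory-independent regimes at $n/\sqrt p \approx Z$ and verifying that, in that regime, the small-square tail, the merge overheads, and the separately counted partitioning cost are all genuinely lower order.
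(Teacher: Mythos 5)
Your proposal is correct and follows essentially the same route as the paper's proof: exact area ratio $1/p$ per final rectangle, half-perimeter $O(n/p^{1/2})$, geometric decrease in per-processor area (and constant per-level half-perimeter) down the triangular recursion, the two regimes of \lemref{co-1D-box} giving the memory-dependent term $O(n^2/(pLZ) + (Z\log Z)/L)$ and the memory-independent term $O((n\log n)/(p^{1/2}L))$, merges charged to the halves, temporary space as the sum of half-perimeters, and $\qsum = p\,\qinf$. The one step you assert rather than prove is that for arbitrary (non-power-of-two) $p$ the $\lfloor p'/2\rfloor{:}\lceil p'/2\rceil$ splits still leave every dimension of every final rectangle at $\Theta(n/p^{1/2})$ --- the paper devotes a separate case to this via $\prod_{j}\frac{2^{2j+1}+1}{2^{2j+2}+1}=\Theta(2^{-x/2})$ with $x=\lceil\log_2 p\rceil$, and since arbitrary $p$ is the point of the whole construction, that calculation should appear explicitly.
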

\begin{proof}
The work and time complexity bounds follow from that
the algorithm always evenly partitions the square of each and
every recursions among $p$ processors until base cases.
Chowdhury and Ramachandran \cite{ChowdhuryRa06} 
(\lemref{co-1D-box}) showed that 
the sequential external-updating function incurs 
$O(n^2/(LZ) + n/L)$ cache misses on a square of dimensions
$n$-by-$n$, which indicates that the cache complexity 
is proportional to the area, i.e. $O(n^2)$, if its space 
requirement $2n$ is larger than cache size $Z$, otherwise 
proportional to the half-perimeter, i.e. $O(n)$. 
Note that the space requirement of an
external-updating function is the half-perimeter of square.
The width along $x$ axis (\figref{paco-1D-pic}) stands
for the output region and the length along $y$ axis 
for the input.
So we just need to count the
areas and half-perimeters of the rectangles assigned to
each processor to bound the $\qsum$ and $\qinf$.
Since the partitioning always divides a rectangle with $p'$ 
processors into two halves by the ratio of 
$\lfloor p' / 2 \rfloor : \lceil p' / 2 \rceil$, the area ratio
of any final rectangle derived from an initial squared $(0, 1)$ 
quadrant is clearly $1/p$. Applying the conclusion 
recursively to all triangles of every recursions yields an 
$O(n^2 / p)$ total area on each processor.
We take two steps to bound the half-perimeter of each rectangle
as follows.
Firstly, we prove the 
bound by assuming that $p$ is a power of two. Secondly, we 
prove that the resulting cache complexity will not differ by a 
small constant factor when removing the assumption.
The initial half-perimeter of an $n$-by-$n$ square is $2n$,
and we use notation $S_p^{+}$ to denote the overall increase of 
half-perimeters after $\lceil \log_2 p \rceil$ rounds of $2$-way
division.
\begin{enumerate}
    \item If $p$ is a power of two: In this case, the algorithm
        cuts the initial square alternatively on the two 
        dimensions into two equal-sized halves.
        So the division doubles the initial 
        half-perimeter of $2n$ every two rounds. That is,
        $S_p^{+} = \sum_{i = 0}^{(1/2) \log_2 p} (2n \cdot 2^i) 
        \leq 4 p^{1/2} n = O(p^{1/2} n)$.

        The overall half-perimeter is then $2n + S_p^{+} = 
        O(p^{1/2} n)$ and the half-perimeter of each rectangle
        will be $O(n/p^{1/2})$ because all final rectangles 
        are of the same shape and size.

    \item If $p$ is not a power of two: This time the 
        algorithm may cut a rectangle into
        two slightly unequal-sized halves. 
        For simplicity of analysis, we assume 
        that it follows the same partitioning
        order on every dimensions as in the case of rounding $p$
        up to the next power of two. We can then bound any
        dimension of any final rectangle to be no more than a
        small constant factor away from that in the case of
        rounding $p$ to the next power of two.
        We take an arbitrary dimension of length $n$ as an 
        example. In the worst case, the dimension gets cut through
        a series of uneven right-halves (uneven left-halves 
        are similar and symmetric) and will have size
        $n \cdot \frac{\lceil p/2 \rceil}{p} \cdot 
        \frac{\lceil (\lceil (\lceil p/2 \rceil) / 2 \rceil) / 2 \rceil}{\lceil (\lceil p/2 \rceil) / 2 \rceil} \cdots
        = n \cdot \prod_{j = 0}^{\frac{1}{2} \lceil \log_2 p \rceil} 
        \frac{2^{2j + 1} + 1}{2^{2j + 2} + 1} 
        = \Theta(n/p^{1/2})$, which is asymptotically the same 
        as cutting through a series of even divisions.
        The number $\frac{1}{2} \lceil \log_2 p \rceil$ is because 
        the algorithm cuts alternatively on the two dimensions 
        and total rounds of cutting is $\lceil \log_2 p \rceil$.
        The equation holds because
        $\prod_{i = 0}^{x} \frac{2^i + 1}{2^{i+1} + 1} 
        = \Theta(2^{-x})$ and $\forall j \in [0, x], 
        \frac{2^{2j+1} + 1}{2^{2j+2} + 1} 
        \leq \frac{2^{2j} + 1}{2^{2j+1} + 1}
        \leq \frac{2^{2j-1} + 1}{2^{2j} + 1}$,
        so $\prod_{j = 0}^{x/2} \frac{2^{2j+1} + 1}{2^{2j+2} + 1}
        = \prod_{j = 0}^{x/2} \frac{2^{2j} + 1}{2^{2j+1} + 1}
        = \Theta(2^{-x/2})$, where $x = \lceil \log_2 p \rceil$.
\end{enumerate}
Combining the above two cases, we conclude that the area and 
half-perimeter of each final rectangle of the top recursion
is $O(n^2/p)$ and
$O(n/p^{1/2})$ respectively. Applying \lemref{co-1D-box}
will yield a cache complexity of $O(n^2/(pLZ) + n/(p^{1/2}L))$
for each top-level rectangle assigned to each processor.
Note that when the algorithm cuts a rectangle on the $y$ axis
into two halves, it will merge the results after the two 
halves have finished their local computation. Since the merge
is just one row of a rectangle and can be fully parallelized
among the processor list of the parent rectangle as shown
by \lirefs{parForMergeBegin}{parForMergeEnd} in 
\figref{paco-1D-code}, we can charge its overheads
to the two halves without affecting asymptotically on
either computation or communication bounds.
From the algorithm, we can see that going down one more level
of recursion will double the number of rectangles assigned to 
each processor, shrinks the corresponding total area by a factor
of $2$, and keeps the same total half-perimeter.
So if $n > \epsilon Z$, where $\epsilon$ is some small constant,
$\qinf = O((n/2)^2 / (pLZ) + 2 \cdot ((n/4)^2 / (pLZ))+ \cdots 
+ Z/L + Z/L + \cdots) = O(n^2/(pLZ) + (Z \log Z) / L)$.
If $n \leq \epsilon Z$, $\qinf = O(n/(2 p^{1/2} L) 
+ 2 \cdot (n/(4 p^{1/2} L)) + \cdots) 
= O((n \log n) / (p^{1/2} L))$.
$\qsum = p \qinf = O(\min\{n^2/(LZ) + (p Z \log Z) / L, 
(p^{1/2} n \log n) / L\})$.
The overall temporary space is the sum of half-perimeters
over all derived rectangles, which is $O(p^{1/2} n \log n)$.

The perfect strong scaling range comes when $n > \epsilon Z$ and
the second term of $\qsum$, i.e. $O(p Z \log Z / L)$ is subsumed
by the first term, i.e. $O(n^2/(LZ))$.
\end{proof}

Note that the partitioning overheads of \ourAlg{1D}{} is 
proportional to the number of rectangles assigned to each
and every processors, so is charged to computational loads.

\paragrf{Discussion: }
The PO 1D algorithm developed by Chowdhury and Ramachandran 
\cite{ChowdhuryRa06} has a sequential cache complexity of 
$O(n^2/(LZ) + n/L)$ with a depth of $O(n \log n)$. So a 
straightforward scheduling by a Randomized Work-Stealing (RWS) 
scheduler will yield
a parallel cache complexity of $O(n^2/(LZ) + (p n \log n Z) / L)$,
which is asymptotically larger than our bound.
Blelloch and Gu \cite{BlellochGu18} improved the depth to 
$O(n)$ by allocating $O(p^{1/2} n)$ total 
temporary space from an arbitrarily large system's stack. 
Their algorithm's parallel cache 
complexity, assuming an RWS scheduler, is 
$O(n^2/(LZ) + (p n Z)/L)$, which is still asymptotically larger
than ours in both the case $n > \epsilon Z$ and 
$n \leq \epsilon Z$.
Galil and Park \cite{GalilPa94} developed a sublinear 
$O(\sqrt{n} \log n)$-depth 1D algorithm, which requires a 
sub-optimal $O(p^{1/3} n^{3/2})$ total space and
$O(n^2/L)$ sequential cache complexity. This bound is 
the largest of all above algorithms.

\subsecput{gap}{\ourAlg{GAP}}
Given $w$, $w'$, $s_{ij}$, which can be computed in
$O(1)$ time with no memory access, and $D_{0, 0} = 0$,
compute
\begin{align}
    D_{i, j} = \min
    \left\{
    \begin{array}{l}
    D_{i - 1, j - 1} + s_{ij} \\
    \min_{0 \leq q < j} \{D_{i, q} + w(q, j)\} \\
    \min_{0 \leq p < i} \{D_{p, j} + w'(p, i)\} 
    \end{array}
    \right.
\label{eq:gap}
\end{align}
for $0 \leq i \leq m$ and $0 \leq j \leq n$.
We assume that $m$ and $n$ are equal to simplify discussion.
This is the problem of computing edit
distance when allowing gaps of insertions and deletions
\cite{GalilGi89}. We will call it \emph{GAP} problem
following the convention of Galil and Park \cite{GalilPa94}.
Its applications include, but is not limited to, molecular
biology, geology, and speech recognition.

GAP problem is actually a 2D version of 1D problem
(\secref{1D}). Similarly, the cache-oblivious algorithms, 
both sequential and a straightforward parallel version,
designed
by Chowdhury and Ramachandran \cite{Chowdhury07, ChowdhuryRa06}
follow a similar recursive divide-and-conquer pattern to 
their 1D algorithm and separate
the updates to any quadrant to one self-updating function
and one external-updating function.
The geometric shape of the work of a self-updating function is 
a 3D triangular analogue,
while that of an external-updating function is a 3D cube.
The right part of \figref{paco-gap-pic} shows such a 3D
triangular analogue on the top and a 3D cube at bottom.
\punt{
Blelloch and Gu \cite{BlellochGu18} improved the cache complexity
by observing that an external-updating function of dimension $n$
can be decomposed into $n$ independent invocations of 1D 
algorithm's external-updating function, i.e.
a 3D cube can be decomposed into a set of independent 2D 
squares. 
}

\punt{
\begin{lemma} [\cite{BlellochGu18}]
    There is a sequential external-updating function
    $\proc{co-gap}_{\Box}$ that
    computes a 3D cube of the GAP problem in 
    optimal $O(n^3)$ work, using no temporary space,
    and $O(n^3/(LZ) + n^2/L)$ cache misses in a 
    cache-oblivious fashion.
\label{lem:co-gap-box}
\end{lemma}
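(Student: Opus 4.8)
The plan is to reduce the three-dimensional GAP cube to a family of independent two-dimensional 1D external updates and then invoke \lemref{co-1D-box} on each of them. Fix attention on the $w$-gap direction realized by $\proc{co-gap}_{\Box}$ (the $w'$-direction is its transpose): it updates an $n\times n$ output block $A$ of $D$ from a column-disjoint $n\times n$ input block $B$ lying strictly to its left, via $A_{i,j}\gets\min\{A_{i,j},\ \min_q(B_{i,q}+w(q,j))\}$. The first step is to observe that this computation has no data dependency between distinct rows $i$: the output row $A_{i,\cdot}$ depends only on the input row $B_{i,\cdot}$ and on the weight function. Hence the $n\times n\times n$ cube, with axes (row $i$, output column $j$, input column $q$), splits along the $i$-axis into $n$ independent $n\times n$ squares, and each square is precisely a rectangular quadrant of the 1D problem, i.e. an instance solved by $\proc{co-1D}_{\Box}$.

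Next I would execute these $n$ square sub-updates one after another, reusing the same (empty) working space, and add up the costs from \lemref{co-1D-box}: work $n\cdot O(n^2)=O(n^3)$, which matches the $\Theta(n^3)$ work of the straightforward GAP recurrence and is therefore optimal for general gap costs; no temporary space, since each sub-call uses none and the $n$ calls run one at a time; and, because the $n$ squares touch pairwise disjoint memory, cache misses at most $n\cdot O(n^2/(LZ)+n/L)=O(n^3/(LZ)+n^2/L)$. Cache-obliviousness is inherited: neither the decomposition nor the control flow mentions $Z$ or $L$, and $\proc{co-1D}_{\Box}$ is itself cache-oblivious, so its recursion self-tunes to whatever cache the machine provides on every level.

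The step that needs the most care is the cache accounting for the concatenated execution: one must argue that running the $n$ squares back to back, each started with an empty cache and flushed afterwards, costs no more than the sum of the per-square bounds — true here, since the extra flush/reload contributes only $O(n/L)$ per square, already absorbed into the stated bound — and that the ``no internal dependency'' property of a genuinely disjoint cube is exactly what licenses the enclosing GAP solver to call $\proc{co-gap}_{\Box}$ without imposing any further ordering among the rows. It is also worth recording why this beats the naive $8$-way recursive cube split, whose recursion fits in cache only at side length $\Theta(\sqrt{Z})$ and hence incurs $\Theta(n^3/(\sqrt{Z}L))$ misses; replacing that split by the row-wise decomposition into 1D external updates is precisely the improvement of \cite{BlellochGu18}.
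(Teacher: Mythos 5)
Your proposal is correct and follows essentially the same route the paper takes: the paper states this lemma as a citation of Blelloch and Gu and justifies it by exactly the observation you make, namely that the $n\times n\times n$ external-update cube splits along the row axis into $n$ independent $n\times n$ instances of the 1D external-updating function of \lemref{co-1D-box}, whose work and cache bounds are then summed. Your extra care about concatenated execution (empty cache per sub-call, $O(n/L)$ reload absorbed into the bound) and the comparison with the $O(n^3/(L\sqrt{Z}))$ recursive cube split are consistent with the paper's discussion and do not change the argument.
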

}

Similar to the case in 1D, 
our \ourAlg{GAP}{} only changes the partitioning of
external-updating function as follows.
It always partitions the work of a 3D cube of dimensions
$n$-by-$n$-by-$n$ into $p$ $n$-by-$n$-by-$n/p$ 
cuboids, so that each function updates a disjoint output region 
independently and simultaneously.
The same partitioning and parallelizing pattern then applies 
recursively to every self-updating functions of every recursion,
i.e. 3D triangular analogues, until base cases. A base case is
assigned to an arbitrary processor.
%

\begin{theorem}
    The \ourAlg{GAP}{} computes the GAP recurrences of
    \eqref{gap} in optimal $O(n^3)$ work, $O(n^3/p)$ time,
    using no temporary space,
    $\qsum = O(\min\{n^3/(LZ) + (n^2 \log Z) / L, (n^2 \log n) / L\})$
    and
    $\qinf = O(\min\{n^3/(pLZ) + (n^2 \log Z) / (pL), (n^2 \log n) / (pL)\})$, assuming $p = o(n)$. The \perfectspeedup{} 
    range is $n = \Omega(Z \log Z)$.
\label{thm:gap}
\end{theorem}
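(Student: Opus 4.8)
The plan is to mirror the structure of the proof of \thmref{1D}, replacing squares by cubes and half-perimeters by face areas (I/O footprints). First I would dispatch the routine parts. Since the algorithm evenly partitions every external-updating cube into $p$ congruent cuboids and recursively applies the same scheme to every self-updating 3D triangular analogue down to base cases, the total work is the optimal $O(n^3)$ (it equals that of the sequential cache-oblivious GAP algorithm \cite{Chowdhury07, ChowdhuryRa06}); and because each size-$s$ cube is run by (up to) $p$ processors in $\Theta(s^3/p)$ time while the longest dependency chain through the four quadrants of a recursion level has constant length, the critical path sums to $\tinf = O(n^3/p)$, with the deepest levels (where a cube is too small to be cut $p$ ways) contributing only lower-order terms. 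No temporary space is needed because, unlike the 1D case, GAP never splits along the ``summation'' dimension: the $p$ cuboids always update pairwise disjoint output regions (and read pairwise disjoint input regions), so no merge, and hence no scratch array, is ever allocated.

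The heart of the argument is the cache analysis. I would invoke the three-dimensional analogue of \lemref{co-1D-box}: a sequential external-updating routine $\proc{co-gap}_{\Box}$ computes a size-$s$ cube in $O(s^3/(LZ) + s^2/L)$ cache misses, i.e.\ the cost is proportional to the volume $s^3$ once the I/O footprint (face area) exceeds the cache size $Z$, and proportional to that footprint otherwise. Next I would count, per processor, the cuboids it is assigned across all levels of the 2D GAP divide-and-conquer. At recursion level $i$ there are $\Theta(4^i)$ external-updating cubes of side $\Theta(n/2^i)$, for a total cube volume $\sum_i \Theta(4^i) \cdot \Theta((n/2^i)^3) = O(n^3)$; splitting each such cube into $p$ cuboids whose side along the partitioned (output-and-input) dimension is $\Theta((n/2^i)/p)$ gives each processor, at level $i$, volume $O(n^3/(2^i p))$ and aggregate footprint $O(n^2/p)$. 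Summing the volume contributions over $i$ yields the geometric series $O(n^3/(pLZ))$; the footprint contributions are $O(n^2/(pL))$ per level, and a level counts in the footprint regime precisely once the cuboid footprint drops below $Z$, which --- counting levels from that threshold down to the base case --- happens for $\Theta(\log Z)$ levels when $n = \Omega(Z)$, so the footprint part sums to $O((n^2 \log Z)/(pL))$. When instead $n = O(Z)$ every cuboid is already in the footprint regime, all $\Theta(\log n)$ levels contribute, and one gets $O((n^2 \log n)/(pL))$; taking the minimum of the two bounds gives $\qinf$, and since each processor's accounting is symmetric, $\qsum = p\,\qinf$, which is the claimed bound.

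Finally, perfect strong scaling holds exactly when the memory-dependent bound is in force and its footprint term $O((n^2\log Z)/L)$ is absorbed by the volume term $O(n^3/(LZ))$, i.e.\ when $n^3/(LZ) = \Omega((n^2\log Z)/L)$, which rearranges to $n = \Omega(Z\log Z)$ (this also subsumes the side condition $n > \epsilon Z$). The partitioning overhead is, as in \thmref{1D}, proportional to the number of cuboids handed to each processor and is therefore charged to its computational load, changing nothing asymptotically. The step I expect to be the real obstacle is the bookkeeping in the cube-counting: pinning down the exact branching factor at each level of the 2D GAP recursion (GAP generates both horizontal and vertical external updates, and the $(1,1)$ quadrant must wait on $(0,1)$ and $(1,0)$, so the constant hidden in ``$\Theta(4^i)$ cubes of side $\Theta(n/2^i)$'' requires care), correctly locating the volume-to-footprint transition level so that exactly $\Theta(\log Z)$ (resp.\ $\Theta(\log n)$) levels fall in the footprint regime, and handling the deepest levels where a cube of side $s < p$ cannot be cut into $p$ cuboids and must instead be assigned round-robin --- one must check that these degenerate levels contribute only lower-order terms, so that the clean closed forms above are unaffected.
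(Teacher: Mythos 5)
Your proposal is correct and follows essentially the same route as the paper, whose proof of \thmref{gap} is explicitly deferred to the argument for \thmref{1D}: per-level counting of the $\Theta(4^i)$ external-updating cubes, a geometrically decreasing volume term dominated by the root giving $O(n^3/(pLZ))$, an identical per-level footprint term of $O(n^2/(pL))$ over the $\Theta(\log Z)$ (resp.\ $\Theta(\log n)$) levels in the in-cache regime, and $\qsum = p\,\qinf$. Your handling of the no-temporary-space claim (cuts never cross the summation dimension) and of the \perfectspeedup{} range $n = \Omega(Z \log Z)$ also matches the paper's reasoning.
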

\begin{proof}
Similar to that of \thmref{1D}, hence omitted.
\punt{
The work, time, and space complexities come from a similar
argument to the proof of \ourAlg{1D}{} (\thmref{1D}).
We separate the analyses on cache complexity
into two cases as follows.
\begin{enumerate}
    \item If $n > \epsilon Z$, where $\epsilon$ is some small 
        constant: In this case, for the first recursion level,
        the amount of cache misses assigned to each processor is
        $4 \cdot [(n/2p) \cdot Q_{1D, \Box}(n/2)] = 
        4 \cdot [(n/2p) \cdot n^2/(4LZ)] = n^3/(2pLZ)$, where 
        $Q_{1D, \Box}$ stands for the sequential cache complexity
        of 1D algorithm's external-updating 
        function (\lemref{co-1D-box}) in the case of $n > Z$.
        By one more level of recursion, the additional amount of
        cache misses assigned to each processor becomes
        $4 \cdot 4 \cdot [(n/4p) \cdot Q_{1D, \Box}(n/4)]
        = 4^2 \cdot [(n/4p) \cdot n^2/(4^2 LZ)] = n^3/(4pLZ)$.
        We can see that it forms a geometrically decreasing
        sequence in terms of volume, i.e. $n^3$, for the first
        $\log_2 n - \log_2 Z = \log_2 (n/Z)$ levels of recursion 
        and the root dominates.

        After $\log_2 (n/Z)$ levels of recursion, the amount of 
        cache misses assigned to each processor is
        $4^{\log_2 (n/Z)} \cdot [(n/(2^{\log_2 (n/Z)} p))
        \cdot Q_{1D, \Box}(n/2^{\log_2 (n/Z)})]
        = (n^2/Z^2) \cdot [(Z/p) \cdot (Z/L)] = n^2/(pL)$.
        This is because after $\log_2 (n/Z)$ levels of recursion,
        the space for each 1D external-updating function fits
        completely in cache with no capacity miss.
        Similarly computing the cache misses after this level,
        we can see that it is an identical sequence in terms
        of area, i.e. $n^2$, for the next $\log_2 Z$ levels.
        
        In summary, if $n > Z$, the cache complexity is
        $\qinf = O(n^3/(pLZ) + n^2/(pL) \cdot \log_2 Z)$
        and $\qsum = p \qinf$.

    \item If $n \leq \epsilon Z$: we can see that the amount of 
        cache misses assigned to each processor forms an 
        identical sequence in terms of area, i.e. $n^2/(pL)$. So 
        a total cache complexity in this case is 
        $\qinf = O(n^2/(pL) \cdot \log_2 n)$
        and $\qsum = p \qinf$.
\end{enumerate}
}
\end{proof}

\paragrf{Discussion: }
The PO GAP algorithm designed by Chowdhury and Ramachandran
\cite{ChowdhuryRa06} has a sequential cache complexity of
$O(n^3/(L\sqrt{Z}) + n^2/L)$ and a depth of $O(n^{\log_2 3})$,
using no temporary space.
So scheduling by a Randomized Work-Stealing (RWS) scheduler 
will yield a parallel cache complexity of $O(n^3/(L\sqrt{Z}) + p 
n^{\log_2 3} Z / L)$, which is asymptotically larger than ours.
Blelloch and Gu \cite{BlellochGu18} improved the sequential cache
complexity to $O(n^3/(LZ) + n^2/L \cdot \min \{\log_2 n/\sqrt{Z}, 
\log_2 \sqrt{Z}\})$ with the same $O(n^{\log_2 3})$ depth
by observing that one GAP algorithm's external-updating function 
of dimension $n$ can be decomposed into $n$ independent 
invocations of 1D algorithm's external-updating function, i.e.
a 3D cube can be decomposed into a set of independent 2D 
squares, and by allocating $O(p^{1/2} n^2)$ total
temporary space from an arbitrarily large system's stack.
Their algorithm's parallel cache
complexity, assuming an RWS scheduler, is 
then $O(n^3/(LZ) + n^2/L \cdot \min \{\log_2 n/\sqrt{Z}, \log_2 \sqrt{Z}\} + (p n^{\log_2 3} Z) / L)$, which can be slightly
smaller than ours if
$p = o((n^{0.415} \log_2 \sqrt{Z}) / Z)$, where 
$0.415 \approx 2 - \log_2 3$.
This is because our algorithm always partitions a 3D cube evenly
and recursively until base cases so incurs deviations from the
sequential execution order until base cases, while Blelloch and 
Gu's counts the sequential cache misses so there is no deviation
when the sum of input and output of a quadrant fits in cache. 
Galil and Park \cite{GalilPa94} developed a sublinear
$O(\sqrt{n} \log n)$-depth GAP algorithm, which has a
sub-optimal $O(n^4)$ work, $O(p^{1/3} n^3)$ temporary space,
and $O(n^4/L)$ sequential cache miss complexity. This bound is
the largest of all above algorithms.

\subsecput{mm}{\ourAlg{MM}}
This section considers the general rectangular MM of
multiplying an $n$-by-$k$ matrix $A$ with an $k$-by-$m$ matrix
$B$, i.e. $C = A \otimes B$, on a closed semi-ring 
$SR = (S, \oplus, \otimes , 0, 1)$, where $n, m, k$ are
arbitrary positive integers. 

\begin{figure}
\begin{center}
\input{paco-mm-code}
\end{center}
\caption{Pseudo-Code of \ourAlg{MM}}
\label{fig:paco-mm-code}
\end{figure}

\begin{figure}
\begin{center}
\input{paco-mm-1-piece-code}
\end{center}
\caption{Pseudo-code of \ourAlg{MM-1-Piece}}
\label{fig:paco-mm-1-piece-code}
\end{figure}

We can view the computation DAG of a general MM as a rectangular 
cuboid of size $n \times m \times k$, where the two side faces 
stand for the input matrices $A$ and $B$, and the bottom face 
stands for the output matrix $C$, respectively.
To perform a given multiplication, a processor
must access to the entries of $A$, $B$, and $C$, corresponding
to the projections onto the $n \times k$, $k \times m$, and 
$n \times m$ faces of the initial cuboid, respectively.

Frigo et al. \cite{FrigoLePr12} proposed a sequential 
cache-oblivious MM algorithm by making a recursive
$2$-way divide-and-conquer on the longest dimension of the 
cuboid until base cases. So the inital cuboid is computed by
a depth-first (DFS) traversal of the recursion tree. 

\begin{lemma} [\cite{FrigoLePr12}]
    There is a sequential algorithm \proc{co-mm} that 
    multiplies an $n$-by-$k$ matrix with an $k$-by-$m$ matrix 
    in optimal $O(nmk)$ work, with 
    $Q_1 = O(1 + (nm + nk + mk)/L + nmk/(L\sqrt{Z}))$ cache 
    misses in a cache-oblivious fashion.
\label{lem:seq-co-mm}
\end{lemma}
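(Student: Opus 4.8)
The plan is to take \proc{co-mm} to be exactly the bisection-of-the-longest-dimension scheme described just above (REC-MULT of Frigo et al.\ \cite{FrigoLePr12}): at an internal node, split the longest of the three dimensions in half, recurse on the two resulting subproblems in depth-first order, and stop at a base case of constant dimensions. Note the base case is defined purely by the dimensions, independent of $Z$ and $L$, so the algorithm is genuinely cache-oblivious; the threshold ``$\Theta(Z)$'' will enter only in the analysis. No temporary space is needed because a height-dimension split can accumulate both partial products directly into $C$.

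\textbf{Work.} Each internal node does $O(1)$ work beyond its two recursive calls (taking sub-views is $O(1)$, and no extra matrix additions are incurred), and each of the $\Theta(nmk)$ constant-size base cases does $O(1)$ work, so unrolling $W(n,m,k)=2W(\cdot)+O(1)$ gives $W=O(nmk)$, which is optimal for classical semiring matrix multiplication.

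\textbf{Cache misses.} Fix a small constant $\alpha$ so that any three matrices with $nm+nk+mk\le\alpha Z$ fit together in the cache. Call a subproblem with $nm+nk+mk\le\alpha Z$ a \emph{leaf}; since the ideal cache evicts no still-needed line, a leaf incurs only the cost of streaming in its three faces, namely $O(1+(nm+nk+mk)/L)$ misses, and since internal nodes touch memory only through their children we have $Q_1=\sum_{\text{leaves }\ell}O(1+\mathrm{size}(\ell)/L)$. Halving one dimension shrinks $nm+nk+mk$ by at most a factor of two, so (unless the root is itself a leaf) every leaf has $\mathrm{size}(\ell)=\Theta(Z)$ and contributes $\Theta(1+Z/L)$ misses; it remains to count the leaves, which I would do by cases on how many of $n,m,k$ exceed $\sim\sqrt Z$. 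If all three are large, the recursion drives all three down to $\Theta(\sqrt Z)$, each leaf does $\Theta(Z^{3/2})$ work, and since total work is $\Theta(nmk)$ there are $\Theta(nmk/Z^{3/2})$ leaves, contributing $\Theta(nmk/(L\sqrt Z))$ misses. If exactly one dimension, say $k$, is small, it is never split, the other two shrink to $\Theta(\sqrt Z)$, and counting leaves by total work gives $\Theta(nmk/Z)$ of them, i.e.\ $\Theta(nm/L)$ misses; likewise two small dimensions give the $nm+nk$ term and three small dimensions is just the base case. Adding the three contributions, and using that when the root is not a leaf $Z/L=O((nm+nk+mk)/L)$ (so the per-leaf $Z/L$ overhead is already covered), yields $Q_1=O(1+(nm+nk+mk)/L+nmk/(L\sqrt Z))$.

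\textbf{Main obstacle.} The only delicate part is the skewed cases: I must show that when the cuboid is a thin slab the recursion produces only $\Theta((nm+nk+mk)/Z)$ leaves, so that the additive face terms appear with a constant and not a logarithmic coefficient --- a one-split-at-a-time induction fails here because each split of the long dimension re-adds the non-scaling face term. Following Frigo et al.\ \cite{FrigoLePr12}, the clean remedy is to bundle consecutive splits of the same dimension: whenever $n$ is repeatedly the longest, unroll $t$ levels at once into $2^t$ identical subproblems of shape $(n/2^t,m,k)$, stop $t$ as soon as $n/2^t\le\max(m,k)$ or a leaf is reached, and charge the block directly. The all-large case is the clean, dominant one and needs no such bundling.
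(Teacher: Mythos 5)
The paper never proves this lemma; it is imported verbatim from Frigo et al.\ \cite{FrigoLePr12} (Theorem~2.1 there), and the paper only ever uses the two-regime consequence quoted inside the proof of \thmref{mm}. Your reconstruction uses the right algorithm (REC-MULT, bisect the longest dimension) and is sound in substance, but it is organized differently from the original: Frigo et al.\ write an explicit recurrence for $Q$ and solve it by induction in four cases according to which of $n,m,k$ fall below $\alpha\sqrt{Z}$, whereas you charge all misses to the $\Theta(Z)$-sized leaves of the recursion tree and count leaves by dividing total work by work per leaf. The global charging argument is cleaner precisely because it sidesteps the failing one-split induction you identify in your last paragraph, and it recovers the same case structure (which dimensions are clamped determines the work per leaf), so the ``bundling'' remedy you describe is not actually needed once you commit to leaf counting. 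Two points to fix. First, in the one-small-dimension case the leaf count should be $\Theta(nm/Z)$, not $\Theta(nmk/Z)$: with $k$ clamped, a leaf has $n'm'=\Theta(Z)$ and hence does $\Theta(Zk)$ work, so the count is $nmk/(Zk)$; your stated miss total $\Theta(nm/L)$ is consistent with the corrected count, not with the one you wrote ($\Theta(nmk/Z)$ leaves at $\Theta(Z/L)$ misses each would give $\Theta(nmk/L)$). Second, the claim that a leaf costs only $O(1+(n'm'+n'k'+m'k')/L)$ misses silently uses the tall-cache assumption $Z=\Omega(L^2)$: an $n'\times k'$ submatrix of a row-major matrix occupies up to $n'(1+k'/L)$ cache lines, and the additive $n'$ per face (the $\Theta(n+m+k)$ term present in Frigo et al.'s original statement of the bound) is absorbed into $Z/L$ only when $\sqrt{Z}\ge L$. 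You should state that assumption explicitly, since the $O(1+\cdots)$ form of the lemma already presupposes it.
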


By contrast, we reduce a parallel MM algorithm to a pruned 
breadth-first (BFS)
partitioning of the initial cuboid among $p$ processors as 
follows.
The initial cuboid is marked as ``unassigned'' and has output
matrix $C$ as its bottom face. Then it 
repeatedly makes an even $2$-way division on the longest dimension
of all unassigned cuboids to derive twice the number of 
smaller cuboids depth by depth. That is, 
depth-$0$ has only one unassigned cuboid, depth-$1$
will have two, depth-$2$ will have four, and so on. 
If a division is on the height of a
cuboid, the algorithm will allocate a temporary space of
the same size as its bottom face for output of the upper cuboid.
The corresponding lower cuboid reuses their parent's
bottom face for output.
By allocating temporary space, all derived cuboids of the
same depth can run concurrently. This stands by the
observation that all multiplications are independent of each
other, serialization is only necessary 
when combining the intermediate results by addition.
As soon as some depth contains equal or more than $p$
unassigned cuboids, exact $p$ of them will be assigned to 
$p$ processors in a round-robin fashion. The rest of 
cuboids, if any, 
will go to the next round of division. This procedure repeats
until all cuboids on the same depth are of base (constant) sizes, 
in which case all of them will be assigned in a round-robin 
fashion.

\afigref{paco-mm-code} is the pseudo-code of algorithm. 
In the pseudo-code, we use notation $\{P\}$ to denote a processor 
list, $p_i$ to denote an individual processor, and \id{np} to 
denote the processor number.
The procedure has a
processor list which rounds up the $\id{np}$ real processors to 
the next power of two so that $p_i$ stands for a real processor 
if its index $i < \id{np}$ and for a virtual processor if 
$i \geq \id{np}$.
$\id{n\_rounds}$ and $\id{res\_p}$ stand for the number of 
BFS steps to the next assignment and number of leftover 
processors after the assignment, respectively.
\Lirefs{onePbegin}{onePend} executes the MM sequentially if the 
processor list reduces to just one processor.
\Lirefs{expandBegin}{expandEnd} adjust the processor list for
the next $\id{n\_rounds}$ if the only leftover processor is 
virtual.
\Lirefs{cutXbegin}{cutXend} is a straightforward parallelization
of the cutting-on-$X$ branch of \proc{seq-co-mm}. Note that
$\{\id{new\_P}_1\}$ and $\{\id{new\_P}_2\}$ returned from 
recursive procedure calls contain only real processors and will
be merged with redundant processors eliminated.
\Lirefs{pforBegin}{pforEnd} show that the parallel additions to
combine intermediate results will be executed on the returned
real processor list $\{new\_P\}$.

\afigref{pruned-bfs-pic} is an illustration of the algorithm
when $p = 3$.
After two rounds of $2$-way division, 
we have four ($4 > p = 3$) depth-$2$ unassigned cuboids, three 
of which will then be assigned to $p = 3$ processors in a 
round-robin fashion. 
The algorithm then repeats the divide-and-assign 
on the remaining one ($1$) unassigned cuboid until all unassigned
cuboids are of base (constant) sizes, in which case all of them 
will be assigned in a round-robin fashion.
The following \thmref{mm} bounds the algorithm's performance.

\begin{theorem}
    The \ourAlg{MM}{} multiplies an $n$-by-$k$ matrix
    $A$ with an $k$-by-$m$ matrix $B$ in optimal 
    $\tsum = O(nmk)$ work, optimal $\tinf = O(nmk/p)$ time, 
    using $O(\min\{pmk, \allowbreak \sqrt{pnmk^2}, \allowbreak 
p^{1/3}(nmk)^{2/3}\})$ temporary space,
    $\qsum = O(nmk/(L\sqrt{Z}) + (nm + nk + mk + \min\{pmk, 
    \allowbreak \sqrt{pnmk^2}, \allowbreak 
    p^{1/3}(nmk)^{2/3}\})/L)$ and 
    $\qinf = (1/p) \qsum$, 
    assuming $n \geq m \geq k$ and $p = o(n+m+k)$.
\label{thm:mm}
\end{theorem}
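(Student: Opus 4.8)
The plan is to analyze the pruned-BFS partition of the $n\times m\times k$ cuboid directly and to treat the sequential algorithm \proc{co-mm} of \lemref{seq-co-mm} as a black box run on every assigned cuboid; all bounds then follow by summing per-cuboid costs over the cuboids held by one processor and then over all $p$ processors. First I would settle \obc{}. Since the assigned cuboids partition the initial cuboid and the work of a cuboid is proportional to its volume (\lemref{seq-co-mm}), $\tsum=O(nmk)$ is immediate. For $\tinf$ I would observe that the cuboids handed to a fixed processor in successive pruning rounds form a geometrically decreasing sequence in volume: reaching a new pruning round requires at least one further halving of some dimension, so round $i{+}1$ contributes a single cuboid whose volume is at most half that of round $i$, while round $0$ contributes a cuboid of volume at most $nmk/p$; summing the geometric series gives $\tinf=O(nmk/p)$ up to a lower-order base-case term. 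Load balance is exact because each pruning assigns precisely $p$ congruent cuboids to the $p$ processors, and the final round-robin over base-case cuboids differs by at most $c-1$ cuboids per processor, i.e.\ an asymptotically smaller additive term; this is exactly \obc{}.

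For \occ{} I would apply \lemref{seq-co-mm} to each assigned cuboid of dimensions $n'\times m'\times k'$, contributing $O(1+(n'm'+n'k'+m'k')/L+n'm'k'/(L\sqrt Z))$ cache misses, and sum. The $n'm'k'/(L\sqrt Z)$ terms form a geometric series dominated by the top level, summing to $O(nmk/(L\sqrt Z))$ over all processors; the base $(n'm'+n'k'+m'k')$ contributions of the top-level cuboids reassemble into the $Q_1$-style term $(nm+nk+mk)/L$. The remaining, and harder, part is the \emph{extra} surface area created by the internal BFS cuts, together with the temporary arrays $D$ allocated whenever the contraction dimension $k$ is halved. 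Here I would track the ``always split the longest dimension'' rule starting from $n\ge m\ge k$: the $\log_2 p$ halvings distribute themselves among the three dimensions as evenly as the current sizes permit, so there are $a$ halvings of $n$, $b$ of $m$, $c$ of $k$ with $2^{a+b+c}\approx p$ and the reduced sizes $n/2^a$, $m/2^b$, $k/2^c$ as close to equal as the rule allows. A three-regime case analysis on $p$ then finishes it: (i) when $p=O(n/m)$ only $n$ is cut, the $k\times m$ input face of $B$ is re-read by all $\approx p$ pieces for an extra $O(pmk/L)$, and no temporary space is used; (ii) when $p=\Omega(n/m)$ and $p=O(nm/k^2)$ the dimensions $n$ and $m$ are cut to $\approx\sqrt{nm/p}$, the $A$- and $B$-faces are re-read $\approx\sqrt{mp/n}$ and $\approx\sqrt{np/m}$ times, giving an extra $O(k\sqrt{pnm}/L)=O(\sqrt{pnmk^2}/L)$ and still no $D$; (iii) when $p=\Omega(nm/k^2)$ all three dimensions are cut to $\approx(nmk/p)^{1/3}$, every face is re-read and the $D$'s sum, each to $O((nmk)^{2/3}p^{1/3}/L)$. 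In each regime the operative quantity is exactly the minimum of the three expressions, so the extra cache cost is $O(\min\{pmk,\sqrt{pnmk^2},p^{1/3}(nmk)^{2/3}\}/L)$; adding the $Q_1$ terms yields the stated $\qsum$, and $\qinf=(1/p)\qsum$ holds because the per-processor volume and surface sums are identical up to the base-case discrepancy, so the maximum is within a constant of the average.

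The temporary-space bound is the same quantity: since each $D$ is freed immediately after its combining addition, the live-memory peak is at most the total size of all $D$ arrays, which is $O(\min\{pmk,\sqrt{pnmk^2},p^{1/3}(nmk)^{2/3}\})$ by the same regime analysis (zero in regimes (i) and (ii), where $k$ is never cut), and this dominates the recursion's lower levels.

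I expect the main obstacle to be the surface-area and temporary-space case analysis: making the ``always split the longest dimension'' rule precise for non-power-of-two $p$, where each halving is a $\lfloor\cdot/2\rfloor:\lceil\cdot/2\rceil$ split exactly as handled in the proof of \thmref{1D}; pinning down the regime boundaries $p=\Theta(n/m)$ and $p=\Theta(nm/k^2)$; and verifying that in each regime the triple $(a,b,c)$ produced by the rule realizes the claimed minimum and that the peak of the $D$ arrays never exceeds it. By contrast, the interplay between the pruned-BFS partition at the top and the ordinary depth-first \proc{co-mm} recursion below each assigned cuboid is routine once \lemref{seq-co-mm} is invoked as a black box, and the work/time/load-balance part is a direct geometric-series argument.
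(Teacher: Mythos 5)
Your proposal is correct and follows essentially the same route as the paper's proof: balanced work via the pruned-BFS invariant that no processor holds two cuboids of the same depth (hence a geometrically decreasing volume sequence), and the cache bound via the same three-regime surface-area analysis with boundaries $p=\Theta(n/m)$ and $p=\Theta(nm/k^2)$, which the paper obtains by adapting the CARMA argument of Demmel et al.\ to bound the total surface-area increase $S_p^{+}$ after $\lceil\log_2 p\rceil$ cuts. Your added observation that the temporary $D$ arrays arise only when the contraction dimension $k$ is cut (regime (iii)) is a correct sharpening of the paper's coarser bound of ``at most $S_p^{+}$'' for the temporary space, but it does not change the argument.
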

\begin{proof}
\paragrf{\Obc{}: }
%
%
A cuboid gets assigned either because the number of unassigned 
cuboids of the same depth
are equal or more than $p$, in which case exact $p$ of 
them will be assigned, or because all cuboids are of base 
(constant) size, 
in which case there will be no more than $2p$ of them and 
all of them will be assigned to $p$ processors in a 
round-robin fashion. In the first case, the difference between
assigned cuboids will be no more than one face, i.e. an
asymptotically smaller term, due to an even $2$-way division; 
while in the second case, the difference between assignments 
will be no more than a constant. 

\paragrf{\Occ{}: }\\
\paragrf{Outline: }
From the proof of \lemref{seq-co-mm} 
(Theorem 2.1 of \cite{FrigoLePr12}), the sequential
cache-oblivious MM algorithm \proc{co-mm}
incurs $O(nmk/(L\sqrt{Z}))$ cache misses, i.e. proportional
to the volume of cuboid, if its surface
area $nm + mk + nk > \epsilon Z$, and $O((nm + mk + nk)/L)$ cache
misses, i.e. proportional to the surface area, otherwise, 
where $\epsilon \in (0, 1)$ is some small constant.
We prove that $\forall i \in [1, p]$, the cuboids
assigned to processor-$i$ form
a geometrically decreasing sequence in terms of both volume
and surface area. It is then clear that the top-level, i.e. 
largest, cuboid on each processor dominates in either cases.
Since the reduction of a pair of upper and lower cuboids 
derived from a cut on height by addition is asymptotically 
cheaper than the corresponding upper and lower cuboids' 
multiplications, i.e. one face of a cuboid versus its volume,
plus that the reduction by addition can be fully parallelized,
we can charge all reduction overheads (work, time, caching) to 
all real processors that are involved in 
computing the upper and lower cuboid's multiplication 
without affecting overall complexities asymptotically.
It then boils down to bound the volume and surface area of
the largest cuboid on each processor to yield the final bounds.
%
To be convenient, we denote that the initial cuboid has volume 
$V = nmk$ and surface area $S = (nm + mk + nk)$ and assume 
without loss of generality that $n \geq m \geq k$ in the rest 
of proof.

\paragrf{More details: }
We prove that the cuboids assigned to any single processor 
form a geometrically decreasing sequence in terms of both 
volume and surface area as follows.
\punt{
Starting from $1$ root cuboid of size $n \times k \times m$,
each round of $2$-way division doubles the number of unassigned
cuboids and decrease the volume of each cuboid by a factor of $2$.
As soon as there are more than $p$ unassigned cuboids,
exact $p$ of them will get assigned. Remaining unassigned 
cuboids will go to more rounds of $2$-way division before they
can be assigned, so will be at least a factor of $2$ smaller 
than prior assigned cuboids.
}
By the $2$-way divide-and-assign, as soon as some depth contains
equal or more than $p$ cuboids, exactly $p$ of them will be 
assigned in a round-robin fashion. The number of rest 
unassigned cuboids, if any, will be less than $p$, and will go
to more rounds of $2$-way division before they can be assigned.
It is clear that no processor will have more than one cuboid of 
the same depth, i.e. the same non-constant volume.
This finalizes the proof of geometrical decrease in volume.
Since the algorithm always cut a cuboid on the longest dimension 
into two equally sized halves, we can see that the surface area
of a child cuboid is no more than $(2/3)$ of that  
of its parent but larger than $(1/2)$ fraction. 
That is, without loss of generality 
if we assume that a parent cuboid is $n' \times k' \times m'$
and has surface area $S' = (n'm' + n'k' + m'k')$,
assuming $n' \geq m' \geq k'$, we have 
$(1/2) S' \leq ((n'/2) \cdot m' + (n'/2) \cdot k' + 
m' \cdot k') \leq (2/3) S'$.
This finalizes the proof of geometrical decrease in surface area.

We then bound the volume and surface area of the largest cuboid 
on each processor as follows.
Each processor has its largest cuboid assigned after
$\lceil \log_2 p \rceil$ rounds of $2$-way division.
Since each round decreases the volume of a cuboid by
a factor of $2$, it is then clear that the volume of 
largest cuboid on each processor is 
$V/(2^{\lceil \log_2 p \rceil}) \in (V/(2p), V/p)$,
where $V$ is the volume of initial cuboid.
To bound the surface area, we adapt the proof on communication
cost of CARMA (Communication-Avoiding Recursive MAtrix 
Multiplication) algorithm by Demmel et al. 
(Sect. II C of \cite{DemmelElFo13}). The main difference is
that their proof assumes that processor number $p$ is an
exact power of $2$ and their algorithm is efficient by
the proof if all prime factors of $p$ can be bounded
by a small constant. By contrast, we adapt their proof 
to bound the surface area of the largest cuboid on each 
processor so that our algorithm and proof
work for an arbitrary number of processors, even when $p$ by
itself is a large prime number. 
We use notation $S_p^{+}$ to denote the overall
increase of surface area after $\lceil \log_2 p \rceil$ 
rounds of $2$-way division. 
\begin{enumerate}
    \item If $p \leq n/m$, the $2$-way division cuts only on 
        dimension $n$ (recall we assume $n \geq m \geq k$),
        the smallest face $m \times k$ gets doubled on every cut. 
        \begin{align}
            S_p^{+} &= \sum_{i=0}^{\log_2 p - 1} (mk \cdot 2^i) = O(pmk) \label{eq:s-sum-case1}
        \end{align}

    \item If $n/m < p \leq nm/k^2$, the division has
        two phases. The first phase of $\log_2 (n/m)$ rounds
        cut only on dimension $n$ and increase the total surface 
        area by
        $\sum_{i=0}^{\log_2 (n/m) - 1} (mk \cdot 2^i) = O(nk)$
        and increase the number of cuboids to $n/m$.
        After the first phase, the sizes of dimension $n$ 
        and $m$ of any cuboid are within a factor $2$ of each 
        other. So the second phase of 
        $\log_2 p - \log_2 (n/m) = \log_2 (pm/n)$ 
        rounds cut into all $n/m$ cuboids' dimensions $n$ and $m$ 
        alternatively and doubles the smallest face $m \times k$
        every two rounds.
        \begin{align}
            S_p^{+} &= O(nk) + \sum_{i=0}^{(1/2)\log_2 (pm/n)} ((n/m)(mk) \cdot 2^i) \label{eq:s-sum-case2}\\
                       &= O(nk) + O(\sqrt{pnmk^2}) = O(\sqrt{pnmk^2})\label{eq:s-sum-case2-subsume}
        \end{align}
        In \eqref{s-sum-case2-subsume}, the second term dominates 
        because $p > n/m$ in this case.

    \item If $nm/k^2 < p$, the division has three phases. The
        first phase cuts only on dimension $n$ for $\log_2 (n/m)$
        rounds and increase the surface area by $O(nk)$,
        as well as increasing the total number of cuboids to $n/m$.
        The second phase cuts on $n/m$ cuboids's dimension $n$ and
        $m$ alternatively for $2 \log_2 (m/k)$ rounds and 
        increase the surface area by
        $\sum_{i=0}^{\log_2 (m/k)} ((n/m)(mk) \cdot 2^i) = O(nm)$,
        as well as increasing the total number of cuboids to
        $nm/k^2$.
        After the second phase, all cuboids' three dimensions 
        are within a factor of $2$ of each other.
        So the third phase of $\log_2 p - \log_2 (nm/k^2) = 
        \log_2 (pk^2/(nm))$ rounds cut into all 
        $nm/k^2$ cuboids' three
        dimensions alternatively and double the smallest face 
        $k \times k$ every three rounds. 
        \begin{align}
            S_p^{+} &= O(nm) + \sum_{i=0}^{(1/3)\log_2 (pk^2/(nm))} ((nm/k^2)(k^2) \cdot 2^i)\\
                       &= O(nm) + O(p^{1/3} (nmk)^{2/3}) = O(p^{1/3} (nmk)^{2/3}) \label{eq:s-sum-case3-subsume}
        \end{align}
        In \eqref{s-sum-case3-subsume}, the second term dominates
        because $p > nm/k^2$ in this case.
\end{enumerate}
Combining the three cases by taking a $\min$, a single largest
cuboid's surface area is then
$O((1/p) \cdot (S + S_p^{+}) = O((1/p) \cdot (nm + nk + mk + 
\min\{pmk, \allowbreak \sqrt{pnmk^2}, \allowbreak 
p^{1/3}(nmk)^{2/3}\}))$
\punt{
$ = O((1/p) (nm + nk + mk +  p^{1/3} (nmk)^{2/3}))$ because 
$p < n/m$ and $p < nm/k^2$ in the first two terms of $\min$.
}
The temporary space complexity is then at most $S_p^{+}$.
This finalizes the bound on largest cuboid's surface area.
The bounds of theorem then follow.
\end{proof}

Though \ourAlg{MM}{} of \thmref{mm} is optimal in a shared-memory
setting, it can have up to $O(\log (n m k)$ cuboids on each 
processor so that its latency bound in a distributed-memory 
setting can be large. So we simplify the algorithm to
\ourAlg{MM-1-Piece}{} as follows.
The algorithm is almost identical to \ourAlg{MM}{} except
that each time it cuts a cuboid on its longest dimension into
two slightly unequal-sized halves as shown in 
\figref{paco-mm-1-piece-code}.
That is, if a cuboid is associated
with a list of $p$ processors, the algorithm will partition the 
cuboid on its longest dimension into two halves by the 
ratio of $\lfloor p/2 \rfloor : \lceil p/2 \rceil$. 
In the mean time, it splits the processor list by the same ratio.
The algorithm then repeats on the left and right halves 
concurrently and recursively until each cuboid is associated
with a list of only one ($1$) processor, which specifies its 
assignment.
To simplify analyses, we assume that the partitioning on each
dimension follows exactly the same order as in \ourAlg{MM}{}.
This assumption can be realized by associating the initial
real cuboid with a same-sized virtual cuboid. Each time the 
virtual cuboid employs \ourAlg{MM}{} to pick a dimension to
cut and the real cuboid will then cut on the same dimension 
but into two unequal-sized halves.
\punt{
To guarantee the same partitioning order on each dimension as 
in \ourAlg{MM}{}, we can associate the initial real cuboid with a
same-sized virtual cuboid (not shown in 
\figref{paco-mm-1-piece-code} for simplicity).
Each time it looks into the virtual
cuboid to decide which dimension is the longest before it divides
both real and virtual cuboids on the same dimension into 
two halves. The difference is that it divides the virtual cuboid
into two equal-sized halves and real cuboid into two slightly
unequal-sized halves. 
}
\corref{mm-1-piece} then bounds the algorithm's performance.

\begin{corollary}
    The \ourAlg{MM-1-Piece}{} multiplies an $n$-by-$k$ matrix
    $A$ with an $k$-by-$m$ matrix $L$, by having only one cuboid
    on each processor, in optimal $\tsum = O(nmk)$ work, 
    optimal $\tinf = O(nmk/p)$ time, 
    using $O(\min\{pmk, \allowbreak \sqrt{pnmk^2}, \allowbreak 
p^{1/3}(nmk)^{2/3}\})$ temporary space,
    with an 
    $\qsum = O(nmk/(L\sqrt{Z}) + (nm + nk + mk + \min\{pmk, 
    \allowbreak \sqrt{pnmk^2}, \allowbreak 
    p^{1/3}(nmk)^{2/3}\}) / L)$ and 
    $\qinf = (1/p) \qsum$, 
    assuming $n \geq m \geq k$ and $p = o(n+m+k)$.
\label{cor:mm-1-piece}
\end{corollary}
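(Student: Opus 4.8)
The plan is to reduce the analysis of \ourAlg{MM-1-Piece}{} to that of \ourAlg{MM}{} in \thmref{mm}, by showing that the slightly unequal splits perturb every relevant geometric quantity by at most a constant factor. First I would settle the work and time bounds. Since each cut partitions a processor list of size $p'$ into sublists of sizes $\lfloor p'/2 \rfloor$ and $\lceil p'/2 \rceil$ and cuts the longest dimension of the current cuboid in the same ratio, after $\lceil \log_2 p \rceil$ rounds every cuboid is associated with exactly one processor and has volume within a constant factor of $nmk/p$; hence $\tsum = O(nmk)$ and $\tinf = O(nmk/p)$. As in \thmref{mm}, the reduction-by-addition steps accompanying cuts on the height dimension cost one cuboid face against a cuboid volume and can be fully parallelized over the parent cuboid's processor list, so they can be charged to the participating processors without changing the asymptotics.

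The crux is bounding the surface area of the single cuboid held by each processor. Here I would use the virtual-cuboid coupling described after the corollary statement: run \ourAlg{MM}{}'s even-split rule on a shadow cuboid to choose which dimension to cut at each step, and have \ourAlg{MM-1-Piece}{} cut that same dimension but in the $\lfloor p'/2 \rfloor : \lceil p'/2 \rceil$ ratio (\figref{paco-mm-1-piece-code}). This makes the phase structure of the cut sequence identical to the one analyzed in \thmref{mm} --- phase 1 cutting only $n$, phase 2 alternating $n$ and $m$, phase 3 alternating $n$, $m$, $k$, according to whether $p \le n/m$, $n/m < p \le nm/k^2$, or $nm/k^2 < p$. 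The only difference is that a dimension that repeatedly takes the larger half shrinks through factors $\lceil p'/2 \rceil / p'$ instead of exactly $1/2$. I would then invoke the same telescoping estimate used in the proof of \thmref{1D}, namely $\prod_{i=0}^{x} \frac{2^i+1}{2^{i+1}+1} = \Theta(2^{-x})$ together with the monotonicity $\frac{2^{2j+1}+1}{2^{2j+2}+1} \le \frac{2^{2j}+1}{2^{2j+1}+1}$, to conclude that each dimension of the final cuboid is $\Theta$ of what it would be under even splits. Consequently the overall surface-area increase $S_p^{+}$ still obeys the case bound $O(\min\{pmk, \sqrt{pnmk^2}, p^{1/3}(nmk)^{2/3}\})$ of \eqref{s-sum-case1}--\eqref{s-sum-case3-subsume}, and the single cuboid per processor has surface area $O((1/p)(nm + nk + mk + S_p^{+}))$.

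Finally I would assemble the cache bounds by applying \lemref{seq-co-mm} to the one cuboid on each processor: if its surface area exceeds $\epsilon Z$ its sequential cost is proportional to its volume, $O(nmk/(pL\sqrt{Z}))$; otherwise it is proportional to its surface area, $O((nm + nk + mk + S_p^{+})/(pL))$. Summing over the two regimes and multiplying by $p$ yields $\qsum = O(nmk/(L\sqrt{Z}) + (nm + nk + mk + \min\{pmk, \sqrt{pnmk^2}, p^{1/3}(nmk)^{2/3}\})/L)$ and $\qinf = (1/p)\qsum$, while the temporary space allocated on height cuts is again at most $S_p^{+}$. The \obc{} and \occ{} conditions then hold because each processor carries exactly one cuboid whose volume and surface area are within a constant factor of the per-processor average.

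I expect the main obstacle to be the phase-by-phase bookkeeping of the uneven splits: one must check that the fractional factors $\lceil p'/2 \rceil / p'$ accumulate to $\Theta(2^{-r})$ over $r$ rounds even when $p$ is a large prime, and that the thresholds separating the three cutting phases (the points where two dimensions come within a factor of $2$) move by only a constant, so that $S_p^{+}$ keeps its three-case form. This is exactly the style of estimate already carried out in the proof of \thmref{1D}, so it should go through, but it has to be repeated for all three dimensions and all three ranges of $p$.
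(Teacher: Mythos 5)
Your proposal is correct and follows essentially the same route as the paper's proof: couple the real cuboid to a virtual cuboid that follows \proc{PACO-MM}'s even-split order, bound each dimension of the final real cuboid via the telescoping product $\prod_{i}\frac{2^i+1}{2^{i+1}+1}=\Theta(2^{-x})$ to within a constant factor of the even-split value, and conclude that volume and surface area (hence all bounds of \thmref{mm}) carry over up to constants. The paper's version is terser --- it stops once each dimension is shown to be within a constant factor of the virtual cuboid's and inherits the rest from \thmref{mm} --- whereas you re-derive the three-phase surface-area bookkeeping explicitly, but the underlying argument is the same.
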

\begin{proof}
%
Since a real and virtual cuboid always cut on the same
dimension at each and every division points and the partitioning 
of virtual cuboid follows the same partitioning order of 
\ourAlg{MM}{} for 
the first $\lceil \log_2 p \rceil$ rounds, we just need to 
bound any dimension of any final real cuboid 
to be no more than a small constant factor away from 
that of corresponding virtual cuboid. 
The volume and surface area of any final real
cuboid will then also be within a constant factor
of those of corresponding virtual cuboid, i.e. the largest cuboid 
of \ourAlg{MM}{}.
Without loss of generality, we take dimension $n$ as an example.
In the worst case, the dimension gets cut through a series of 
right halves and will have size
$n \cdot (\lceil p/2 \rceil / p) \cdot (\lceil (\lceil p/2 \rceil) / 2 \rceil / (\lceil p/2 \rceil)) \cdots (2/3) = n \cdot \prod_{i=1}^{\lceil \log_2 p \rceil} ((2^i + 1)/(2^{i+1} + 1)) = 
\Theta(n/p)$, which is asymptotically the same as cutting through 
a series of equal-sized halves.
Similarly, we can bound the size of a dimension that gets cut 
through a series of left halves.
If the cuts on dimension $n$ interleaves with two other dimensions,
since the real cuboid follows exactly the same division order
as the virtual cuboid, the difference on any dimension after
$x$ cuts, where $0 \leq x \leq \lceil \log_2 p \rceil$, will
not be larger than a small constant factor.
This completes the proof.
\end{proof}

\begin{corollary}
    The \ourAlg{MM}{} and \ourAlg{MM-1-Piece}{} achieve 
    \perfectspeedup{} if $p = O((nmk)/Z^{3/2})$.
    \label{cor:mm-p-speedup}
\end{corollary}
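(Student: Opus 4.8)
The plan is to verify the two clauses of the \ourprop{} --- \obc{} and \occ{} --- directly from the bounds already established in \thmref{mm} and \corref{mm-1-piece}, the only real content being a one-line substitution of the hypothesis $p = O((nmk)/Z^{3/2})$.

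First I would dispatch \obc{}. By \thmref{mm} (resp. \corref{mm-1-piece}) both algorithms have $\tsum = O(nmk)$, which matches the work of the best sequential classic rectangular \proc{co-mm} (\lemref{seq-co-mm}) and is asymptotically optimal for multiplication over a semiring, and $\tinf = O(nmk/p) = O((1/p)\tsum)$. The stronger requirement that per-processor workloads differ by no more than an asymptotically smaller term is exactly what the \Obc{} paragraph in the proof of \thmref{mm} already shows (an even $2$-way split changes a cuboid by at most one face, and the at most $2p$ base-case cuboids differ in count by at most a constant). So \obc{} holds with no condition on $p$.

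The substance is \occ{}. Here I would compare $\qsum$ against the sequential cache complexity $Q_1 = O(1 + (nm+nk+mk)/L + nmk/(L\sqrt{Z}))$ of \lemref{seq-co-mm}. By \thmref{mm}, $\qsum = O(nmk/(L\sqrt{Z}) + (nm + nk + mk + \min\{pmk, \sqrt{pnmk^2}, p^{1/3}(nmk)^{2/3}\})/L)$, so $\qsum$ already contains the two nontrivial terms of $Q_1$ and differs from $Q_1$ only by the term $\min\{pmk, \sqrt{pnmk^2}, p^{1/3}(nmk)^{2/3}\}/L$. Bounding the minimum by its third argument and substituting $p = O((nmk)/Z^{3/2})$ gives $\min\{pmk, \sqrt{pnmk^2}, p^{1/3}(nmk)^{2/3}\} \le p^{1/3}(nmk)^{2/3} = O((nmk/Z^{3/2})^{1/3}(nmk)^{2/3}) = O(nmk/\sqrt{Z})$, which is subsumed by the first term of $Q_1$. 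Hence $\qsum = O(Q_1)$, i.e. the overall parallel cache complexity matches that of the best sequential algorithm of the same problem; and since $\qinf = (1/p)\qsum$ by \thmref{mm}, the per-processor bound $\qinf = O((1/p)\qsum)$ is immediate, with the load-imbalance refinement supplied by the \Occ{} paragraph of the proof of \thmref{mm}, where the cuboids on each processor form a geometrically decreasing sequence dominated by the top one. The identical argument goes through for \ourAlg{MM-1-Piece}{} via \corref{mm-1-piece}, whose bounds agree up to constant factors. Combining the two clauses yields the \ourprop{}.

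There is no real obstacle here: the only point that needs care is that the minimum over the three surface-area growth regimes is controlled by the $p^{1/3}(nmk)^{2/3}$ branch, and that this branch is precisely calibrated so that $p = O((nmk)/Z^{3/2})$ is exactly the threshold at which it falls below $nmk/\sqrt{Z}$. I would also remark in passing that this range is nonempty and compatible with the standing assumptions $n \ge m \ge k$ and $p = o(n+m+k)$ of \thmref{mm}, so the claimed \perfectspeedup{} range is meaningful.
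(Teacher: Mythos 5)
Your proposal is correct and follows essentially the same route as the paper: the paper's own proof is a one-line observation that the \perfectspeedup{} range is exactly where the memory-independent (surface-area) term of $\qsum$ is subsumed by the memory-dependent term $O(nmk/(L\sqrt{Z}))$, and your substitution $p^{1/3}(nmk)^{2/3} = O(nmk/\sqrt{Z})$ under $p = O((nmk)/Z^{3/2})$ is precisely that calculation made explicit. Your additional verification of \obc{} and the comparison against $Q_1$ from \lemref{seq-co-mm} are harmless elaborations of what \thmref{mm} already establishes.
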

\begin{proof}
    The \perfectspeedup{} range comes when the memory-independent
    bound of $\qsum = O((nm + nk + mk + \min\{pmk, \allowbreak \sqrt{pnmk^2}, \allowbreak p^{1/3} (nmk)^{2/3}\})/L)$ is subsumed
    by the memory-dependent bound of $\qsum = O(nmk/(L\sqrt{Z}))$.
\end{proof}

\paragrf{Discussions: }
%
A straightforward depth-$O(n)$ MM \cite{CormenLeRi09} has overall
parallel cache misses of $O(n^3/(L\sqrt{Z}) + pn(Z/L))$ with high 
probability when scheduled by a Randomized Work-Stealing (RWS) 
scheduler \cite{AcarBlBl00, BlumofeLe99, BlumofeFrJo96a}.
Frigo and Strumpen \cite{FrigoSt09} refined it to 
$O(n^3/(L\sqrt{Z}) + p^{1/3}n^{7/3}/L + pn)$ by using
concave function and Jensen's Inequality. 
They also pointed out that a 
static scheduling, i.e. PA scheduling, of a square MM can yield 
asymptotically less cache misses.
Blumofe et al. \cite{BlumofeFrJo96} designed a PO MM with
an $O(\log^2 n)$ critical-path length, and bounded 
\cite{BlumofeFrJo96a} its sequential cache misses to be 
asymptotically optimal on DAG-consistent
distributed shared memory maintained by the Backer coherence 
protocol.
Cole and Ramachandran \cite{ColeRa12a, ColeRa12} 
proved an $O(n^3/(L\sqrt{Z}) + (p \log p)^{1/3} \cdot n^2/L + 
p \log p)$ overall parallel cache complexity for a 
resource-oblivious algorithm scheduled by a centralized scheduler.
Chowdhury et al. \cite{ChowdhuryRaSi13} proposed a 
Multicore-Oblivious (MO) algorithm on a hierarchical multi-level
caching multicore (HM) model and a network-oblivious (NO)
algorithm on the D-BSP model with similar bounds.
Assuming $n = m = k$, our bounds are asymptotically tighter
than all above PO bounds because all PO bounds include a
non-constant critical-path length in their second term,
which is eliminated by our PA approach.

Classic PA algorithms include $2$D \cite{ChowdhuryRa08, Cannon69}, 
$3$D \cite{AgarwalBaGu95}, or $2.5$D \cite{SolomonikDe11}. 
These algorithms assume a square MM and require that
processor number $p$ be factorizable into two or three
roughly equal numbers.
Aggarwal et al. \cite{AggarwalChSn90} proved a lower bound
as well as a matching $3$D square MM algorithm on their 
shared-memory LPRAM model. Irony et al. \cite{IronyToTi04}
proved a lower bound for $2$D and $3$D square MM algorithms
on a distributed-memory model. 
McColl and Tiskin \cite{McCollTi99} provided a similar
$3$D square MM algorithm on their BSPRAM model.
Solomonik and Demmel 
\cite{SolomonikDe11} coined a $2.5$D square MM algorithm,
which can change its partitioning of computational DAG as 
well as processor grid according to the availability
of memory to achieve optimal communication complexity
on a distributed-memory model. 
Demmel et al. \cite{DemmelElFo13} proved the lower bound
as well as the first communication-optimal 
algorithm for all dimensions of 
rectangular MM. Their proof assumes that processor 
number $p$ is an exact power of $2$ and their algorithm 
is efficient by the proof if all prime factors of p 
can be bounded by a small constant. By contrast, our 
algorithm and proof work for an arbitrary number of 
processors, even when $p$ per se is a large prime number. 
Our algorithm matches the lower bound proved in
\cite{DemmelElFo13}. 
\subsubsecput{ext-distri}{Extension to a Distributed-Memory Computing System}
 
One of the reasons that we choose a PA approach is that PA
algorithms are more portable to both shared-memory and 
distributed-memory computing systems. Though Network-Oblivious
(NO) algorithms by Bilardi et al. \cite{BilardiPiPu16} and
Chowdhury et al. \cite{ChowdhuryRaSi13} are efficient on 
the D-BSP model, provided there is a provably efficient 
folding mechanism. Such a folding mechanism is not available
in practice. 
There are at least two ways to port a \ouralg{} 
to a distributed-memory computing system as follows. 
\begin{enumerate}
    \item If assuming that each processor has an arbitrarily 
        large local disk besides a local memory of size $Z$
        \footnote{
            This assumption can be valid by the virtual
            memory (VM) system (Chap. 9 of \cite{BryantOh15}). 
            A user's program can only access VM, which usually
            resides on a local disk. VM system will bring data
            to physical memory when user accesses it. A
            $32$-bit system usually has a $2^{32}$-byte VM, 
            while a $64$-bit system usually has a VM of size 
            $2^{64}$ or $2^{48}$ bytes, all of which are usually
            much larger than corresponding physical memory size.
        }
        , a \ouralg{}'s communication can be separated into two 
        phases. 

        The first phase will be an inter-processor message 
        passing, the bandwidth of which will be the 
        memory-independent communication bound proved for a
        \ouralg{}. In the case of \ourAlg{MM-1-piece}, the 
        latency bound will be $O(\log p)$. For each cuboid,
        the read of two side faces, i.e. sub-matrices of $A$ 
        and $B$ requires only $O(1)$ messages by a proper
        packing / unpacking. 
        The $O(\log p)$ latency comes from writing
        intermediate results back to $C$ because in the worst 
        case all cuts are on the height of initial cuboid, hence 
        requires $O(\log p)$ rounds for reduction.
        
        The second phase will be a local sequential computation,
        which will incur only sequential cache misses between 
        local memory / disk pair. The local bandwidth of
        this phase will be the memory-dependent or 
        memory-independent bound of each \ouralg{},
        depending on the relative size of surface area of cuboid 
        with respect to the local memory size $Z$.

    \item If assuming a distributed-memory model as in 
        \cite{DemmelElFo13}, i.e. each processor has only one
        local memory of size $Z$ with no local disk,
        then the bandwidth bound will still be the same as
        the communication bound proved for each \ouralg{}.
        Take the \ourAlg{MM-1-Piece}{} as an example, the latency
        bound will be a factor of $Z$ lower than the bandwidth
        bound as follows.
        The number of messages to compute a cuboid will be 
        $\min (V'/V_Z) \cdot \log p$, where $V'$ is the volume of 
        cuboid, which is $O(nmk/p)$, 
        and $V_Z$ is the largest volume of a cuboid that has 
        an $O(Z)$ surface area, which stands for the largest 
        amount of 
        multiplications that can be done by having $O(Z)$ elements.
        So $\min (V'/V_Z)$ accounts for the minimal number
        of messages for reading sub-matrices of $A$ and $B$, 
        and $\log p$ is for writing back to $C$.
        According to Loomis-Whitney Inequality 
        \cite{DemmelElFo13, LoomisWh49}, the largest volume 
        that a cuboid with surface area of $O(Z)$ can have is 
        $O(Z^{3/2})$, i.e. when the cuboid is a cube. 
        So the number of messages
        reduce to $O(nmk/(pZ^{3/2})\cdot\log p)$.
        The latency bound of CARMA \cite{DemmelElFo13} is
        different from this bound because they assume matrices
        $A$, $B$, and $C$ are stored distributedly among $p$ 
        processors' memory, hence not every
        intermediate results have to be written back to $C$.
\end{enumerate}

\subsubsecput{hetero-mm}{Extension to Heterogeneous Computing System}
 
The heterogeneous computing system considered in this section
makes following modifications to the \ourModel{}.
It has $p$ processors, each of which can have a different but
\emph{fixed} throughput.
In the case of MM, it means that if we execute the same-sized
MM sequentially on every computing cores, the throughput, 
say FLOPS (Floating Point Operations Per Second), of all
cores can be normalized to $t_1 : t_2 : \cdots : t_p$. 
For simplicity, we assume that this thoughput ratio is fixed
and does not change on different problem sizes.
Without loss of generality, we assume that the throughput ratio 
is in a monotonically non-decreasing order. 
That is, $t_1 = 1$, $\forall i, j \in [1, p]$, we have
$t_i \leq t_j$ if $i \leq j$,
where $t_i, t_j \geq 1$ are arbitrary real numbers.

We construct our \ourAlg{Hetero-MM}{} based on the $2$-way
divide-and-conquer procedure of \ourAlg{MM}{} as follows.
The intuition is to assign cuboids to processors proportional
to their throughput ratio, plus that all cuboids assigned
to any processor still keep a geometrically decreasing
sequence in terms of both volume and surface area.
Firstly, we normalize the throughput ratio 
to fraction ratio of $f_1 : f_2 : \cdots : f_p$,
where $f_i = t_i / \sum_{j=1}^{p} t_j$. Each fraction number
$f_i$, where $i \in [1, p]$, indicates the 
fraction of total computational loads to be assigned to 
processor-$i$. 
Secondly, we still perform a similar recursive $2$-way 
divide-and-conquer procedure to that of \ourAlg{MM}{}. 
In addition, we associate each cuboid with a real number to 
indicate its fraction of total computational loads. For 
example, the initial cuboid
of $n \times m \times k$ will have a fraction number $1$,
a cuboid of $n' \times m' \times k'$ will have a fraction 
number of $f' = \frac{n'm'k'}{nmk}$, and so on.
Thirdly, in the recursive $2$-way divide-and-conquer,
whenever a cuboid's fraction number $f'$ is less than or equal
some processor's remaining ratio $f_i$, we make an assignment
and adjust the processor's remaining ratio by $f'$, i.e.
$f_i = f_i - f'$. This recursive procedure
repeats until all remaining cuboids are of constant sizes, 
in which
case they will be assigned to all processors in a round-robin 
fashion.

By the modification, it's not hard to check that 
\punt{
the 
intuition holds because the invariant that each processor 
will not have more 
than one non-constant cuboid of the same size by the $2$-way 
divide-and-conquer still holds.
The intuition actually
suggests the following modification to the \obc{} and \occ{}
properties. That is, 
}
the amount of computation and communication 
assigned to every processors should be proportional to their 
throughput ratio, hence the running time on every processors
are identical. As a consequence, the algorithm will reach an
ideal speedup. 
\punt{
as shown in \corref{hetero-mm}.
}

\begin{corollary}
    The \ourAlg{Hetero-MM}{} multiplies an $n$-by-$k$ 
    matrix
    $A$ with an $k$-by-$m$ matrix $B$ on a heterogeneous computing
    system with $p$ processors of throughput ratio 
    $t_1 : t_2 : \cdots : t_p$, where $t_1 = 1$ and $t_i \geq t_1$
    for $1 < i \leq p$, in optimal $\tsum = O(nmk)$ work,
    with an $O(t^{\sum})$ speedup with respect to a sequential 
    execution on processor-$1$, 
    where $t^{\sum} = \sum_{j=1}^{p} t_j$.
\label{cor:hetero-mm}
\end{corollary}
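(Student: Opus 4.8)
The plan is to reuse the skeleton of the proof of \thmref{mm}, replacing the uniform allocation of cuboids by an allocation in proportion to the normalized throughput fractions $f_i = t_i / t^{\sum}$, and then to read off the speedup from a per-processor work bound. For $\tsum = O(nmk)$, note that \ourAlg{Hetero-MM}{} performs exactly the scalar multiplications of \lemref{seq-co-mm} together with the additions that arise whenever a cut is made on the height of a cuboid. As in \thmref{mm}, a height cut of a cuboid of volume $V$ adds only $O(V^{2/3})$ additions (one face), these faces shrink geometrically down the recursion, and the reductions are fully parallelizable; hence the extra work is $o(nmk)$, the total is $O(nmk)$, and this matches the sequential lower bound, so the overall work is optimal.

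Next I would bound the work $W_i$ delivered to processor $i$. By the assignment rule a cuboid of fraction $f' = 2^{-d}$ (volume $f'\cdot nmk$) is handed to processor $i$ only while $f' \le f_i^{\mathrm{rem}}$, and $f_i^{\mathrm{rem}}$ is then decreased by $f'$; adding the round-robin disposal of the constant-size residue gives $W_i \le f_i\cdot nmk + O(1)$. Dually, since every multiplication is eventually assigned, $\sum_i W_i = \Theta(nmk) = \sum_i f_i\cdot nmk$, so no processor is left with an asymptotically significant deficit relative to $f_i\cdot nmk$. As in \thmref{mm}, a processor retains at most a bounded number of cuboids of any one volume, so the cuboids on processor $i$ form an almost geometrically decreasing sequence in volume; executing them sequentially on processor $i$ therefore costs time $\Theta(W_i/t_i)$, dominated by its largest cuboid, whose volume is $\Theta(f_i\cdot nmk)$, i.e. time $\Theta(nmk/t^{\sum})$ on every processor.

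It remains to assemble the wall-clock bound. Because the fractions routed to the two halves of any height cut are proportional to their (equal) volumes, the two sub-computations complete within a constant factor of one another, so each join reduction incurs no idling beyond the $O((nmk)^{2/3})$ cost of its face addition; the chain of such joins has depth $O(\log(nmk))$, contributing a term subsumed by the leading one. Hence the running time on $p$ processors is $\max_i W_i/t_i + O\!\big(\log(nmk)\cdot (nmk)^{2/3}/t^{\sum}\big) = nmk/t^{\sum} + o(nmk/t^{\sum}) = O(nmk/t^{\sum})$. Since a sequential run of \lemref{seq-co-mm} on processor-$1$ takes $\Theta(nmk)$ time (as $t_1 = 1$), the speedup is $\Omega(t^{\sum})$; it is also at most $t^{\sum}$ because no machine can exceed the aggregate throughput $t^{\sum}$, so the speedup is $\Theta(t^{\sum})$, which in particular yields the claimed $O(t^{\sum})$ bound.

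The step I expect to be the main obstacle is showing that the greedy, fraction-driven allocation never starves a processor: that matching dyadic cuboid fractions $2^{-d}$ against the generally non-dyadic remaining budgets $f_i^{\mathrm{rem}}$ leaves each processor only a base-case-sized residue, and that this can be done while still splitting each height cut's processor list into throughput-balanced halves so that joins do not stall. This is the heterogeneous analogue of the ``$\pm 1$ row/column'' imbalance bookkeeping used in \thmref{mm} and \thmref{1D}, and carrying it out cleanly will require a water-filling argument, inducting on the depth of the pruned divide-and-conquer tree; everything else is a direct transcription of the \thmref{mm} analysis.
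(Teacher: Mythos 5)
The paper gives no proof of this corollary beyond the assertion immediately preceding it---that the fraction-driven assignment makes each processor's load proportional to its throughput, so all processors finish at the same time---and your proposal is exactly that argument carried out in detail, so you are on the paper's intended route (and considerably more explicit than the paper itself). The one step you flag as open, bounding the round-robin residue, does close without a full water-filling induction: after the greedy phase at depth $d$ every processor's remaining budget is below $2^{-d}$, and since the total unassigned fraction always equals the total remaining budget, fewer than $p$ cuboids survive each subsequent depth, so the base-case residue is $O(p)$ constant-size cuboids, i.e.\ $O(1)$ work per processor. Note also that for the claimed speedup you only need the upper bound $W_i \le f_i\, nmk + O(1)$, which gives makespan $\max_i W_i/t_i \le nmk/t^{\sum} + O(1)$ directly (with reductions deferred to a final, fully parallel pass as in \thmref{mm}); the matching lower bound on $W_i$ and the largest-cuboid domination would only be needed to transfer the cache bounds of \thmref{mm}, which the corollary does not assert.
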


\paragrf{Discussions: }
Our model for heterogeneous computing systems is simpler than
that in Ballard et al. \cite{BallardDeGe11}. Their model 
considers four parameters, i.e. $\beta_i$
(inverse bandwidth), $\alpha_i$ (latency), $M_i$ (local memory
size), and $\gamma_i$ (flops per second), for $1 \leq i \leq p$. 
We simplify it to just throughput ratio because we feel that 
the parameters $\alpha_i$, $\beta_i$, and $\gamma_i$ are closely
related in any real system and are usually proportional to each
other in an algorithm's complexity bound. They develop a 
heterogenous algorithm for square MM, and our
\ourAlg{Hetero-MM}{} works for a rectangular MM of all dimensions.
The same scheme extends to heterogeneous Strassen as well.

Beaumont et al. \cite{BeaumontBeDe19, BeaumontEyLa16} proposed
$2$D and $3$D approximate algorithms for partitioning square
MM on a heterogeneous computing system, with a proof that an
exact partitioning is NP-Complete. Their method is Non-Rectangular
Partitioning and has a better approximate ratio than the
Rectangular Partitioning proposed by Nagamochi and Abe 
\cite{NagamochiAb07}.

\subsecput{strassen}{\ourAlgS{Strassen}}

Assuming the existence of an inverse operation of addition,
Strassen's algorithm \cite{Strassen69} is a $2$-way
divide-and-conquer algorithm that recursively reduces
$1$ multiplication of two $n$-by-$n$ matrices to
$7$ multiplications of two $n/2$-by-$n/2$ matrices
plus a constant number of matrix additions and subtractions 
on a ring as follows.

\vspace*{-.5cm}
\begin{align*}
    C = 
    \begin{bmatrix}
        C_{00} & C_{01} \\
        C_{10} & C_{11}
    \end{bmatrix}
    ,
    A = 
    \begin{bmatrix}
        A_{00} & A_{01} \\
        A_{10} & A_{11}
    \end{bmatrix} 
    ,
    B = 
    \begin{bmatrix}
        B_{00} & B_{01} \\
        B_{10} & B_{11}
    \end{bmatrix} 
\end{align*}
%
\begin{tabular}{lll}
$S_1 = A_{00} \oplus A_{11}$ & $S_2 = A_{10} \oplus A_{11}$ & $S_3 = A_{00}$ \\
$S_4 = A_{11}$ & $S_5 = A_{00} \oplus A_{01}$ & $S_6 = A_{10} \ominus A_{00}$ \\
$S_7 = A_{01} \ominus A_{11}$ & $T_1 = B_{00} \oplus B_{11}$ & $T_2 = B_{00}$ \\
$T_3 = B_{01} \ominus B_{11}$ & $T_4 = B_{10} \ominus B_{00}$ & $T_5 = B_{11}$ \\
$T_6 = B_{00} \oplus B_{01}$ & $T_7 = B_{10} \oplus B_{11}$ &  
\end{tabular}
\begin{tabular}{cc}
$M_r = S_r \otimes T_r$, & $1 \leq r \leq 7$
\end{tabular}
\hspace*{-.25cm}
\begin{tabular}{ll}
$C_{00} = M_1 \oplus M_4 \ominus M_5 \oplus M_7$ & $C_{01} = M_3 \oplus M_5$ \\
$C_{10} = M_2 \oplus M_4$ & $C_{11} = M_1 \oplus M_3 \ominus M_2 \oplus M_6$
\end{tabular}
\punt{
\begin{align}
    C_{00} &= M_0 \oplus M_3 \ominus M_4 \oplus M_6\nonumber\\
    C_{01} &= M_2 \oplus M_4\nonumber\\
    C_{10} &= M_1 \oplus M_3\nonumber\\
    C_{11} &= M_0 \ominus M_1 \oplus M_2 \oplus M_5 \label{eq:strassen-merge}
\end{align}

\setlength{\columnsep}{0pt}
\begin{multicols}{3}
\begin{align}
    M_0 &= S_0 \otimes T_0\nonumber\\
    M_1 &= S_1 \otimes T_1\nonumber\\
    M_2 &= S_2 \otimes T_2\nonumber\\
    M_3 &= S_3 \otimes T_3\nonumber\\
    M_4 &= S_4 \otimes T_4\nonumber\\
    M_5 &= S_5 \otimes T_5\nonumber\\
    M_6 &= S_6 \otimes T_6\label{eq:strassen-times-M}
\end{align}
\vfill\null
\columnbreak
\begin{align}
    S_0 &= A_{00} \oplus A_{11}\nonumber\\
    S_1 &= A_{10} \oplus A_{11}\nonumber\\
    S_2 &= A_{00}\nonumber\\
    S_3 &= A_{11}\nonumber\\
    S_4 &= A_{00} \oplus A_{01}\nonumber\\
    S_5 &= A_{10} \ominus A_{00}\nonumber\\
    S_6 &= A_{01} \ominus A_{11}\label{eq:strassen-additions-A}
\end{align}
\vfill\null
\columnbreak
\begin{align}
    T_0 &= B_{00} \oplus B_{11}\nonumber\\
    T_1 &= B_{00}\nonumber\\
    T_2 &= B_{01} \ominus B_{11}\nonumber\\
    T_3 &= B_{10} \ominus B_{00}\nonumber\\
    T_4 &= B_{11}\nonumber\\
    T_5 &= B_{00} \oplus B_{01}\nonumber\\
    T_6 &= B_{10} \oplus B_{11}\label{eq:strassen-additions-B}
\end{align}
\end{multicols}
}

We can view the computation of Strassen's algorithm as a 
cube of dimensions $n$-by-$n$-by-$n$, where the two side faces
stand for the input matrices $A$ and $B$, and the bottom face
stands for the output matrix $C$, respectively.
Our \ourAlgS{Strassen}{} is then a pruned BFS traversal of 
a $7$-way divide-and-conquer tree as follows. 
Each node of tree stands for a matrix multiplication, which is
also called a cube in our description, and
the seven children nodes of it are the seven ($7$) derived 
smaller-scale cubes.
\punt{
The overheads of constant number of matrix additions and 
subtractions of each node are charged to corresponding 
multiplications.
}
All intermediate matrices, i.e. $S$, $T$, and $M$, are held
in temporary space so that all derived nodes of the
same depth can run concurrently. 
As soon as some depth contains equal or more than $p$
unassigned nodes, exact $p$ of them will be pruned and assigned 
to $p$ processors in a round-robin fashion. The rest of
nodes, if any, will go to the next round of division. This 
procedure repeats until all nodes on the same depth are of 
base (constant) sizes, in which case all of them will be 
pruned and assigned in a round-robin fashion.
An assigned node stops any further parallel 
divide-and-conquer and will be executed by the cache-oblivious
sequential Strassen's algorithm \cite{FrigoLePr12} on the 
assigned processor.
The entire procedure is similiar to that shown in
\figref{pruned-bfs-pic}, except that it is now a a $7$-ry tree.

\begin{theorem}
    The \ourAlgS{Strassen}{} multiplies two
    $n$-by-$n$ matrices in optimal $\tsum = O(n^{\omg})$ work,
    optimal $\tinf = O(n^{\omg}/p)$ time, 
    using $O(p^{0.29} n^2)$ temporary space,
    $\qsum = O(n^{\omg}/(LZ^{\omg / 2 - 1}) + n^2/(Lp^{2/\omg - 1}))$, and  
    $\qinf = (1/p) \qsum$, 
    where $\omg = \log_2 7$ and $0.29 \approx 1 - \log_7 4$,
    assuming $p = o(n)$.
    The \perfectspeedup{} range is $n = \Omega(Z)$. 
\label{thm:strassen}
\end{theorem}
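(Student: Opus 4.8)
The plan is to specialize the three-part argument of \thmref{mm} to the square, all-dimensions-equal setting of Strassen, which is actually cleaner since no multi-phase surface-area analysis is needed. First I would establish \obc{}. The work bound $\tsum = O(n^{\omg})$ is immediate: each assigned cube (node of the pruned $7$-way tree) is executed by the sequential cache-oblivious Strassen's algorithm \cite{FrigoLePr12}, and the works of the nodes along the pruned tree telescope to the work of sequential Strassen. For $\tinf$, note that whenever a depth first contains $\ge p$ ready cubes the algorithm assigns \emph{exactly} $p$ of them round-robin, so each processor gets exactly one cube of that depth; and when a frontier consists only of base cases there are fewer than $7p$ of them, hence at most a constant per processor. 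Thus the multiset of cubes handed to any two processors differ by at most one cube at some depth (an asymptotically smaller term) or a constant, giving $\tinf = O(\tsum/p) = O(n^{\omg}/p)$.

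Next I would establish \occ{}. Let $d^\ast$ be the first depth containing $\ge p$ cubes, so $p \le 7^{d^\ast} < 7p$ and therefore $2^{d^\ast} = 7^{d^\ast/\omg} = \Theta(p^{1/\omg})$. A depth-$d$ cube has side $n/2^d$, work $\Theta((n/2^d)^{\omg})$, and surface area $3(n/2^d)^2$. Since a processor receives at most one cube per depth, the cubes on a fixed processor form a sequence whose works shrink by at least a factor of $7$ and whose surface areas shrink by at least a factor of $4$ from one to the next; both are geometric, so the top cube, of depth $d^\ast$, dominates, with work $\Theta(n^{\omg}/p)$ and surface area $\Theta(n^2/p^{2/\omg})$. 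Plugging a cube of side $m$ into the sequential cache-oblivious Strassen bound $O(m^{\omg}/(LZ^{\omg/2-1}) + m^2/L)$ \cite{FrigoLePr12} and summing the geometric series over all cubes on the processor yields $\qinf = O(n^{\omg}/(pLZ^{\omg/2-1}) + n^2/(Lp^{2/\omg}))$; as the analysis is per processor and caches are non-interfering, $\qsum = p\,\qinf = O(n^{\omg}/(LZ^{\omg/2-1}) + n^2/(Lp^{2/\omg-1}))$. Exactly as in \thmref{mm}, the constantly many matrix additions/subtractions at each node touch only $\Theta((n/2^d)^2)$ data against $\Theta((n/2^d)^{\omg})$ multiplication work and parallelize, so their cost is charged to the processors doing the products and does not affect the asymptotics.

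The remaining claims come for free from this bookkeeping. To let every depth of the pruned tree run concurrently the algorithm keeps the $S_r$, $T_r$, $M_r$ intermediates alive; a depth-$d$ node contributes $\Theta((n/2^d)^2)$ such space and there are $7^d$ nodes, so the total is $\sum_{d=0}^{d^\ast}\Theta(7^d (n/2^d)^2) = \Theta(n^2 (7/4)^{d^\ast}) = \Theta(n^2 p^{1-\log_7 4}) = O(p^{0.29} n^2)$, which is also the total surface area of the $p$ leaf cubes. Finally \perfectspeedup{} follows as in \corref{mm-p-speedup}: it holds on the range where the memory-independent term $n^2/(Lp^{2/\omg-1})$ of $\qsum$ is absorbed by the memory-dependent term $n^{\omg}/(LZ^{\omg/2-1})$, i.e. $n = \Omega(Z)$.

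The delicate step is the surface-area bookkeeping: tracking $2^{d^\ast} = \Theta(p^{1/\omg})$ through the irrational exponent $\omg = \log_2 7$ (so $p$ need not be a power of $7$), and checking that "one cube per depth" really makes the per-processor surface-area series geometric even when a processor skips a depth whose residual frontier has fewer than $p$ cubes --- in that case the next cube it receives is at least one level deeper, so the bound $\sum_i 4^{-i}$ still applies and the top cube still dominates. Everything else is routine specialization of the \thmref{mm} proof.
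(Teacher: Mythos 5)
Your proposal is correct and follows essentially the same route as the paper's (much terser) proof: a per-processor geometric decrease in volume and surface area, domination by the top-level cube, the sequential cache-oblivious Strassen bound applied cube by cube, and the $(7/4)^{d}$ geometric sum for the temporary space. The one quibble is your claim that each processor gets \emph{exactly one} cube per assignment depth: the algorithm actually hands out all full batches of $p$ at a qualifying depth (so each processor can receive up to six same-sized cubes, with the leftover frontier kept below $p$ --- assigning only a single batch of $p$ would let the unassigned frontier grow without bound and break the surface-area accounting), but this is exactly the constant factor of $6$ the paper itself flags and it does not affect any of your asymptotic bounds.
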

\begin{proof}
    The conclusion of \obc{} is clear from the algorithm, and the
    property of 
    \occ{} follows by showing that the sequence of cubes, i.e.
    multiplications, assigned 
    to each processor forms an almost geometrically decreasing 
    sequence in terms of volume, i.e. $O(n^{\omg})$, and surface 
    area, i.e. $O(n^2)$, up to a constant factor of $6$, and
    that the cache complexity of cache-oblivious sequential 
    Strassen's algorithm \cite{FrigoLePr12} is proportional to 
    the volume when
    the sizes of input and output, i.e. $3n^2$, is larger than
    the cache size $Z$, and proportional to the surface
    area otherwise.
    So the overheads of top-level nodes dominate on each processor.
    The overheads of constant number of matrix additions and
    subtractions of each node can be charged to corresponding
    multiplications.
    The temporary space before the first assignment of $p$ nodes
    is 
    $3 \cdot 7 \cdot n^2 \cdot \sum_{i = 0}^{\lceil \log_7 p \rceil} (7/2^2)^i = O(p^{\log_7 (7/4)} n^2) \approx O(p^{0.29} n^2)$,
    where $3 \cdot 7 \cdot n^2$ is the temporary space for 
    top-level recursion (to hold $S$, $T$, and $M$), and the 
    $\sum$ is to 
    accumulate over $\lceil \log_7 p \rceil$ recursion levels
    before the first assignment.
    Since it is pruned BFS traversal, later space requirement
    after the first assignment will be dominated.
    The \perfectspeedup{} range comes when the memory-dependent
    bound dominates.
\end{proof}

From \ourAlgS{Strassen}{}, we can see that
after first $i_1 = \lceil \log_7 p \rceil$ rounds of $7$-way
branching, each processor will be assigned up to $6$ same-sized
cubes, and will get the next assignment after another
$i_2 - i_1 = \lceil \log_7 (p / (7^{\lceil \log_7 p \rceil} 
- p)) \rceil$ rounds of $7$-way branching, and so on. 
If we denote the number of rounds that 
yields the $j$-th assignment by $i_j$, which we call 
$j$-th super-round, we make the following changes.
The new algorithm will stop parallel divide-and-conquer after 
$\sr$ super-rounds, where $\sr$ is some constant to be 
determined later. If there are still unassigned
cubes, the algorithm assigns all of them
to $p$ processors in a round-robin fashion.
Ignoring constant, the maximal possible difference
in computational loads among different 
processors is $f_{\text{comp}} = 1 - 
\frac{\sum_{j=1}^{\sr} (n/2^{i_j})^{\omg}}{\sum_{j=1}^{\sr} (n/2^{i_j})^{\omg} + (n/2^{i_\sr})^{\omg}} 
\leq 1 - \frac{(n/2^{i_1})^{\omg}}{(n/2^{i_1})^{\omg} 
+ (n/2^{i_\sr})^{\omg}} \leq 1 - \frac{2^{\sr - 1}}{2^{\sr - 1} + 1}$
The last inequality is because each super-round contains at
least one round of $7$-way branching.
We can see that $f_{\text{comp}}$ can be made arbitrarily
close to $0$ with the increase of $\sr$.
A similar conclusion applies to differnce in cache complexity 
as well.
Note that $\sr$ depends on processor number $p$, 
but is independent of problem size $n$.
These changes make our improved \ourAlg{Strassen-Const-Pieces}{}
(\corref{strassen-const-pieces}).
 
\begin{corollary}
    The \ourAlg{Strassen-Const-Pieces}{} 
    multiplies two $n$-by-$n$ matrices, by having only 
    constant pieces of cubes on each processor,
    in optimal $O(n^{\omg})$ work,
    optimal $O(n^{\omg}/p)$ time,
    using $O(p^{0.29} n^2)$ temporary space,
    $\qsum = O(n^{\omg}/(LZ^{\omg / 2 - 1}) + n^2/p^{2/\omg - 1})$, and 
    $\qinf = (1/p) \qsum$, 
    where $\omg = \log_2 7$ and $0.29 \approx 1 - \log_7 4$,
    assuming $p = o(n)$.
\label{cor:strassen-const-pieces}
\end{corollary}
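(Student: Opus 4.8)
I would derive this as a short corollary of \thmref{strassen}, since the algorithm is the same pruned $7$-way BFS except that it is cut off after the constant number $\sr$ of super-rounds and the surviving cubes are then dealt out round-robin. First I would pin down the structure. Each assignment prunes at most $(c{-}1)p=6p$ cubes and spreads them round-robin over $p$ processors, so a processor gets at most $6$ cubes at each of the $\sr$ super-rounds; after the $\sr$-th super-round strictly fewer than $p$ cubes remain unassigned (the first BFS depth with at least $p$ nodes has fewer than $7p$, and pruning leaves a remainder $<p$), all of them of the super-round-$\sr$ size, and handing these $<p$ cubes out round-robin adds at most one cube per processor. Hence each processor holds at most $6\sr+1=O(1)$ cubes, which is precisely the ``constant pieces'' claim (and, in the distributed translation, what keeps the latency at $O(\log p)$).

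Next I would redo the volume/surface-area accounting of \thmref{strassen}. A single round of $7$-way Strassen branching turns a cube of side $n'$ (work $\Theta(n'^{\omg})$, sub-block footprint $3n'^{2}$) into seven cubes of side $n'/2$, dividing the per-cube work by $7$ and the per-cube footprint by $4$; each super-round contains at least one such round, so the per-cube work (resp.\ footprint) at super-round $j{+}1$ is at most $\frac17$ (resp.\ $\frac14$) of that at super-round $j$. Since a processor carries $\le 6$ cubes per super-round plus $\le 1$ leftover cube of super-round-$\sr$ size, summing these blocks of $\le 6$ geometrically shrinking terms over the $\le\sr$ super-rounds shows the cubes on any processor form an almost geometrically decreasing sequence (up to the constant factor $6$) in both volume and footprint, dominated by the super-round-$1$ cubes. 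With $i_{1}=\lceil\log_{7}p\rceil$ and $2^{i_{1}}=\Theta(p^{1/\omg})$, that dominant cube has volume $\Theta((n/2^{i_{1}})^{\omg})=\Theta(n^{\omg}/p)$ and footprint $\Theta((n/2^{i_{1}})^{2})=\Theta(n^{2}/p^{2/\omg})$. Feeding each of the $O(1)$ cubes on a processor to the sequential cache-oblivious Strassen's algorithm \cite{FrigoLePr12} (cost proportional to volume when its $3n'^{2}$ footprint exceeds $Z$, and to footprint otherwise) and summing the dominated geometric tail then yields the claimed $\qinf$ and $\qsum=p\,\qinf$. The work is $\tsum=O(n^{\omg})$ and the time on any processor $\tinf=O(n^{\omg}/p)$; the load imbalance among processors is the quantity $f_{\text{comp}}\le 1-\frac{2^{\sr-1}}{2^{\sr-1}+1}=\frac{1}{2^{\sr-1}+1}$ computed just before the statement, a constant depending on $p$ (through $\sr$) but not on $n$, which we drive arbitrarily close to $0$ by choosing $\sr$ large enough — this is what fixes $\sr$.

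For the temporary space I would observe that the peak live allocation is $O(p^{0.29}n^{2})$, exactly as in \thmref{strassen}: unfolding the BFS to depth $i_{1}$ keeps $3\cdot 7\cdot n^{2}\sum_{i=0}^{i_{1}}(7/4)^{i}=O((7/4)^{i_{1}}n^{2})=O(p^{\,1-\log_{7}4}n^{2})=O(p^{0.29}n^{2})$ intermediates alive, and every later phase is no larger: after the first assignment only $<p$ cubes continue in the BFS, the $p$ assigned processors run sequential Strassen on cubes of footprint $\Theta(n^{2}/p^{2/\omg})$ for a total of $O(p\cdot n^{2}/p^{2/\omg})=O(p^{0.29}n^{2})$, and each of the remaining $\sr-1$ super-rounds unfolds from $<p$ cubes of a strictly smaller size, so its peak is again $O(p^{0.29}n^{2})$; summing $\sr=O(1)$ such phases keeps the total $O(p^{0.29}n^{2})$.

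\textbf{Main obstacle.} The one point that needs genuine care is that truncating the pruned BFS after a \emph{constant} number $\sr$ of super-rounds — rather than running it to base cases as in \thmref{strassen} — costs nothing in the $O(\cdot)$ bounds: this works only because work, cache misses, and peak temporary space are each dominated by the super-round-$1$ cubes, so the geometric tail that is discarded is lower order, and $\sr$ controls merely the constant in the work/cache load imbalance (and the $O(\log p)$ latency used in the distributed discussion). A secondary technicality worth spelling out is that the $<p$ leftover super-round-$\sr$ cubes, being dealt out round-robin, add at most one cube per processor and therefore neither break the $O(1)$-piece count nor spoil the geometric-decrease invariant, so the per-processor totals are unchanged up to a constant factor.
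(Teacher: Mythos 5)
Your proposal is correct and follows essentially the same route as the paper, which states this corollary without a separate proof and relies on the argument of \thmref{strassen} together with the preceding discussion of super-rounds and the $f_{\text{comp}}$ bound: geometric decrease of cube volumes and footprints, dominance of the super-round-$1$ cubes, the constant per-processor piece count, and the identical temporary-space summation. Your more explicit accounting (the $6\sr+1$ piece bound, $2^{i_1}=\Theta(p^{1/\omg})$, and the observation that the $<p$ leftover cubes add at most one super-round-$\sr$-sized cube per processor) simply fills in details the paper leaves implicit; note only that your second $\qsum$ term carries the factor $1/L$ as in \thmref{strassen}, whereas the corollary's statement omits it, which appears to be a typo in the paper.
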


In practice, we can make $\sr$ a tuning parameter. 
For example, if $\sr = 8$, the load imbalance among different
processors, if any, will be less than $1\%$.

\paragrf{Discussions: }\\
The load imbalance of \ourAlgS{Strassen}{} among different 
processors is an asymptotically smaller term, if any,
so is optimal in a shared-memory setting;
However, if translated to a distributed-memory setting, they may
have an $O(\log n)$ latency bound; By contrast, 
\ourAlg{Strassen-Const-Pieces}{} may
have an arbitrarily small constant-factor difference, 
but reduces latency to $O(\log p)$ in a distributed-memory 
setting.
The partitioning overheads of both our new Strassen's algoritms
can be fully parallelized and charged to each and every derived 
cubes as in the case of \ourAlg{1D}{} (see \figref{paco-1D-code}
for an analogue).
%
\paragrf{Open Problem on Parallelizing Strassen: }
Ballard et al. \cite{BallardDeHo12} developed a CAPS (
Communication-Avoiding Parallel Strassen) algorithm based
on interleaving of BFS/DFS steps on a distributed-memory model.
Their algorithm assumes that processor number $p$ is
an exact power of $7$. Lipshitz et al. \cite{LipshitzBaDe12} 
later improved it to a multiple of $7$ with no large prime
factors, i.e. $p = m \cdot 7^k$, where $1 \leq m < 7$ and 
$1 \leq k$ are integers, by a hybrid of Strassen and classic 
$O(n^3)$ MM algorithm.
They raised an open question in their paper 
(Sect. $6.5$ of \cite{BallardDeHo12}) whether a parallel
Strassen's algorithm can run on an arbitrary number of processors,
attains the computational lower bound exactly, and attains the 
communicational lower bound up to a constant factor.

If translated to a distributed-memory model, our
\ourAlg{Strassen-Const-Pieces}{} is an almost exact solution
to their open question, i.e. it runs concurrently on 
an arbitrary number of processors within a certain range,
attains
computational lower bound up to an arbitrarily small constant
factor, attains bandwidth lower bound up to a constant factor,
and attains the same $O(\log p)$ latency bound as the CAPS
algorithm. Moreover, our \ourAlg{Strassen-Const-Pieces}{} is
pure Strassen.
We further conjecture that this $O(\log p)$ latency bound is 
tight up to a constant factor.
Because in Strassen, each internal node of the $7$-ry tree
requires additional matrix additions and subtractions 
to construct new input matrices to the next level of recursion so
that an $\Omega(1)$ message(s) per node along a critical path 
seems inevitable. 
A parallel Strassen requires at least an $\Omega(\log p)$ depth 
to derive $\Omega(p)$ cubes of multiplications
\punt{\cite{JaJa92}}. So the $O(\log p)$ latency bound should be
tight up to a constant factor.

\paragrf{More Related Works on Parallel Strassen: }
McColl and Tiskin \cite{McCollTi99} developed a similar 
algorithm to the CAPS \cite{BallardDeHo12, LipshitzBaDe12} 
on their BSPRAM model.
McColl and Tiskin's algorithm is pure theoretical and ignores
certain practical considerations such as what if $p$ is not a
power of $7$.
Cole and Ramachandran \cite{ColeRa12a, ColeRa12} bounded the
overall parallel cache complexity of a
resource-oblivious Strassen, which belongs to the PO class,
to be $O(n^{\omg}/(LZ^{\omg/2 - 1}) + (p \log p)^{1/3} \cdot 
n^2/L + p \log p)$, which is asymptotically larger than all
PA (including PACO) counterparts.
Benson and Ballard \cite{BensonBa15} developed a code generation
tool to automatically implement multiple sequential and 
shared-memory parallel variants of fast MM algorithms.

\subsecput{sort}{\ourAlg{Sort}}

This section considers comparison-based sorting (sorting in 
short) algorithm. 

\begin{lemma} [\cite{FrigoLePr12}]
    There is a \proc{seq-sample-sort} algorithm that sorts
    $n$ elements by comparison in optimal $O(n \log n)$ work,
    and $O(1 + (n/L) (1 + \log_{Z} n))$ cache misses.
    \label{lem:seq-sample-sort}
\end{lemma}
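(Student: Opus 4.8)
The plan is to instantiate \proc{seq-sample-sort} as the cache-oblivious distribution sort of Frigo et al.\ and to verify its two bounds. First I would describe the algorithm: on a contiguous array of $n$ keys it (i) splits the array into $\sqrt{n}$ contiguous runs of length $\sqrt{n}$ and sorts each recursively; (ii) chooses pivots that partition the key space into $q \le \sqrt{n}$ \emph{order-separated} buckets $B_1,\dots,B_q$ (every key of $B_t$ precedes every key of $B_{t+1}$) that are \emph{balanced}, $|B_t| \le 2\sqrt{n}$, via the standard incremental ``watermark'' rule that inserts a fresh pivot whenever a bucket would overflow; (iii) routes the keys of the $\sqrt{n}$ sorted runs into their buckets with a $4$-way divide-and-conquer \proc{distribute} subroutine that simultaneously bisects the index range of source runs and of target buckets; (iv) sorts each bucket recursively and concatenates, with recursion bottoming out at constant size. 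Correctness is immediate: concatenating sorted, order-separated buckets yields a sorted array, and the watermark counting argument gives $q \le \sqrt{n}$.

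For the work bound, steps (ii)--(iv) look at each key $O(1)$ times and handle $q = O(\sqrt{n})$ buckets, costing $O(n)$ in total, so $W(n) \le \sqrt{n}\,W(\sqrt{n}) + \sum_{t=1}^{q} W(|B_t|) + O(n)$ with $\sum_t |B_t| = n$ and $|B_t| \le 2\sqrt{n}$. I would close this by substituting the ansatz $W(m) \le c\,m\log m - d\,m$ and using $\log |B_t| \le \frac{1}{2}\log n + 1$, which works once $d$ is large relative to $c$; this gives $W(n) = O(n\log n)$, meeting the comparison lower bound.

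For the cache bound I would work in the ideal-cache model under the tall-cache hypothesis $Z = \Omega(L^2)$. Once $n \le \alpha Z$ for a small constant $\alpha$ the whole subproblem---input, output and the $O(\sqrt{n})$ auxiliary streams---fits in cache, so $Q(n) = O(1 + n/L)$. For larger $n$ the only nontrivial contribution is step (iii): when \proc{distribute} has recursed down to a subcall that touches $O(\sqrt{Z})$ runs and $O(\sqrt{Z})$ buckets, tall cache guarantees the heads of all these $O(\sqrt{Z}) = O(Z/L)$ streams plus the bucket-count array co-reside, so that subcall is a single pass over the keys it moves; there are $O(n/Z)$ such leaf subcalls, whose $O(1)$ overheads and $O(n/L)$ total key-moving misses add up to $O(1 + n/L)$, while the $O(n/Z)$ internal-node overheads contribute only $O(n/Z) = O(n/L)$. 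With the $O(1 + n/L)$ scans of steps (i), (ii), (iv) this yields $Q(n) \le \sqrt{n}\,Q(\sqrt{n}) + \sum_{t} Q(|B_t|) + O(1 + n/L)$. Measuring the recursion depth as the number of successive square roots that bring $n$ down to $\Theta(Z)$, namely $\Theta(\log\log_Z n)$, observing that the total subproblem size at level $j$ is $O(2^{j}n)$, and observing that the leaf level (contributing $O((n/L)\log_Z n)$) dominates, the $O(1 + m/L)$ per-node charges sum to $O(1 + (n/L)(1 + \log_Z n))$.

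The one step beyond routine substitution is the cache analysis of \proc{distribute}: I must argue that its $4$-way recursion on the (runs $\times$ buckets) product ``self-blocks,'' i.e.\ any subcall whose runs and buckets fit in cache costs just a linear pass---which is precisely where the tall-cache hypothesis $Z = \Omega(L^2)$ enters. Granting that, the work recurrence and the outer sort recurrence are both standard recursion-tree arguments.
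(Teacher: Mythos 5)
Your proposal is correct and is essentially the standard argument for the cited result: the paper itself gives no proof of this lemma, merely citing the cache-oblivious distribution sort of Frigo et al., and your reconstruction (the $\sqrt{n}$-way split, the $2\sqrt{n}$-bounded order-separated buckets, the tall-cache analysis of the recursive \proc{distribute}, and the recursion-tree summation with level-$j$ mass $O(2^{j}n)$ dominated by the leaf level) matches that source's proof. No gaps.
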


Based on the sequential sameple sort \cite{FrigoLePr12} and
the observation that the maximal speedup a parallel algorithm
can attain on a $p$-processor system is $p$-fold if does not
count the caching effect, we have a \ourAlg{Sort}{} operating
on an array $A$ (stored in contiguous locations) of length $n$
as follows. We discuss the differences of our algorithms from
classic ones by the end of section.
\begin{enumerate}
    \item Picking $p-1$ pivots $\langle p_1, p_2, \cdots,
        p_{p-1} \rangle$ uniformly at random from the array 
        as follows.
        \begin{enumerate}
            \item Pick $kp$ samples uniformly at random from 
                the array, where $k$ is an over-sampling 
                ratio to be determined later. 
            \item Sort the $kp$ samples with the 
                \proc{seq-sample-sort} (\lemref{seq-sample-sort}).
            \item Pick every $k$-th sample as the final
                pivots.
        \end{enumerate}

    \item Redistributing elements of array $A$ by the 
        $p-1$ pivots as follows.
        \begin{enumerate}
            \item Each processor works simultaneously on
                a sub-array of length $n/p \pm 1$ of $A$ 
                and partitions it into $p$ partially ordered
                chunks by the $p-1$ pivots.
                That is, after the partitioning, all
                elements of the $i$-th chunk on any processor
                must be between
                the $(i-1)$-th and the $i$-th
                pivots in sorted order, $\forall i \in [1, p]$.
                The $0$-th and $p$-th pivots are defined
                to be $-\infty$ and $+\infty$ respectively, 
                
                This step can actually be performed by using
                a partial sequential quicksort \cite{Hoare62} 
                as follows. 
                Any processor-$i$ firstly partitions the 
                $i$-th sub-array by the 
                $\lceil p/2 \rceil$-th pivot 
                into two chunks such that all elements
                in the first chunk are less than or equal all
                elements in the second chunk. Then, each
                processor uses the
                $\lceil p/4 \rceil$-th and 
                $\lceil 3p/4 \rceil$-th pivots on the 
                first and second chunk, respectively,
                and so on for up to $\lceil \log_2 p \rceil$ 
                levels of recursion.
        
            \item Calculating the exact position of every
                chunk for re-distribution as follows.
                After the first step, we have a $p$-by-$p$
                matrix $[N]$, where each entry $n_{i, j}$ 
                stands for the number of elements of
                the $j$-th chunk on processor-$i$, which
                will be re-distributed to processor-$j$.
                By invoking a sequential prefix sum algorithm
                \punt{, say \ourAlg{Prefix Sum}{} }
                on every column of the matrix $[N]$ 
                simultaneously, we get each chunk's destined
                position for re-distribution.

            \item Performing a parallel matrix transposition 
                like the one in Blelloch et al. 
                \cite{BlellochGiSi10, BlellochLeMa98} to
                redistribute every chunk.
                That is, every processor
                will send $(p-1)$ chunks to other $(p-1)$ 
                processors by an all-to-all communication.
        \end{enumerate}

    \item Sorting locally, i.e. sequentially, on each 
        processor by the \proc{seq-sample-sort}
        (\lemref{seq-sample-sort}).
\end{enumerate}

\begin{theorem}
    The \ourAlg{Sort}{} sorts an array of $n$
    elements by comparison in optimal $\tsum = O(n \log n)$ work, 
    $\tinf = O((1 + \epsilon) n / p \cdot \log n)$ time for an 
    arbitrarily small $\epsilon \in (0, 1)$ with high probability, 
    using $O(p^2)$ temporary space,
    $\qsum = O((n/L) \log_{Z} (n/p))$, 
    and 
    $\qinf = (1/p) \qsum = O((n/pL) \log_{Z} (n/p))$,
    assuming $p \in O(\sqrt{n} / \ln n)$.
\label{thm:sort}
\end{theorem}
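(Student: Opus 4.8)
\paragrf{Outline: }
I would analyze the three phases of the \ourAlg{Sort}{} in turn and then combine them: the overall work $\tsum$ and the overall cache complexity $\qsum$ are the sums of the per-phase costs; the span $\tinf$ is the longest chain through the phases, which I expect to be the local sort of the most heavily loaded processor in the third phase; the temporary space is the $p\times p$ book-keeping of the second phase; and $\qinf = (1/p)\qsum$ follows once each phase is seen to spread its misses (almost) evenly over the $p$ processors. The one randomized ingredient is the classical sample-sort balance lemma: with over-sampling ratio $k = \Theta(\epsilon^{-2}\ln n)$, with probability $1 - n^{-\Omega(1)}$ every one of the $p$ buckets cut out by the $p-1$ chosen pivots receives at most $(1+\epsilon)(n/p)$ input elements. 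I would prove this by a Chernoff bound on the number of sampled elements landing in any would-be oversized bucket, then a union bound over the $\le p$ buckets; the hypothesis $p = O(\sqrt n/\ln n)$ keeps the sample set $kp = O(\sqrt n)$ asymptotically smaller than $n$, which is exactly what makes the first phase negligible.

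\paragrf{Work, span and space: }
For Phase~1, sorting the $O(\sqrt n)$ samples with \proc{seq-sample-sort} (\lemref{seq-sample-sort}) costs $O(\sqrt n\log n)$ work and, being sequential, $O(\sqrt n\log n)$ span, which is $o(n\log n)$ and $o((n/p)\log n)$ respectively. Phase~2(a) partitions each length-$(n/p)$ block by the $p-1$ pivots in $\lceil\log_2 p\rceil$ rounds of balanced binary partition, i.e. $O((n/p)\log p) = O((n/p)\log n)$ work and span per processor; the column prefix sums over the $p\times p$ count matrix and the all-to-all transpose of the $n$ elements add $O(p^2 + n)$ work and $O(p + n/p)$ span. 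For Phase~3, the total work is $\sum_i |B_i|\log|B_i| \le \log n \sum_i |B_i| = n\log n$ \emph{deterministically}, and by the balance lemma the span is $(1+\epsilon)(n/p)\log((1+\epsilon)n/p) = O((1+\epsilon)(n/p)\log n)$ with high probability. Summing yields $\tsum = O(n\log n)$ and $\tinf = O((1+\epsilon)(n/p)\log n)$ w.h.p. The auxiliary space is dominated by the $p\times p$ count matrix and its column prefix-sum arrays, giving $O(p^2)$; the sample array of size $O(\sqrt n)$ is of lower order in the stated range of $p$.

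\paragrf{Cache complexity: }
The third phase dominates: \lemref{seq-sample-sort} applied to a bucket of size $O((1+\epsilon)n/p)$ costs $O(1 + ((n/p)/L)(1 + \log_Z(n/p)))$ misses on a single processor, so its contribution to $\qsum$ is $O(p + (n/L)(1+\log_Z(n/p))) = O((n/L)\log_Z(n/p))$. I would then check that the other two phases do not inflate this. Phase~1 touches only $O(\sqrt n)$ words and so costs $O(1 + (\sqrt n/L)(1+\log_Z n)) = o((n/L)\log_Z(n/p))$ under $p = O(\sqrt n/\ln n)$. For Phase~2(a) I would bound the $p$-way partition of a length-$(n/p)$ block by the same argument that bounds the distribute step of cache-oblivious sample sort: a binary partition on a sub-chunk incurs misses only while that chunk does not fit in a cache of size $Z$, so the per-processor cost is $O(((n/p)/L)(1+\log_Z p))$; summed this is $O((n/L)(1+\log_Z p))$, and since $\log_Z p = O(\log_Z(n/p))$ in the stated range, it is again $O((n/L)\log_Z(n/p))$. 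The prefix sums and the cache-oblivious transpose contribute $O(p^2/L + n/L)$ misses, which the hypothesis on $p$ makes of lower order when $Z$ and $L$ are fixed with respect to $n$. Hence $\qsum = O((n/L)\log_Z(n/p))$, and because every phase distributes its misses evenly (Phase~3's buckets are balanced w.h.p., Phase~2(a)'s blocks are equal, and Phase~1 is $o(\qinf)$ even on a single processor), $\qinf = (1/p)\qsum = O((n/pL)\log_Z(n/p))$.

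\paragrf{Main obstacle: }
The book-keeping for work, span and space is routine. The delicate step is the cache analysis of the redistribution, namely showing that both the $p$-way partition and the all-to-all transpose can be made cache-oblivious with only an $O(\log_Z p)$ overhead rather than the $O(\log_Z n)$ incurred by a processor-oblivious sorter, so that the $\log_Z$ factor in $\qsum$ stays at $\log_Z(n/p)$; this is precisely where $p = O(\sqrt n/\ln n)$ enters. The remaining care-points are fixing the exact over-sampling ratio needed for the $(1+\epsilon)$ span and verifying that the $O(p^2/L)$ transpose/prefix-sum term is absorbed into $O((n/L)\log_Z(n/p))$.
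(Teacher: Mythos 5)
Your proposal is correct and follows essentially the same route as the paper's proof: the same three-phase accounting of work, span, temporary space and cache misses (with the per-processor sequential sample sort of a balanced bucket dominating both $\tinf$ and $\qinf$), and the same Chernoff-based over-sampling lemma with $k = \Theta(\ln n)$ guaranteeing that every bucket receives at most $(1+\epsilon)n/p$ elements w.h.p. The one step to tighten is the balance lemma's union bound, which should range over the $n$ deterministic candidate windows of $(1+\epsilon)n/p$ consecutive elements in sorted order (as the paper does, following Blelloch et al.) rather than over the $p$ buckets themselves, since the buckets are random objects determined by the sample; this changes nothing asymptotically because $k=\Theta(\ln n)$ already absorbs the $\log n$ factor from the larger union bound.
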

\begin{proof}
\paragrf{\Obc{}: }
By choosing an appropriate oversampling 
ratio $k > \frac{2 (c+1)}{(1+\epsilon)} \ln n$, where $c \geq 1$
and $0 < \epsilon < 1$ are some small constants, we can prove
that the number
of elements on each processor after re-distribution is no more 
than $(1 + \epsilon) n / p$ with probability $1 - n^{-c}$, 
i.e. with high probability.
The following proof adapts mostly from that of Theorem B.4. of 
\cite{BlellochLeMa98}. 
If we look at any particular element $i$
and its distance $I$ to the next pivot in sorted order. 
If $I \geq (1 + \epsilon) n/p$ elements, where $0 < \epsilon < 1$ 
is some small constant, there must be fewer than $k$ samples 
selected from these $(1 + \epsilon) n/p$ elements in sorted 
order. That is, $\func{Pr}[I \geq (1 + \epsilon) n/p] \leq 
\func{Pr}[Y_I < k]$, where \func{Pr} denotes the probability
of some event and $Y_I$ denotes the number of samples 
picked from these $(1 + \epsilon) n/p$ elements. Since the 
algorithm samples 
uniformly at random, each element has the same probability of 
$(kp/n)$ to be chosen. By the lower-tail Chernoff bound, 
$\func{Pr}[Y_I < k] = \func{Pr}[Y_I \leq (1 - \delta) \mu] \leq 
e^{-\mu \delta^2/2}$, where $\mu = (1 + \epsilon) k$ is the 
expected number of samples from $(1 + \epsilon) n/p$ elements 
and $\delta \in (\epsilon / (\epsilon + 1), 1)$ is a small 
variable to make the first equation of 
$\func{Pr}[Y_I < k] = \func{Pr}[Y_I \leq (1 - \delta) \mu]$ holds. 
Since this is the upper
bound for any single element to be within a balanced chunk. For
all elements to be within a balanced chunk, the probability is 
then no more than $n e^{-\mu \delta^2/2}$. To have a high 
probability bound, we make $n e^{-\mu \delta^2/2} \leq n^{-c}$, 
where $c \geq 1$ is some constant. Solving the inequality, we 
have the oversampling ratio of $k \geq 2 (c+1) \ln n / 
((1 + \epsilon) \delta^2) > \frac{2 (c+1)}{(1 + \epsilon)} \ln n$. 
Since $c \geq 1$ and $0 < \epsilon < 1$ are some small constants,
we conclude that $k \in O(\ln n)$.

The overall work of this algorithm sums up to the optimal
$O(kp \log kp) + O(n \log p) + O(p^2) + O(n) + O(n \log (n / p)) 
= O(n \log n)$, where $k = O(\log n)$ and assuming 
$p \in O(\sqrt{n})$. 
In the equation, $O(kp \log kp)$ is the work for sorting samples, 
$O(n \log p)$ is the work of using $(p-1)$ pivots to 
partition the array, $O(p^2)$ is the work of prefix sum on
$[N]$, $O(n)$ is the work for redistribution, and 
$O(n \log (n / p))$ is the overall work of final sequential 
sorting on every processors.

The time complexity is $O(kp \log kp) + O(n/p \log p) + O(p) + 
O((1 + \epsilon) n / p \cdot \log n) = 
O((1 + \epsilon) n/p \log n)$ for any arbitrarily small 
constant $0 < \epsilon < 1$ with high probability.
The $O(p^2)$ temporary space is used for storing matrix $[N]$
and computing the prefix sums. 

\paragrf{\Occ{}: }
\begin{enumerate}
    \item The parallel cache complexity of selecting pivots and 
        using the pivots to partition each sub-array of $n/p$ 
        elements into
        $p$ chunks is $O((kp / L) \log_{M} (kp) + (n / (pL)) 
        \log_{Z} (p))$ along the critical path and 
        $O((kp / L) \log_{M} (kp) + (n / L) \log_{Z} (p))$ 
        in summation.
        
    \item The parallel cache complexity of prefix sum and 
        redistribution is $O(n/(pL) + p)$ along the crticial 
        path and $O(n/L) + O(p)$ in summation.

    \item The parallel cache complexity of the final sequential
        sorting on each processor is 
        $O((n/pL) \log_{Z} (n/p))$ along the critical
        path and $O((n/L) \log_{M} (n/p))$ in summation.
\end{enumerate}
Summing up over all above overheads, we have the final parallel
cache complexity of 
$\qinf = O((n/(pL)) \log_{Z} (n/p))$ along the critical 
path and $\qsum = O((n/L) \log_{Z} (n/p))$ in summation
, assuming $p \in O(\sqrt{n}/\ln n)$.
\end{proof}

Note that the overall parallel cache complexity ($\qsum$) of 
\ourAlg{Sort}{} is actually smaller than the best sequential
cache bound of \proc{seq-sample-sort} (\lemref{seq-sample-sort}) 
because we have $p$ caches in the parallel setting and all 
procedures of sampling, partitioning and sequential sorting after 
re-distribution are concurrent on $p$ caches.

\paragrf{Discussions: }
Our algorithm is a variant of parallel sample sorting
algorithm. Parallel sample sorting algorithm has been 
studied in both PA \cite{BlellochLeMa98}
and PO fashions \cite{BlellochGiSi10}.
Cole and Ramachandran \cite{ColeRa17} developed a 
resource-oblivious, which also belongs to the PO class, sorting
which interleaves the partitioning of a sample sort with merging,
hence has only an $O(\log n \log \log n)$ critical-path length.

There are several key differences of our algorithm from 
the PO algorithm in \cite{BlellochGiSi10}.
Firstly, we use $(p-1)$ pivots instead of $O(\sqrt{n})$ pivots. 
Secondly, we call the sequential sample sort 
(\lemref{seq-sample-sort}) to sort 
on each processor after re-distribution, rather than 
a recursive low-depth one.
As a consequence, all PO algorithms \cite{ColeRa17, BlellochGiSi10}
incur more cache misses than the best sequential cache bound, 
though they all have a 
poly-logarithmic, i.e. low-depth critical-path length.
By contrast, our \ourAlg{Sort} incurs less.
As we can see from the experimental data in \secref{expr}, our 
algorithm does outperform the PO counterpart implemented
in PBBS \cite{ShunBlFi12} significantly.

The main difference of our algorithm from the PA
version in \cite{BlellochLeMa98} is that the early 
distributed-memory version calls a standard sequential radix sort 
after re-distribution for an empirical efficiency, while we call 
the sequential sample sort (\lemref{seq-sample-sort}) for an 
emphasis on \occ{}.
Putze et al. \cite{PutzeSaSi07}'s MCSTL utilizes atomic operations
for an in-place parallel quicksort with dynamic load-balance.

\secput{expr}{Preliminary Experimental Results}

We implement our \ouralgs{} and compare them 
on a $72$-core machine and a $24$-core machine
(\tabref{machine-spec}). 

\begin{table}[!ht]
\caption{Experimental Machines}
\label{tab:machine-spec}
\begin{tabular}{c|cc}
\toprule
Name & $72$-core machine & $24$-core machine \\
\midrule
OS   & CentOS 7.1 x86\_64 & CentOS 7 x86\_64\\
\hdashline
Compiler & ICC 15.0.2 & ICC 19.0.3 \\
\hdashline
\multirow{2}{*}{CPU}  & Intel Xeon & Intel Xeon\\
         & E7-8890 v3 & E5-2670 v3\\
\hdashline
Clock Freq & 2.50 GHz & 2.30 GHz\\
\hdashline
\# sockets & 4 & 2 \\
\hdashline
\# cores / socket & 18 & 12\\
\hdashline
Dual Precision & \multirow{2}{*}{16} & \multirow{2}{*}{16} \\
FLOPs / cycle & & \\
\hdashline
L1 dcache / core & 32 KB & 32 KB \\
\hdashline
L2 cache / core & 256 KB & 256 KB \\
\hdashline
L3 cache (shared) & 45 MB & 30 MB \\
\hdashline
memory & 128 GB & 132 GB \\
\bottomrule
\end{tabular}
\end{table}

\paragrf{Overview of Performance Comparison: }
Since the focus of this paper is to provide a new partitioning
and scheduling method of cache-oblivious algorithm, we request
that all algorithms of the same problem call the same 
kernel function(s) to compute sequentially base cases. 
For example, when comparing \ourAlg{MM}{} with
Intel MKL or PO counterpart,
we call MKL's sequential \func{dgemm} and
\func{daxpy} subroutines for base-case matrix
multiplications and additions, respectively
\footnote{Intel MKL actually does not have any
    subroutine for matrix addition and \func{daxpy} is
    for vector addition. So we call \func{daxpy} multiple
    times for our purpose.
}.
By this way, the only difference between peer 
algorithms is how they partition and schedule tasks.
We include all partitioning and scheduling overheads in final 
running time.
To avoid averaging noise, we measure ``running time'' as 
a $\min$ of at least three independent runs.
We calculate speedup by ``$(\text{running\_time}_{\text{peer alg.}} 
/ \text{running\_time}_{\text{\algName}} - 1) \times 100\%$''.

\subsecput{expr-mm}{MM}
\begin{table}[!h]
\centering
\begin{tabular}{cccc}
\toprule
$\id{Rmax} / \id{Rpeak}$ & PACO & MKL & CO2 \\
\midrule
Mean & $82.6\%$ & $75.1\%$ & $37.8\%$ \\
Median & $84.0\%$ & $78.4\%$ & $39.3\%$ \\
\bottomrule
\end{tabular}
\caption{$\id{Rmax}/\id{Rpeak}$ of MM algorithms. 
``CO2'' is the PO depth-$n$ MM algorithm based on $2$-way
divide-and-conquer with a base-case size of $64$
\cite{FrigoSt09}. ``Mean'' and ``Median'' is the mean and
median $\id{Rmax} / \id{Rpeak}$ of all data.}
\label{tab:24c-mm-rmax-rpeak}
\end{table}

\begin{figure*}[!ht]
    \begin{subfigure}[b]{0.47 \linewidth}
    \centering
    \includegraphics[width = \textwidth]{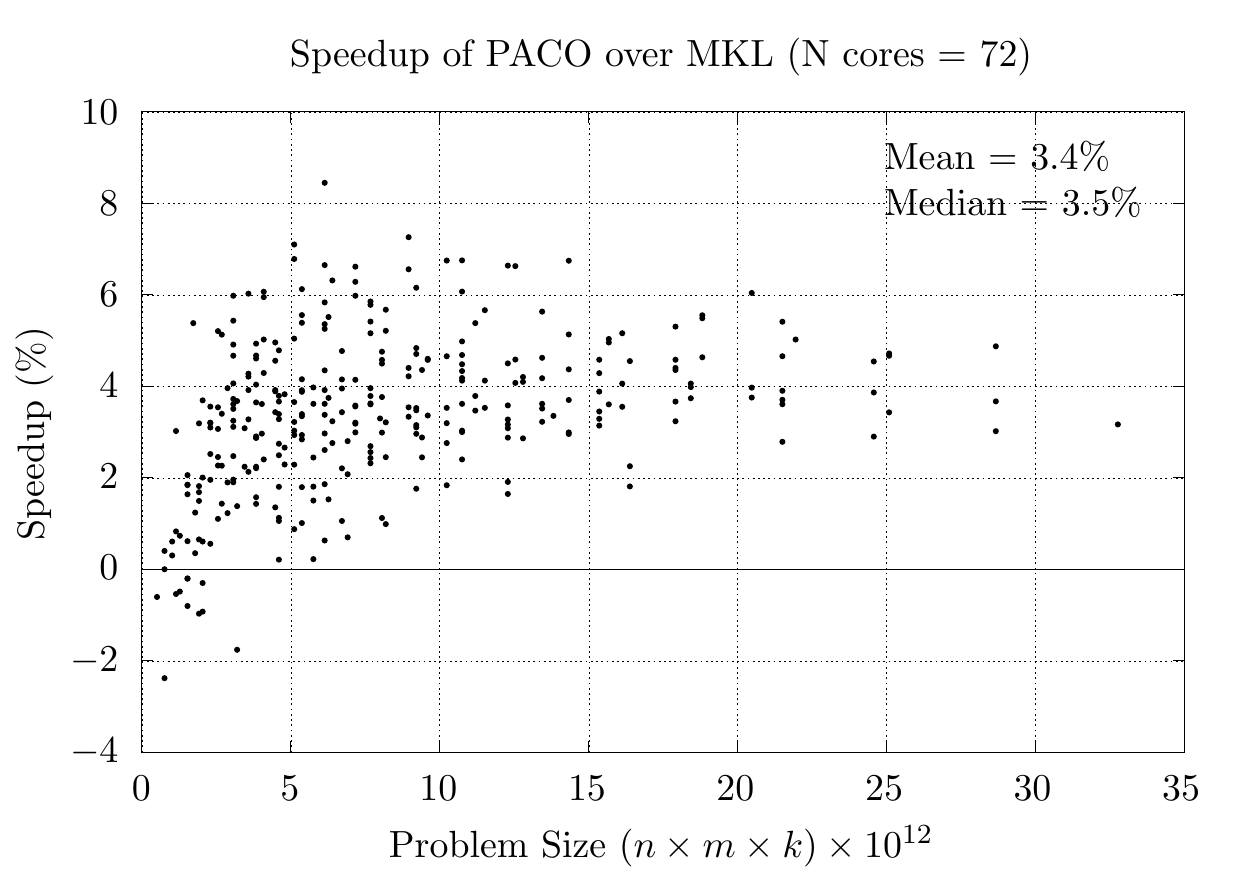}
    \caption{\ourAlg{MM-1-Piece}'s speedup over MKL's \func{dgemm}}
    \label{fig:72c-mm-distri-paco-mkl}
    \end{subfigure}
    \begin{subfigure}[b]{0.47 \linewidth}
    \centering
    \includegraphics[width = \textwidth]{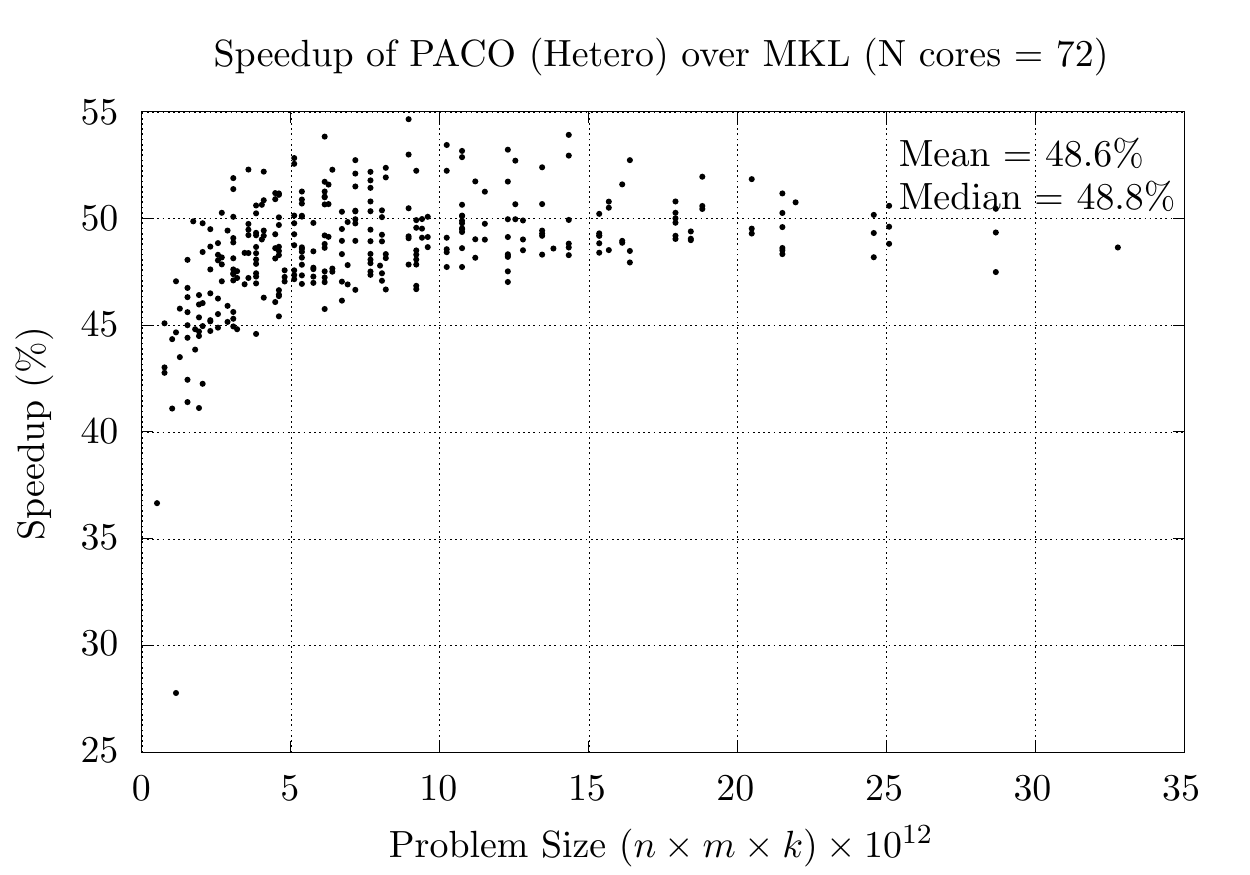}
    \caption{\ourAlg{Hetero-MM}'s speedup over MKL's \func{dgemm}}
    \label{fig:72c-mm-distri-paco-h-mkl}
    \end{subfigure}
    \caption{Speedup of \ourAlg{MM-1-Piece}{} over Intel MKL's \func{dgemm} on $72$-core machine.}
    \label{fig:72c-mm-perf}
\end{figure*}

\begin{figure*}[!ht]
    \begin{subfigure}[b]{0.47 \linewidth}
    \centering
    \includegraphics[width = \textwidth]{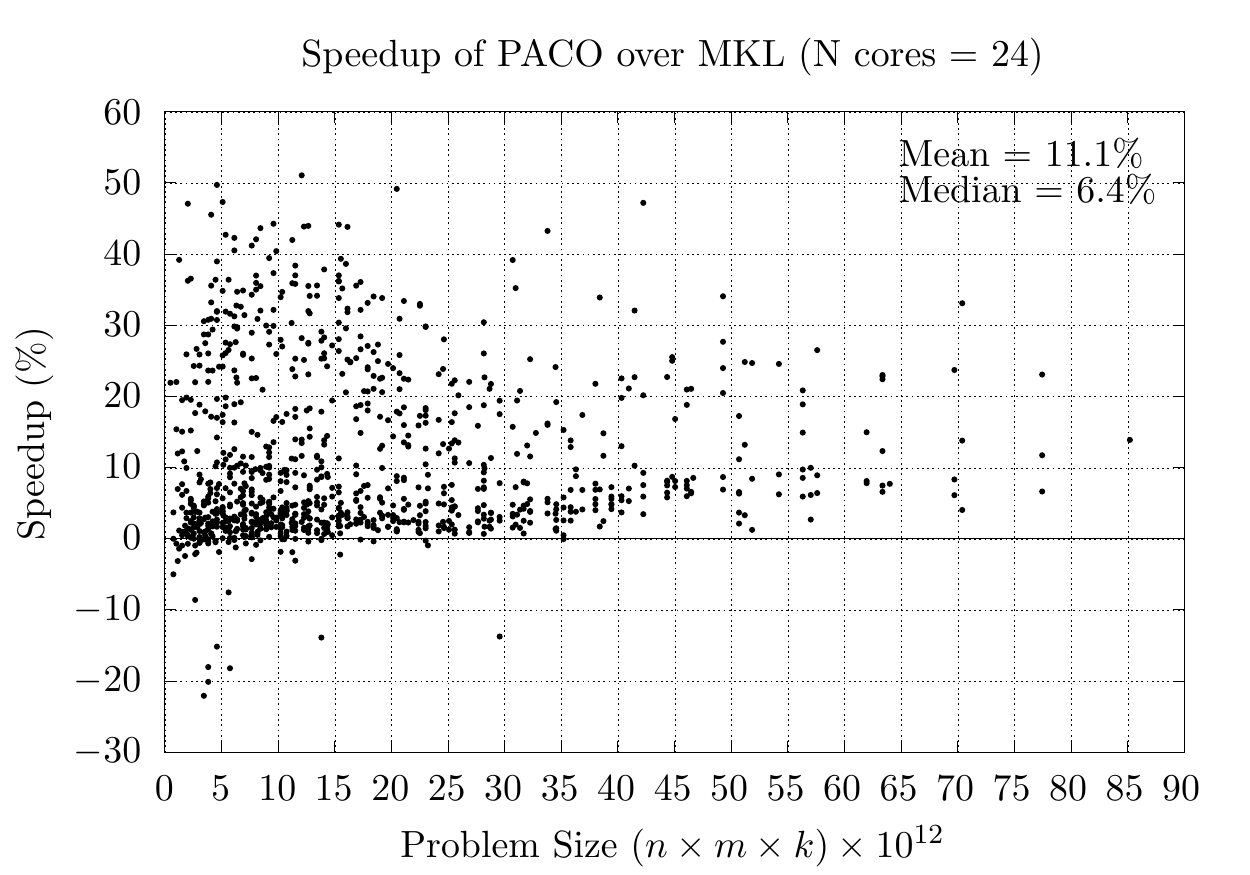}
    \caption{\ourAlg{MM-1-Piece}'s speedup over Intel MKL's \func{dgemm}}
    \label{fig:24c-mm-distri-paco-mkl}
    \end{subfigure}
    \begin{subfigure}[b]{0.47 \linewidth}
    \centering
    \includegraphics[width = \textwidth]{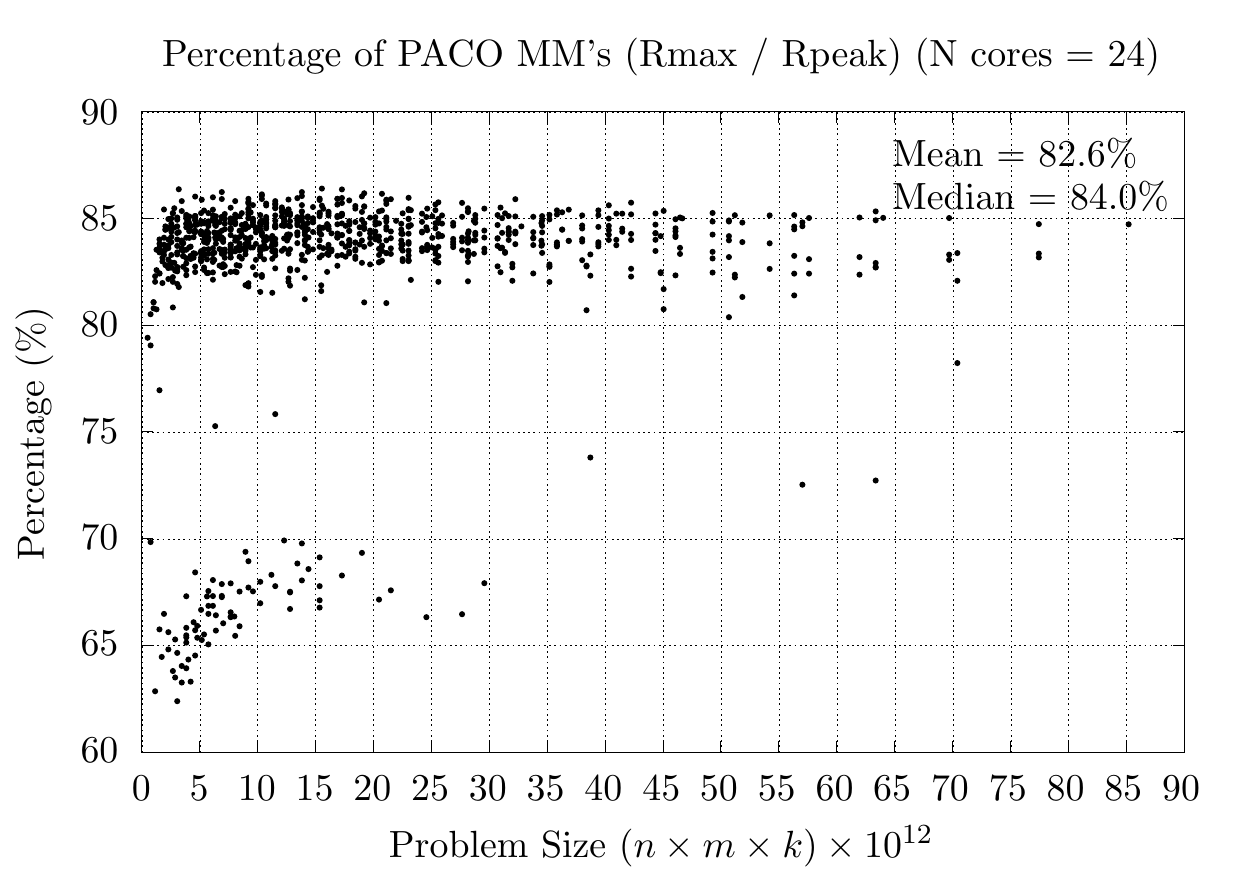}
    \caption{\ourAlg{MM-1-Piece}'s $\id{Rmax}/\id{Rpeak}$}
    \label{fig:24c-mm-flops-paco}
    \end{subfigure}
    \caption{Performance of \ourAlg{MM-1-Piece}{} on $24$-core machine.}
    \label{fig:24c-mm-paco-mkl}
\end{figure*}

\begin{figure*}[!ht]
    \begin{subfigure}[b]{0.47 \linewidth}
    \centering
    \includegraphics[width = \textwidth]{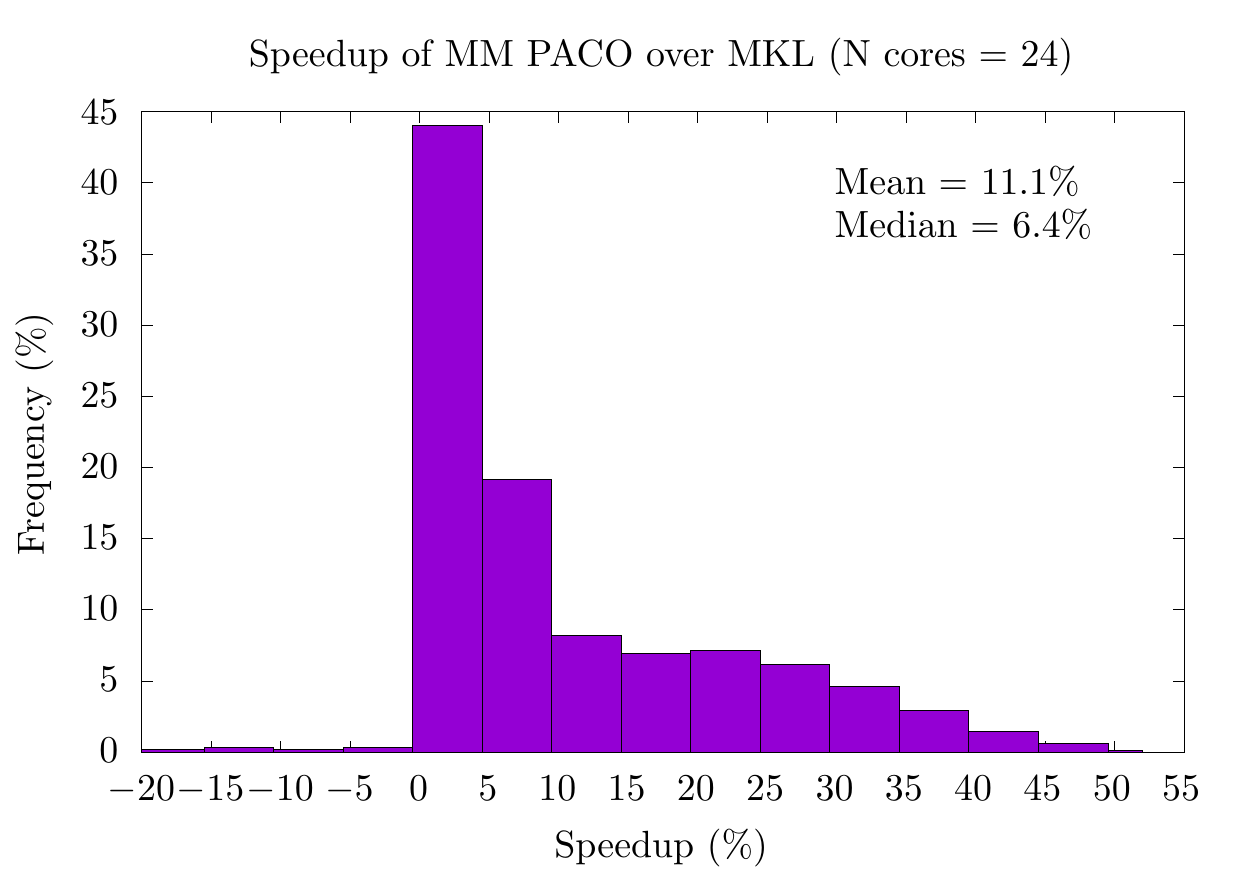}
    \caption{\ourAlg{MM-1-Piece}'s speedup over Intel MKL's \func{dgemm}}
    \label{fig:24c-mm-freq-paco-mkl}
    \end{subfigure}
    \begin{subfigure}[b]{0.47 \linewidth}
    \centering
    \includegraphics[width = \textwidth]{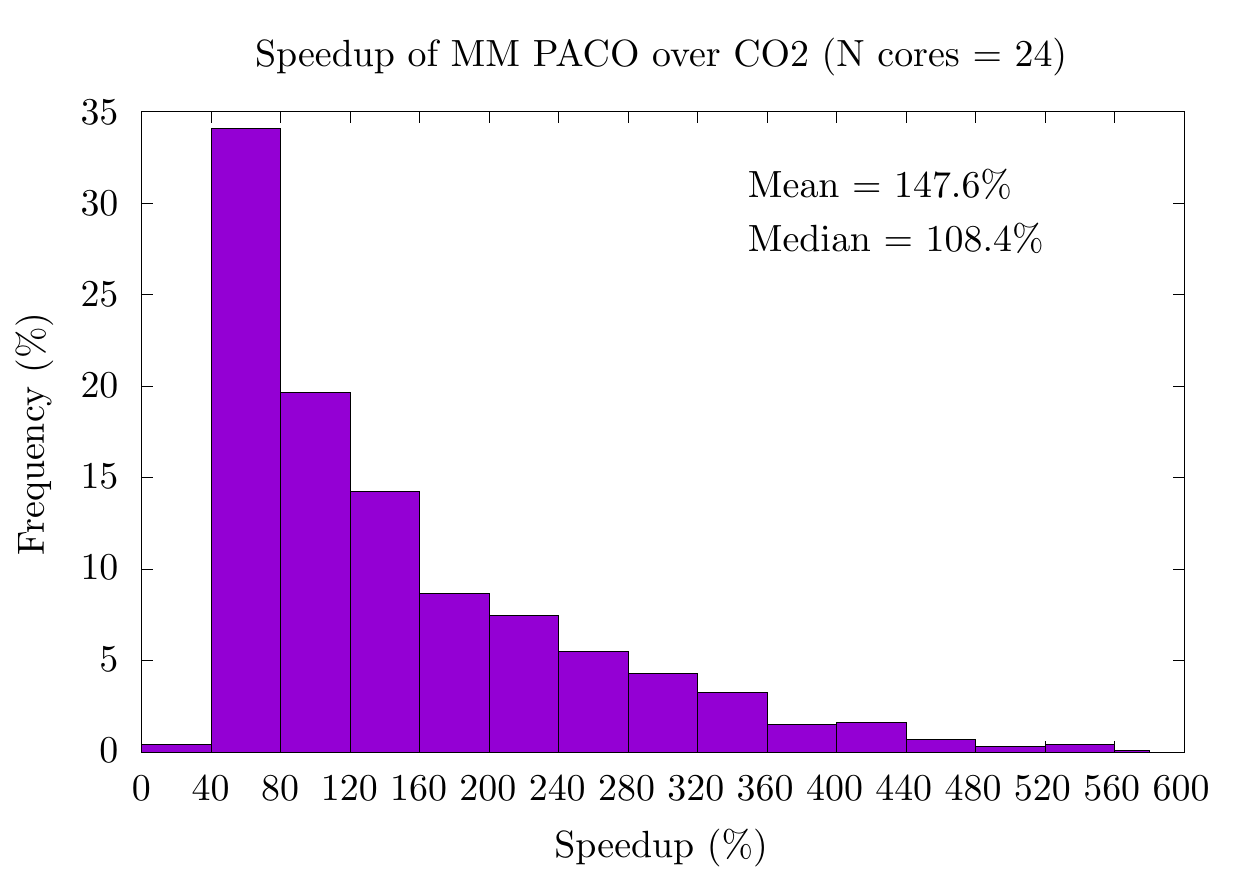}
    \caption{\ourAlg{MM-1-Piece}'s speedup over PO counterpart}
    \label{fig:24c-mm-freq-paco-co2}
    \end{subfigure}
    \caption{Performance of \ourAlg{MM-1-Piece}{} on the 
        $24$-core machine.
        ``CO2'' is the PO depth-$n$ MM algorithm based on $2$-way 
        divide-and-conquer with a base-case size of $64$ 
        \cite{FrigoSt09}.}
    \label{fig:24c-mm-more-perf}
\end{figure*}

We firstly compare \ourAlg{MM-1-Piece}{} (\corref{mm-1-piece})
with Intel MKL's parallel \func{dgemm} on the $72$-core machine.
\afigref{72c-mm-distri-paco-mkl} shows speedup distribution
along problem sizes. Problem size is calculated as
$n \times m \times k$ for an $n$-by-$k$ matrix multiplying
an $k$-by-$m$ matrix, where $n, m, k$ iterate independently from
$8,000$ to $44,000$ with a step size $4,000$.
So there are multiple points of the same $x$-value.
From the figures, though \ourAlg{MM-1-Piece}{} outperforms MKL in
majority of cases, the mean and median of speedup
is just $3.4\%$ and $3.5\%$, respectively.
\afigref{machine-spec} shows that this $72$-core machine has
$4$ sockets, each of which has $18$ cores.
Profiling shows that the $18$ cores on $0$-th 
socket are actually $3$ times faster than the other $54$ cores
on other $3$ sockets, though all these cores have the same clock
frequency and cache parameters.
By $3$ times faster, we mean that the $18$ cores on $0$-th socket
takes only $1/3$ of time of other $54$ cores when we compute
same-sized MM sequentially on every core.
Because the focus of this paper is algorithm, rather than 
systems or computer architecture, instead of figuring out
the reason of machine's heterogeneity, we simply switch to 
a heterogeneous version with the new results 
shown in \figref{72c-mm-distri-paco-h-mkl}.
Now we can see that the mean and median of speedup ratio
raises to $48.6\%$ and $48.8\%$, respectively.
To reduce the overheads of reduction of intermediate results, our
heterogeneous MM is slightly different from \ourAlg{Hetero-MM}{}
in \secref{ext-hetero}, but is similar to the
rectangular partitioning by Nagamochi and Abe \cite{NagamochiAb07}.
The algorithm structure is similar to \ourAlg{MM-1-Piece}{}
and has following changes.
We view the recursive divide-and-assign procedure as a binary
tree, where each leaf stands for a processor's throughput and
each internal node stands for the summation of its left and
right child's throughput.
We divide the initial cuboid starting from the root of tree
by a recursive procedure until each derived cuboid reaches a
leaf.
At each internal node, we cut a cuboid on its longest dimension
by the ratio of the node's left and right child's throughput.
By the change, each processor will get only one cuboid
instead of a sequence.

\afigref{24c-mm-distri-paco-mkl} shows the performance comparison 
of \ourAlg{MM-1-Piece}{} with MKL's parallel \func{dgemm} on the 
$24$-core system, with a mean speedup of $11.1\%$ and 
median of $6.4\%$.
Problem size is calculated as
$n \times m \times k$ for an $n$-by-$k$ matrix multiplying
an $k$-by-$m$ matrix, where $n, m, k$ iterate independently from
$8,000$ to $44,000$ with a step size of $4,000$.
So there are multiple points of the same $x$-value.
\afigref{24c-mm-more-perf} show the frequencies of 
\ourAlg{MM-1-Piece}{}'s speedup over MKL and PO counterparts. 
``CO2'' in the figure stands for the 
PO depth-$n$ MM algorithm based on $2$-way divide-and-conquer
\cite{FrigoSt09, BlellochChGi08} with a base-case size of $64$
\footnote{A $64$ base-case size means that the algorithm 
stops cutting a dimension when it is less than or equal
$64$ and a cuboid will be a base case when none of its three 
dimensions can be cut.}.
We select this base-case size by several manual trials to give
the CO2 algorithm a reasonably good performance on the machine,
though we do not attempt to make a thorough searching because
tuning is not the focus of this paper. Recent research by 
Leiserson et al. \cite{LeisersonThEm20} actually justifies 
our conclusion by showing
that a well-tuned PO MM algorithm achieves about $40\%$ of 
machine's peak performance.
Actually one concern on the PO approach is that its 
implementation may require to choose a proper base-case size,
i.e. when to stop
partitioning and parallelizing the algorithm, to balance
communication, synchronization and processor utilization.
If a base-case size is too small,
it increases ``slackness'' of algorithm and allows better
processor utilization for a wider range of processor counts,
but at the cost of more deviations from its sequential execution
order \cite{AcarBlBl00, SpoonhowerBlGi09}, hence more
communication and synchronization overheads.
On the other hand, if a
base-case size is too large, a base-case task may not fit in
some upper-level cache(s) of each processor, hence it may
not be cache-efficient, and the load imbalance among
processors may be larger, in other words, some processor may
be under-utilized.
By contrast, our approach does not need to tune.
\afigref{24c-mm-flops-paco} shows the percentages of
theoretical peak performance ($\id{Rmax} / \id{Rpeak}$) 
that \ourAlg{MM-1-Piece}{} has attained. 
\tabref{24c-mm-rmax-rpeak} lists different 
algorithm's mean and median of $\id{Rmax}/\id{Rpeak}$ 
side-by-side.
%
The \id{Rmax} is calculated by 
``$2 \times n \times m \times k / \id{time\_in\_second}$'' 
because we are computing $C = C + A \times B$ so there 
are $nmk$ multiplications and $nmk$ additions.
The \id{Rpeak} is calculated by ``$24 \times (2.3 \cdot 10^9) 
\times 16$'' because this machine has $24$
cores, each of which is $2.3$ GHz, which means $2.3 \cdot 10^9$
cycles per second, and each core can perform $16$ dual precision
floating point operations \footnote{By Fused Multiply Add 
\func{FMA3} instruction} per cycle.

\subsecput{expr-lcs-sort}{LCS and Sorting}

\begin{figure*}[!ht]
    \begin{subfigure}[b]{0.47 \linewidth}
    \centering
    \includegraphics[width = \textwidth]{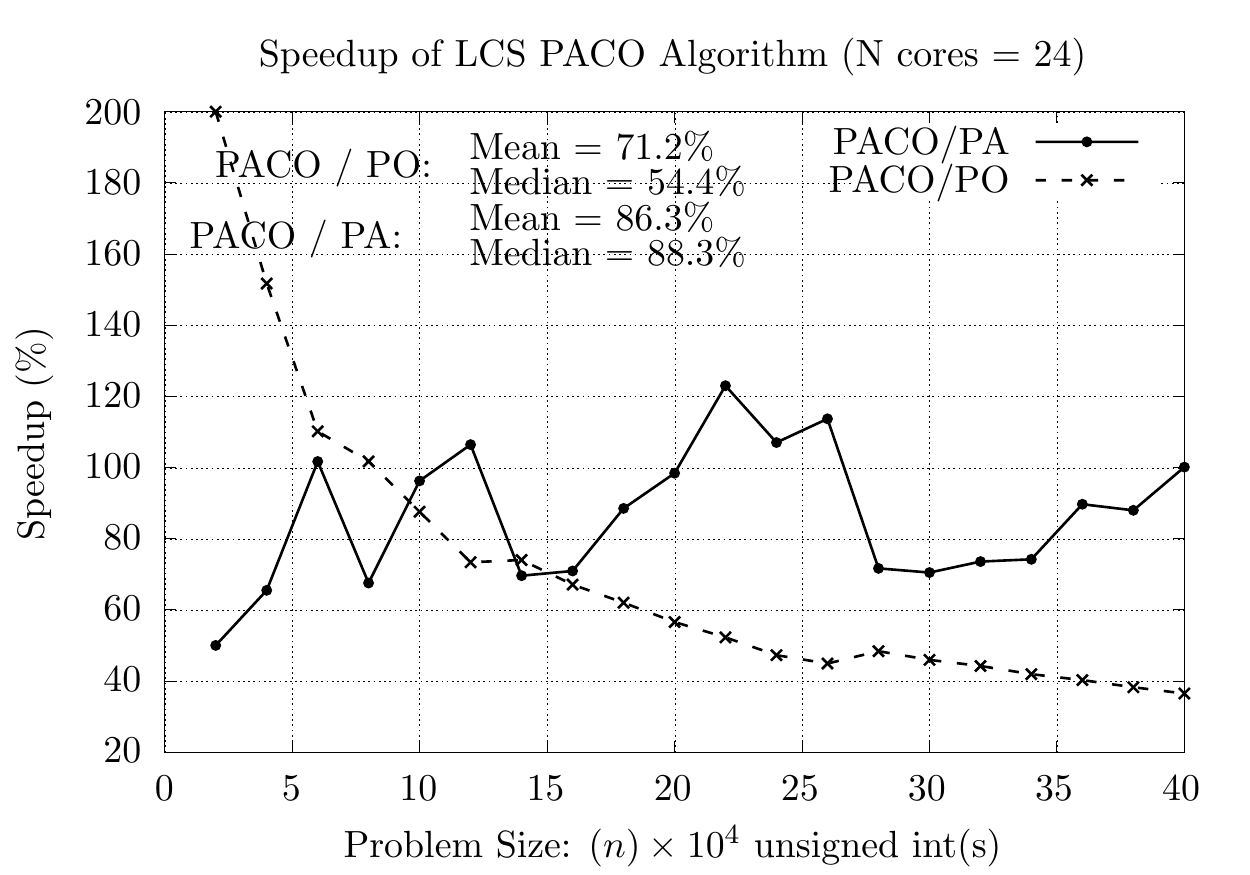}
    \caption{\ourAlg{LCS}'s speedup over PO and PA counterparts.}
    \label{fig:24c-lcs-distri}
    \end{subfigure}
    \begin{subfigure}[b]{0.47 \linewidth}
    \centering
    \includegraphics[width = \textwidth]{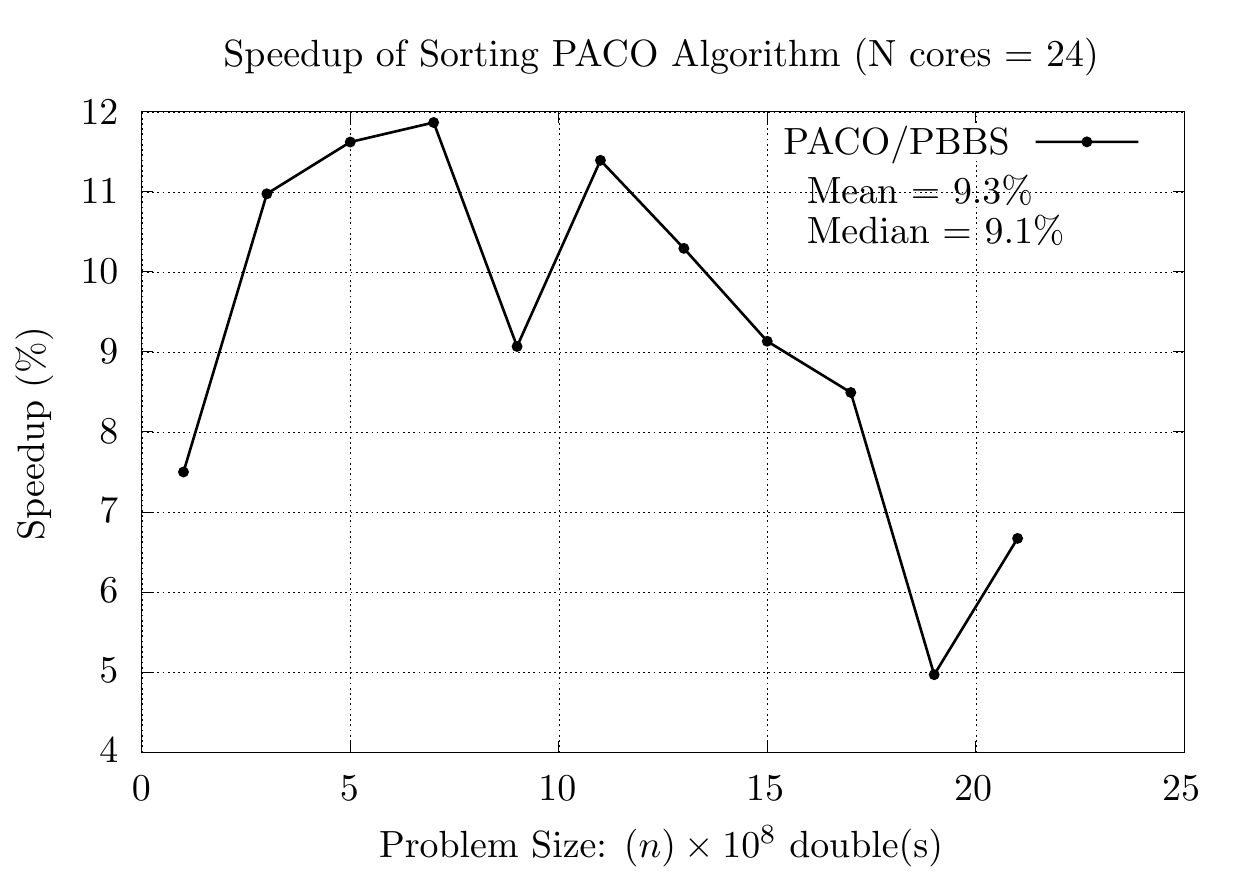}
    \caption{\ourAlg{Sort}'s speedup over PBBS \cite{ShunBlFi12}}
    \label{fig:24c-sort-distri}
    \end{subfigure}
    \caption{Experiments on LCS and Sorting algorithms on the $24$-core machine.}
    \label{fig:24c-lcs-sort}
\end{figure*}

We experiment \ourAlg{LCS}{} and \ourAlg{Sort}{} with 
PO and PA counterparts on
the $24$-core machine as shown in 
\figref{24c-lcs-sort}.
The PO LCS counterpart
is the classic $2$-way divide-and-conquer algorithm
\cite{CormenLeRi09} with a base-case size of $256$ elements
\footnote{A $256$ base-case size means that the algorithm
stops cutting a dimension when it is less than or equal
$16 = \sqrt{256}$ and a square will be a base case when none 
of its two dimensions can be cut.}, while the PA is the $p$-way 
divide-and-conquer by Chowdhury and
Ramachandran \cite{ChowdhuryRa08}.  
We select this base-case size by several manual trials to give
the PO algorithm a reasonably good performance on the machine,
though we do not attempt to make a thorough searching because
tuning is not the focus of this paper.

The mean and median speedups of \ourAlg{LCS}{} over
the PO is $71.2\%$ and $54.4\%$, respectively, and over
the PA is $86.3\%$ and $88.3\%$, respectively.
The PO Sorting counterpart is the low-depth 
sorting algorithm \cite{BlellochGiSi10} implemented in 
the Problem Based Benchmark Suite (PBBS) \cite{ShunBlFi12}.
We directly use the default oversampling ratio and other 
parameters implemented in PBBS without any tuning.
The mean and median speedup of \ourAlg{Sort}{} over it
is $9.3\%$ and $9.1\%$, respectively.

\secput{concl}{Concluding Remarks}

\paragrf{More Related Works: }
Andreev and R\"{a}cke \cite{AndreevRa06} partitions a graph into
several equal-sized components while minimizing the capacity of
edges between different components. They did not consider 
minimizing computation and communication along a critical
path so that their solution may not be a \perfectspeedup{}
one.

\paragrf{Conclusions: }
This paper proposes a general \ouralg{} based on the observation
that the maximal speedup attainable on a $p$-processor system
is usually $p$-folds so that excessive parallelism may not be
necessary. Our methodology is to partition computation and 
communication evenly and recursively among $p$ processors
by a pruned BFS traversal of a cache-oblivious
algorithm's divide-and-conquer tree. Each processor
will have balanced computational and communicational overheads,
usually forming in a geometrically decreasing sequence.
We apply the idea
to several important cache-oblivious algorithms, including LCS,
which is Dynamic Programming (DP) with constant dependencies,
1D and GAP, both of which are DP with more-than-constant 
dependencies, classic rectangular MM on a semiring and
Strassen's algorithm, as well as comparison based sorting.
Compared with classic PA counterparts, our algorithms achieve
\perfectspeedup{} on an arbitrary number, even a prime number,
of processors within a certain range. Compared with classic PO
counterparts, our algorithms usually have better communication
complexities.
Our \ourAlg{Strassen-Const-Pieces}{} provides
an almost exact solution to the open question on parallelizing
Strassen's algorithm efficiently and exactly on an arbitrary
number of processors.
%
Our preliminary experimental results confirm the theoretical
predictions.
%
Our methodology may provide a new perspective on 
the fundamental open problem of extending a 
sequential cache-oblivious algorithm to an arbitrary architecture.
We leave an efficient parallelization of \ourAlg{LCS}{}'s 
partitioning phase to future research.

\bibliographystyle{IEEEtran}
\bibliography{papers}


\end{document}